\def\w{{\bf w}}
\def\y{{\bf y}}
\def\v{{\bf v}}
\def\x{{\bf x}}
\def\x{{\mathbf x}}
\def\w{{\bf w}}
\def\v{{\bf v}}
\def\x{{\bf x}}
\def\y{{\bf y}}
\def\z{{\bf z}}
\def\h{{\bf h}}
\def\be{\begin{equation}}
\def\ee{\end{equation}}
\def\ba{\left[\begin{array}}
\def\ea{\end{array}\right]}
\def\w{{\bf w}}
\def\v{{\bf v}}
\def\x{{\bf x}}
\def\y{{\bf y}}
\def\z{{\bf z}}
\def\xtilde{\tilde{\x}}
\def\1{{\bf 1}}
\def\g{{\bf g}}
\def\0{{\bf 0}}
\def\erfinv{\mbox{erfinv}}
\newtheorem{theorem}{Theorem}
\begin{document}

\begin{singlespace}

\title {A problem dependent analysis of SOCP algorithms in noisy compressed sensing
%\footnote{ This work was supported in
%part.}
}
\author{
\textsc{Mihailo Stojnic}
\\
\\
{School of Industrial Engineering}\\
{Purdue University, West Lafayette, IN 47907} \\
{e-mail: {\tt mstojnic@purdue.edu}} }
\date{}
\maketitle

\centerline{{\bf Abstract}} \vspace*{0.1in}

Under-determined systems of linear equations with sparse solutions have been the subject of an extensive research in last several years above all due to results of \cite{CRT,CanRomTao06,DonohoPol}. In this paper we will consider \emph{noisy} under-determined linear systems. In a breakthrough \cite{CanRomTao06} it was established that in \emph{noisy} systems for any linear level of under-determinedness there is a linear sparsity that can be \emph{approximately} recovered through an SOCP (second order cone programming) optimization algorithm so that the approximate solution vector is (in an $\ell_2$-norm sense) guaranteed to be no further from the sparse unknown vector than a constant times the noise. In our recent work \cite{StojnicGenSocp10} we established an alternative framework that can be used for statistical performance analysis of the SOCP algorithms. To demonstrate how the framework works we then showed in \cite{StojnicGenSocp10} how one can use it to precisely characterize the \emph{generic} (worst-case) performance of the SOCP. In this paper we present a different set of results that can be obtained through the framework of \cite{StojnicGenSocp10}. The results will relate to \emph{problem dependent} performance analysis of SOCP's. We will consider specific types of unknown sparse vectors and characterize the SOCP performance when used for recovery of such vectors. We will also show that our theoretical predictions are in a solid agreement with the results one can get through numerical simulations.

\vspace*{0.25in} \noindent {\bf Index Terms: Noisy systems of linear equations; SOCP;
$\ell_1$-optimization; compressed sensing}.

\end{singlespace}

%%%%%%%%%%%%%%%%%%%%%%%%%%%%%%%%%%%%%%%%%%%%%%%%%%%%%%%%%%%%%%%%%
\section{Introduction}
\label{sec:back}
%%%%%%%%%%%%%%%%%%%%%%%%%%%%%%%%%%%%%%%%%%%%%%%%%%%%%%%%%%%%%%%%%

In recent years there has been an enormous interest in studying under-determined systems of linear equations with sparse solutions. With potential applications ranging from high-dimensional geometry, image
reconstruction, single-pixel camera design, decoding of linear
codes, channel estimation in wireless communications, to machine
learning, data-streaming algorithms, DNA micro-arrays,
magneto-encephalography etc. (see, e.g. \cite{ECicm,DDTLSKB,CT,JRimaging,BCDH08,CRchannel,VPH,PVMHjournal,WM08,Olgica,RFPrank,MBPSZ08,RS08,StojnicCSetamBlock09} and references therein) studying these systems seems to be of substantial theoretical/practical importance in a variety of different area. In this paper we study mathematical aspects of under-determined systems and put an emphasis on theoretical analysis of particular algorithms used for solving them.

In its simplest form solving an under-determined system of linear equations amounts to finding a, say, $k$-sparse $\x$ such
that
\begin{equation}
A\x=\y \label{eq:system}
\end{equation}
where $A$ is an $m\times n$ ($m<n$) matrix and $\y$ is
an $m\times 1$ vector (see Figure
\ref{fig:model}; here and in the rest of the paper, under $k$-sparse vector we assume a vector that has at most $k$ nonzero
components). Of course, the assumption will be that such an $\x$ exists. To make writing in the rest of the paper easier, we will assume the
so-called \emph{linear} regime, i.e. we will assume that $k=\beta n$
and that the number of equations is $m=\alpha n$ where
$\alpha$ and $\beta$ are constants independent of $n$ (more
on the non-linear regime, i.e. on the regime when $m$ is larger than
linearly proportional to $k$ can be found in e.g.
\cite{CoMu05,GiStTrVe06,GiStTrVe07}).
\begin{figure}[htb]
%%%%%\begin{minipage}[b]{1.0\linewidth}
\centering
\centerline{\epsfig{figure=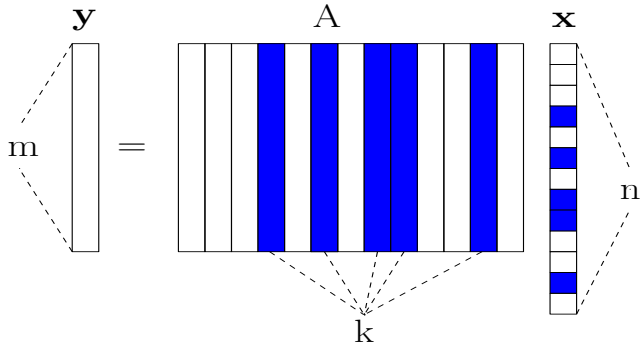,width=9cm,height=4.5cm}}
%%%%%%\end{minipage}
\caption{Model of a linear system; vector $\x$ is $k$-sparse}
\label{fig:model}
\end{figure}

Clearly, if $k\leq \frac{m}{2}$ the solution is unique and can be found through an exhaustive search. However, in the linear regime that we have just assumed above (and will consider throughout the paper) the exhaustive search is clearly of exponential complexity. Instead one can of course design algorithms of much lower complexity while sacrificing on the recovery abilities, i.e. on the recoverable size of the nonzero portion of vector $\x$. Various algorithms have been introduced and analyzed in recent years throughout the literature from those that relate to the parallel design of matrix $A$ and the recovery algorithms (see, e.g. \cite{FHicassp,Tarokh,MaVe05,XHexpander,JXHC08,InRu08}) to those that assume only the design of the recovery algorithms (see, e.g. \cite{JATGomp,JAT,NeVe07,DTDSomp,NT08,DaiMil08,CanRomTao06,DonohoPol}). If one restricts to the algorithms of polynomial complexity and allows for the design of $A$ then the results from \cite{FHicassp,Tarokh,MaVe05} that guarantee recovery of \emph{any} $k$-sparse $\x$ in
(\ref{eq:system}) for any $0<\alpha\leq 1$ and any
$\beta\leq\frac{\alpha}{2}$ are essentially optimal. On the other hand, if one restricts to the algorithms of polynomial complexity and does not allow for the parallel design of $A$ then the results of \cite{CanRomTao06,DonohoPol} established that for any $\alpha>0$ there still exists a $\beta>0$ such that any $k=\beta n$-sparse $\x$ in (\ref{eq:system}) can be recovered via a polynomial time \emph{Basis pursuit} (BP) algorithm.

Since the BP algorithm is fairly closely related to what we will be presenting later in the paper
we will now briefly introduce it (we will often refer to it as the $\ell_1$-optimization concept; a slight modification/adaptation of it will actually be the main topic of this paper). Variations of the standard $\ell_1$-optimization from e.g.
\cite{CWBreweighted,SChretien08,SaZh08} as well as those from \cite{SCY08,FL08,GN03} related to $\ell_q$-optimization, $0<q<1$,
are possible as well; moreover they can all be incorporated in what we will present below. The $\ell_1$-optimization concept suggests that one can maybe find the $k$-sparse $\x$ in
(\ref{eq:system}) by solving the following $\ell_1$-norm minimization problem
\begin{eqnarray}
\mbox{min} & & \|\x\|_{1}\nonumber \\
\mbox{subject to} & & A\x=\y. \label{eq:l1}
\end{eqnarray}
As mentioned above the results from \cite{CanRomTao06,DonohoPol} were instrumental in theoretical characterization of (\ref{eq:l1}) and its popularization in sparse recovery and even more so in generating an unprecedented interest in sparse recovery algorithms. The main reason is of course the quality of the results achieved in \cite{CanRomTao06,DonohoPol}. Namely, \cite{CanRomTao06} established that for any $\alpha>0$ there is a $\beta>0$
such that the solution of (\ref{eq:l1}) is the $k=\beta n$-sparse $\x$ in (\ref{eq:system}). In a statistical and large dimensional context in \cite{DonohoPol} and later in \cite{StojnicCSetam09,StojnicUpper10} for any given value of $\alpha$ the exact value of the maximum possible $\beta$ was determined.

The above sparse recovery scenario is in a sense idealistic. Namely, it assumes that $\y$ in (\ref{eq:l1}) was obtained through (\ref{eq:system}). On other hand in many applications only a \emph{noisy} version of $A\x$ may be available for $\y$ (this is especially so in measuring type of applications) see, e.g. \cite{CanRomTao06,HN,W} (another somewhat related version of ``imperfect" linear systems are the under-determined systems with the so-called approximately sparse solutions; more in this direction can be found in e.g. \cite{CanRomTao06,SXHapp}). When that happens one has the following equivalent to (\ref{eq:system}) (see, Figure \ref{fig:modelnoise})
\begin{equation}
\y=A\x+\v, \label{eq:systemnoise}
\end{equation}
where $\v$ is an $m\times 1$ so-called noise vector (the so-called ideal case presented above is of course a special case of the noisy one given in (\ref{eq:systemnoise})).
\begin{figure}[htb]
%%%%%\begin{minipage}[b]{1.0\linewidth}
\centering
\centerline{\epsfig{figure=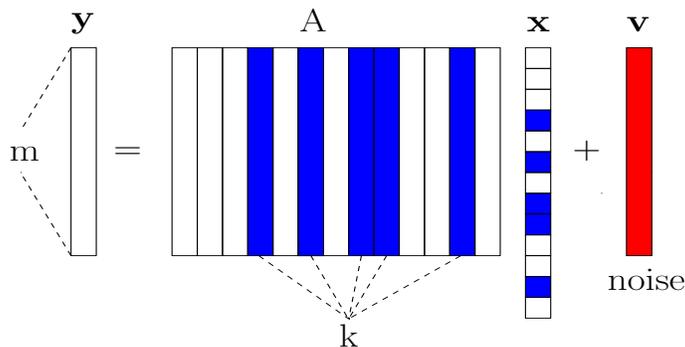,width=9cm,height=4.5cm}}
%%%%%%\end{minipage}
\caption{Model of a linear system; vector $\x$ is $k$-sparse}
\label{fig:modelnoise}
\end{figure}
Finding the $k$-sparse $\x$ in (\ref{eq:systemnoise}) is now incredibly hard. In fact it is pretty much impossible. Basically, one is looking for a $k$-sparse $\x$ such that (\ref{eq:systemnoise}) holds and on top of that $\v$ is unknown. Although the problem is hard there are various heuristics throughout the literature that one can use to solve it approximately. Majority of these heuristics are based on appropriate generalizations of the corresponding algorithms one would use in the noiseless case. Thinking along the same lines as in the noiseless case one can distinguish two scenarios depending on the availability of the freedom to choose/design $A$. If one has the freedom to design $A$ then one can adapt the corresponding noiseless algorithms to the noisy scenario as well (more on this can be found in e.g. \cite{BGIKS}). However, in this paper we mostly focus on the scenario where one has no control over $A$. In such a scenario one can again make a parallel to the noiseless case and look at e.g. CoSAMP algorithm from \cite{NT08} or Subspace pursuit from \cite{DaiMil09}. Essentially, in a statistical context, these algorithms can provably recover a linear sparsity while maintaining the approximation error proportional to the norm-2 of the noise vector which is pretty much a benchmark of what is currently known. These algorithms are in a way perfected noisy generalizations of the so-called \emph{Orthogonal matching pursuit} (OMP) algorithms.

On the other hand in this paper we will focus on generalizations of BP that can handle the noisy case. To introduce a bit or tractability in finding the $k$-sparse $\x$ in (\ref{eq:systemnoise}) one usually assumes certain amount of knowledge about either $\x$ or $\v$. As far as the tractability assumptions on $\v$ are concerned one typically (and possibly fairly reasonably in applications of interest) assumes that $\|\v\|_2$ is bounded (or highly likely to be bounded) from above by a certain known quantity. The following second-order cone programming (SOCP) analogue to (or say noisy generalization of) (\ref{eq:l1}) is one of the approaches that utilizes such an assumption (more on this approach and its variations can be found in e.g. \cite{CanRomTao06})
\begin{eqnarray}
\min_{\x} & & \|\x\|_1\nonumber \\
\mbox{subject to} & & \|\y-A\x\|_2\leq r_{socp} \label{eq:socp}
\end{eqnarray}
where $r_{socp}$ is a quantity such that $\|\v\|_2\leq r_{socp}$ (or $r_{socp}$ is a quantity such that $\|\v\|_2\leq r_{socp}$ is say highly likely). For example, in \cite{CanRomTao06} a statistical context is assumed and based on the statistics of $\v$, $r_{socp}$ was chosen such that $\|\v\|_2\leq r_{socp}$ happens with overwhelming probability (as usual, under overwhelming probability we in this paper assume
a probability that is no more than a number exponentially decaying in $n$ away from $1$). Given that (\ref{eq:socp}) is now among few almost standard choices when it comes to finding an approximation to the $k$-sparse $\x$ in (\ref{eq:systemnoise}), the literature on its properties is vast (see, e.g. \cite{CanRomTao06,DonElaTem06,Tropp06,HN} and references therein). We here briefly mention only what we consider to be the most influential work on this topic in recent years. Namely, in \cite{CanRomTao06} the authors analyzed performance of (\ref{eq:socp}) and showed a result similar in flavor to the one that holds in the ideal - noiseless - case. In a nutshell the following was shown in \cite{CanRomTao06}: let $\x$ be a $\beta n$-sparse vector such that (\ref{eq:systemnoise}) holds and let $\x_{socp}$ be the solution of (\ref{eq:socp}). Then
\begin{equation}
\|\x_{socp}-\x\|_2\leq C r_{socp}\label{eq:CRTsocp}
\end{equation}
where $\beta$ is a constant independent of $n$ and $C$ is a constant independent of $n$ and of course dependent on $\alpha$ and $\beta$. This result in a sense establishes a noisy equivalent to the fact that a linear sparsity can be recovered from an under-determined system of linear equations. In an informal language, it states that a linear sparsity can be \emph{approximately} recovered in polynomial time from a noisy under-determined system with the norm of the recovery error guaranteed to be within a constant multiple of the noise norm (as mentioned above, the same was also established later in \cite{NT08} for CoSAMP and in \cite{DaiMil09} for Subspace pursuit). Establishing such a result is, of course, a feat in its own class, not only because of its technical contribution but even more so because of the amount of interest that it generated in the field.

In our recent work \cite{StojnicGenSocp10} we designed a framework for performance analysis of the SOCP algorithm from (\ref{eq:socp}). We then went further in \cite{StojnicGenSocp10} and showed how the framework practically works through a precise characterization of the \emph{generic} (worst-case) performance of the SOCP from (\ref{eq:socp}). In this paper we will again focus on the general framework developed in \cite{StojnicGenSocp10}. This time though we will focus on the \emph{problem dependent} performance analysis of the SOCP. In other words, we will consider specific types of unknown sparse vectors $\x$ in (\ref{eq:systemnoise}) and provide a performance analysis of the SOCP when applied for recovery of such vectors.

Before going into the details of the SOCP approach we should also mention that the SOCP algorithms are of course not the only possible generalizations (adaptations) of $\ell_1$ optimization to the noisy case. For example, LASSO algorithms (more on these algorithms can be found in e.g. \cite{Tibsh96,CheDon95,CheDonSau98,BunTsyWeg07,vandeGeer08,MeinYu09} as well as in recent developments \cite{DonMalMon10,BayMon10lasso,StojnicGenLasso10}) are a very successful alternative. In our recent work \cite{StojnicGenLasso10} we established a nice connection between some of the algorithms from the LASSO group and certain SOCP algorithms and showed that with respect to certain performance measure they could be equivalent. Besides the LASSO algorithms the so-called Dantzig selector introduced in \cite{CanTao07} is another alternative to the SOCP algorithms that is often encountered in the literature (more on the Dantzig selector as well as on its relation to the LASSO or SOCP algorithms can be found in e.g. \cite{MeiRocYu07,BicRitTsy09,FriSau07,EfrHatTib07,AsiRom10,JamRadLv09,Koltch09}). Depending on the scenarios they are to be applied in each of these algorithms can have certain advantages/disadvantages over the other ones. A simple (general) characterization of these advantages/disadvanatges does not seem easy to us. In a rather informal language one could say that LASSO and SOCP are expected to perform better (i.e. to provide a solution vector that is under various metrics closer to the unknown one) in a larger set of different scenarios but as quadratic programs could be slower than the Dantzig selector which happens to be a linear program. Of course, whether LASSO or SOCP algorithms are indeed going to be slower or not or how much larger would be a set of different scenarios where they are expected to perform better are interesting/important questions. While clearly of interest answering these questions certainly goes way beyond the scope of the present paper and we will not pursue it here any further.

Before we proceed with the exposition we briefly summarize the organization of the rest of the paper. In Section
\ref{sec:unsigned}, we recall on a set of powerful results presented in \cite{StojnicGenSocp10} and discuss further how they can be utilized to analyze the problem dependent performance of the SOCP from (\ref{eq:socp}). The results that we will present in Section \ref{sec:unsigned} will relate to the so-called general sparse signals $\xtilde$. In Section \ref{sec:signed} we will show how these results from Section \ref{sec:unsigned} that relate to the general sparse vectors $\xtilde$ can be specialized further so they cover the so-called signed vectors $\x$. Finally, in Section \ref{sec:discuss} we will discuss obtained results.

\section{SOCP's problem dependent performance -- general $\x$} \label{sec:unsigned}
%%%%%%%%%%%%%%%%%%%%%%%%%%%%%%%%%%%%%%%%%%%%%%%%%%%%%%%%%%%%%%%%%%%%%%%%%%%%%%%%%%

In this section we first recall on the basic properties of the statistical SOCP's performance analysis framework developed in \cite{StojnicGenSocp10}. We will then show how the framework can be used to characterize the SOCP from (\ref{eq:socp}) in certain situations when the SOCP's performance  substantially depends on $\xtilde$.

%%%%%%%%%%%%%%%%%%%%%%%%%%%%%%%%%%%%%%%%%%%%%%%%%%%%%%%%%%%%%%%%%%%%%%%%%%%%%%%%%%
\subsection{Basic properties of the SOCP's framework} \label{sec:unsignedbasic}
%%%%%%%%%%%%%%%%%%%%%%%%%%%%%%%%%%%%%%%%%%%%%%%%%%%%%%%%%%%%%%%%%%%%%%%%%%%%%%%%%%

Before proceeding further we will now first state major assumptions that will be in place throughout the paper (clearly, since we will be utilizing the framework of  \cite{StojnicGenSocp10} a majority of these assumptions was already present in \cite{StojnicGenSocp10}). Namely, as mentioned above, we will consider noisy under-determined systems of linear equations. The systems will be defined by a random matrix $A$ where the elements of $A$ are i.i.d. standard normal random variables. Also, we will assume that the elements of $\v$ are i.i.d. Gaussian random variables with zero mean and variance $\sigma$. $\xtilde$ will be assumed to be the original $\x$ in (\ref{eq:systemnoise}) that we are trying to recover (or a bit more precisely approximately recover). We will assume that $\xtilde$ is \emph{any} $k$-sparse vector with a given fixed location of its nonzero elements and a given fixed combination of their signs. Due to assumed statistics the analysis (and the performance of (\ref{eq:socp})) will clearly be irrelevant with respect to what particular location and what particular combination of signs of nonzero elements are chosen. We will therefore for the simplicity of the exposition and without loss of generality assume that the components $\x_{1},\x_{2},\dots,\x_{n-k}$ of $\x$ are equal to zero and the components $\x_{n-k+1},\x_{n-k+2},\dots,\x_n$ of $\x$ are greater than or equal to zero. Moreover, throughout the paper we will call such an $\x$ $k$-sparse and positive. In a more formal way we will set
\begin{eqnarray}
& & \xtilde_1=\xtilde_2 =  \dots=\xtilde_{n-k}=0\nonumber \\
& & \xtilde_{n-k+1}\geq 0,  \xtilde_{n-k+1}\geq 0, \dots, \xtilde_{n}\geq 0.\label{eq:xtildedef}
\end{eqnarray}
We also now take the opportunity to point out a rather obvious detail. Namely, the fact that $\xtilde$ is positive is assumed for the purpose of the analysis. However, this fact is not known \emph{a priori} and is not available to the solving algorithm (this will of course change in Section \ref{sec:signed}).
%The assumptions that we have stated so far are matching those considered in \cite{StojnicGenSocp10}.
%Once we establish the framework it will be clear that it can be used to characterize many of the SOCP features. We will defer these details to a collection of forthcoming papers. However in this paper we will demonstrate a small application that relates to a classical question of determining the approximation error that (\ref{eq:socp}) makes when used to recover \emph{any} $k$-sparse $\x$ that satisfies (\ref{eq:systemnoise}) and is from a set of $\x$'s with a given fixed location of nonzero elements and a given fixed combination of their signs.

Before proceeding further we will introduce a few definitions that will be useful in formalizing/presenting our results as well as in conducting the entire analysis. Following what was done in \cite{StojnicGenSocp10} let us define the optimal value of a slightly changed objective of (\ref{eq:socp}) in the following way
\begin{eqnarray}
f_{obj}(\sigma,\xtilde,A,\v,r_{socp})=\min_{\x} & & \|\x\|_1-\|\xtilde\|_1\nonumber \\
\mbox{subject to} & & \|\y-A\x\|_2\leq r_{socp}. \label{eq:objlassol1}
\end{eqnarray}
To make writing easier we will instead of $f_{obj}(\sigma,\xtilde,A,\v,r_{socp})$ write just $f_{obj}$. A similar convention will be applied to few other functions  throughout the paper. On many occasions, though, (especially where we deem it as substantial to the presentation) we will also keep all (or a majority of) arguments of the corresponding functions.

Also let $\x_{socp}$ be the solution of (\ref{eq:socp}) (or the solution of (\ref{eq:objlassol1})) and let $\w_{socp}\in R^n$  be the so-called error vector defined in the following way
\begin{equation}
\w_{socp}=\x_{socp}-\xtilde.\label{eq:xhatdef}
\end{equation}
Our main goal in this paper will then boil down to various characterizations of $\w_{socp}$ and $f_{obj}$. Throughout the paper we will heavily rely on the following theorem from \cite{StojnicGenSocp10} that provides a general characterization of $\w_{socp}$ and $f_{obj}$.

\begin{theorem}(\cite{StojnicGenSocp10} --- SOCP's performance characterization)
Let $\v$ be an $n\times 1$ vector of i.i.d. zero-mean variance $\sigma^2$ Gaussian random variables and let $A$ be an $m\times n$ matrix of i.i.d. standard normal random variables. Further, let $\g$ and $\h$ be $m\times 1$ and $n\times 1$ vectors of i.i.d. standard normals, respectively and let $\z$ be $n\times 1$ vector of all ones. Consider a $k$-sparse $\xtilde$ defined in (\ref{eq:xtildedef}) and a $\y$ defined in (\ref{eq:systemnoise}) for $\x=\xtilde$. Let the solution of (\ref{eq:socp}) be $\x_{socp}$ and let the so-called error vector of the SOCP from (\ref{eq:socp}) be $\w_{socp}=\x_{socp}-\xtilde$. Let $r_{socp}$ in (\ref{eq:socp}) be a positive scalar. Let $n$ be large and let constants $\alpha=\frac{m}{n}$ and $\beta_w=\frac{k}{n}$ be below the following so-called \emph{fundamental} characterization of $\ell_1$ optimization
\begin{equation}
(1-\beta_w)\frac{\sqrt{\frac{2}{\pi}}e^{-(\erfinv(\frac{1-\alpha_w}{1-\beta_w}))^2}}{\alpha_w}-\sqrt{2}\erfinv (\frac{1-\alpha_w}{1-\beta_w})=0.
\label{eq:fundl1}
\end{equation}
Consider the following optimization problem:
\begin{eqnarray}
\xi_{prim}(\sigma,\g,\h,\xtilde,r_{socp})=\max_{\nu,\lambda} & & \sigma\sqrt{\|\g\|_2^2\nu^2-\|\nu\h+\z-\lambda\|_2^2} -\sum_{i=n-k+1}^{n}\lambda_i\xtilde_i-\nu r_{socp}\nonumber \\
\mbox{subject to}
& & \nu\geq 0\nonumber \\
& & 0 \leq\lambda_i\leq 2,1\leq i\leq n.\label{eq:mainlasso1}
\end{eqnarray}
Let $\hat{\nu}$ and $\hat{\lambda}$ be the solution of (\ref{eq:mainlasso1}). Set
\begin{equation}
\|\hat{\w}\|_2=\sigma\frac{\|\hat{\nu}\h+\z-\hat{\lambda}\|_2}
{\sqrt{\|\g\|_2^2\hat{\nu}^2-\|\hat{\nu}\h+\z-\hat{\lambda}\|_2^2}}.\label{eq:mainlasso2}
\end{equation}
Then:
\begin{multline}
P(\|\xtilde+\w_{socp}\|_1-\|\xtilde\|_1
\in (E\xi_{prim}(\sigma,\g,\h,\xtilde,r_{socp}))-\epsilon_1^{(socp)}|E\xi_{prim}(\sigma,\g,\h,\xtilde,r_{socp}))|,\\
E\xi_{prim}(\sigma,\g,\h,\xtilde,r_{socp}))+\epsilon_1^{(socp)}|E\xi_{prim}(\sigma,\g,\h,\xtilde,r_{socp}))|)=1-e^{-\epsilon_2^{(socp)}n}\label{eq:mainlasso3}
\end{multline}
and
\begin{equation}
P((1-\epsilon_1^{(socp)})E\|\hat{\w}\|_2\leq \|\w_{socp}\|_2
\leq (1+\epsilon_1^{(socp)})E\|\hat{\w}\|_2) =1-e^{-\epsilon_2^{(socp)}n},\label{eq:mainlasso4}
\end{equation}
where $\epsilon_1^{(socp)}>0$ is an arbitrarily small constant and $\epsilon_2^{(socp)}$ is a constant dependent on $\epsilon_1^{(socp)}$ and $\sigma$ but independent of $n$.
\label{thm:mainlasso}
\end{theorem}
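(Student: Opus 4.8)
The plan is to follow the Gaussian comparison (escape-through-a-mesh / Gordon) route that underlies the framework of \cite{StojnicGenSocp10}. First I would pass to the error vector $\w=\x-\xtilde$. Writing $\x=\xtilde+\w$ and using $\y=A\xtilde+\v$ turns the feasibility set of (\ref{eq:socp}) into $\|A\w-\v\|_2\le r_{socp}$ and the shifted objective of (\ref{eq:objlassol1}) into $\|\xtilde+\w\|_1-\|\xtilde\|_1$, so that
$$f_{obj}=\min_{\w}\ \big(\|\xtilde+\w\|_1-\|\xtilde\|_1\big)\quad\text{s.t.}\quad \|A\w-\v\|_2\le r_{socp}.$$
I would then expose a bilinear form in $A$ by representing both nonsmooth pieces through their dual (max) descriptions: the norm constraint as $\|A\w-\v\|_2=\max_{\|\u\|_2\le1}\u^T(A\w-\v)$ with a multiplier $\nu\ge0$, and, using $\|\xtilde\|_1=\sum_{i=n-k+1}^{n}\xtilde_i$ for the positive vector of (\ref{eq:xtildedef}),
$$\|\xtilde+\w\|_1-\|\xtilde\|_1=\max_{0\le\lambda_i\le2}\Big[(\z-\lambda)^T\w-\sum_{i=n-k+1}^{n}\lambda_i\xtilde_i\Big].$$
This recasts $f_{obj}$ as a min-max whose only coupling to $A$ is the term $\nu\,\u^T A\w$.

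To this min-max I would apply Gordon's comparison theorem, replacing $\nu\,\u^TA\w$ by the decoupled Gaussian surrogate $\nu\|\w\|_2\,\g^T\u+\nu\|\u\|_2\,\h^T\w$, with $\g\in R^m$, $\h\in R^n$ independent standard normals. Writing $\rho=\|\w\|_2$, the maximization over the direction and length of $\u$ (active at $\|\u\|_2=1$) contributes $\nu\|\rho\g-\v\|_2$ together with a term $\nu\h^T\w$; collecting the $\w$-linear part $(\z-\lambda+\nu\h)^T\w$ and minimizing over the direction of $\w$ gives $-\rho\|\nu\h+\z-\lambda\|_2$. Using the concentration $\|\rho\g-\v\|_2\approx\|\g\|_2\sqrt{\rho^2+\sigma^2}$ (valid since $\|\v\|_2^2\approx\sigma^2 m\approx\sigma^2\|\g\|_2^2$ and $\g\perp\v$) and then solving the one-dimensional stationarity condition in $\rho$ yields exactly the radius (\ref{eq:mainlasso2}); substituting it back produces precisely the reduced objective $\xi_{prim}$ of (\ref{eq:mainlasso1}). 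The hypothesis that $(\alpha,\beta_w)$ lie below the fundamental threshold (\ref{eq:fundl1}) is what guarantees that at the saddle point the radicand $\|\g\|_2^2\nu^2-\|\nu\h+\z-\lambda\|_2^2$ is, with overwhelming probability, strictly positive, so that $\xi_{prim}$ and $\|\hat{\w}\|_2$ are finite and the surrogate problem is well posed.

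Finally I would upgrade these deterministic reductions to the probabilistic statements (\ref{eq:mainlasso3})--(\ref{eq:mainlasso4}) by concentration of measure. On the overwhelmingly likely event that the radicand is bounded away from $0$, both $\xi_{prim}(\sigma,\g,\h,\xtilde,r_{socp})$ and $\|\hat{\w}\|_2$ are Lipschitz functions of the Gaussian pair $(\g,\h)$, so the standard Gaussian Lipschitz-concentration inequality gives exponential concentration of each around its mean, i.e. around $E\xi_{prim}$ and $E\|\hat{\w}\|_2$. Combined with the Gordon reduction this transfers to concentration of the true objective $\|\xtilde+\w_{socp}\|_1-\|\xtilde\|_1$ and of the true error norm $\|\w_{socp}\|_2$, which is the content of (\ref{eq:mainlasso3}) and (\ref{eq:mainlasso4}).

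The step I expect to be the main obstacle is making the Gordon reduction two-sided. Gordon's inequality is intrinsically one-directional — it bounds the randomized min-max by its decoupled surrogate in only one direction — whereas the concentration claims require that the primal value and the surrogate value $\xi_{prim}$ agree asymptotically. Closing this gap relies on the convex-concave structure of the reformulation (convex in $\w$, concave in $(\nu,\lambda,\u)$), which permits the needed min-max interchange and a matching reverse bound, together with a uniform-over-compact-sets control so that the optimal $\w$ is pinned down \emph{in norm} and the error-vector statement (\ref{eq:mainlasso4}) follows, rather than merely the objective statement (\ref{eq:mainlasso3}). Ensuring the relevant feasible sets are compact and that the suprema are attained — again underwritten by the threshold (\ref{eq:fundl1}) — is the delicate part.
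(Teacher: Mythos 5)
The paper itself contains no proof of this theorem: its ``proof'' is a one-line citation to \cite{StojnicGenSocp10}, and your sketch reconstructs precisely the route that reference takes --- dualizing the constraint and the $\ell_1$-difference to expose the bilinear term $\nu\u^TA\w$, applying Gordon's comparison to the decoupled surrogate, recovering the reverse inequality from the convex-concave structure (the step you correctly single out as the crux), and finishing with Lipschitz--Gaussian concentration. Since your approach is essentially the same as the cited framework's and you flag the genuine delicate point rather than glossing over it, there is nothing to correct.
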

\begin{proof}
Presented in \cite{StojnicGenSocp10}.
\end{proof}

\noindent \textbf{Remark:} A pair $(\alpha,\beta_w)$ lies below the fundamental characterization (\ref{eq:fundl1}) if $\alpha>\alpha_w$ and $\alpha_w$ and $\beta_w$ are such that (\ref{eq:fundl1}) holds.

%%%%%%%%%%%%%%%%%%%%%%%%%%%%%%%%%%%%%%%%%%%%%%%%%%%%%%%%%%%%%%%%%%%%%%%%%%
\subsection{Problem dependent properties of the framework} \label{sec:probdepframe}
%%%%%%%%%%%%%%%%%%%%%%%%%%%%%%%%%%%%%%%%%%%%%%%%%%%%%%%%%%%%%%%%%%%%%%%%%%

To facilitate the exposition that will follow we similarly to what was done in \cite{StojnicCSetam09,StojnicEquiv10,StojnicGenSocp10} set
\begin{equation}
\bar{\h}=[|\h|_{(1)}^{(1)},|\h|_{(2)}^{(2)},\dots,|\h|_{(n-k)}^{(n-k)},\h_{n-k+1}^{(k)},\h_{n-k+2}^{(k-1)},\dots,\h_n^{(1)}]^T,\label{eq:defhbar}
\end{equation}
where $[|\h|_{(1)}^{(1)},|\h|_{(2)}^{(2)},\dots,|\h|_{(n-k)}^{(n-k)}]$ are the magnitudes of $[\h_{1},\h_{2},\dots,\h_{n-k}]$ sorted in increasing order and
$[\h_{n-k+1}^{(k)},\h_{n-k+2}^{(k-1)},\dots,\h_n^{(1)}]$ are the elements of $[-\h_{n-k+1},-\h_{n-k+2},\dots,-\h_n]$ sorted in decreasing order (possible ties in the sorting processes are of course broken arbitrarily). One can then rewrite the optimization problem from (\ref{eq:mainlasso1}) in the following way
\begin{eqnarray}
\xi_{prim}(\sigma,\g,\h,\xtilde,r_{socp})=\max_{\nu,\lambda} & & \sigma\sqrt{\|\g\|_2^2\nu^2-\|\nu\bar{\h}-\z+\lambda\|_2^2} -\sum_{i=n-k+1}^{n}\lambda_i\xtilde_i-\nu r_{socp}\nonumber \\
\mbox{subject to}
& & \nu\geq 0\nonumber \\
& & 0 \leq\lambda_i\leq 1,1\leq i\leq n-k\nonumber \\
& & 0 \leq \lambda_i\leq 2, n-k+1\leq i\leq n.\label{eq:mainlasso3}
\end{eqnarray}
In what follows we will restrict our attention to a specific class of unknown vectors $\xtilde$. Namely, we will consider vectors $\xtilde$ that have amplitude of the nonzero components equal to $x_{mag}$. In the noiseless case these problem instances are typically the hardest to solve (at least as long as one uses the $\ell_1$ optimization from (\ref{eq:l1})). We will again emphasize that the fact that magnitudes of the nonzero elements of $\xtilde$ are $x_{mag}$ is not known a priori and can not be used in the solving algorithm (i.e. one can not add constraints that would exploit this knowledge in optimization problem (\ref{eq:socp})). It is just that we will consider how the SOCP from (\ref{eq:socp}) behaves when used to solve problem instances generated by such an $\xtilde$. Also, for such an $\xtilde$ (\ref{eq:mainlasso3}) can be rewritten in the following way
\begin{eqnarray}
\xi_{prim}^{(dep)}(\sigma,\g,\h,x_{mag},r_{socp})=\max_{\nu,\lambda} & & \sigma\sqrt{\|\g\|_2^2\nu^2-\|\nu\bar{\h}-\z+\lambda\|_2^2} -x_{mag}\sum_{i=n-k+1}^{n}\lambda_i-\nu r_{socp}\nonumber \\
\mbox{subject to}
& & \nu\geq 0\nonumber \\
& & 0 \leq\lambda_i\leq 1,1\leq i\leq n-k\nonumber \\
& & 0 \leq \lambda_i\leq 2, n-k+1\leq i\leq n.\label{eq:mainlasso3ver}
\end{eqnarray}
Now, let $\nu_{dep}$ and $\lambda^{(dep)}$ be the solution of (\ref{eq:mainlasso3ver}). Then analogously to (\ref{eq:mainlasso2}) we can set
\begin{equation}
\|\w_{dep}\|_2=\sigma\frac{\|\nu_{dep}\bar{\h}-\z+\lambda^{(dep)}\|_2}
{\sqrt{\|\g\|_2^2\nu_{dep}^2-\|\nu_{dep}\bar{\h}-\z+\lambda^{(dep)}\|_2^2}},\label{eq:mainlasso4}
\end{equation}
In what follows we will determine $\|\w_{dep}\|_2$ and $\xi_{prim}^{(dep)}(\sigma,\g,\h,x_{mag},r_{socp})$ or more precisely their concentrating points
$E\|\w_{dep}\|_2$ and $E\xi_{prim}^{(dep)}(\sigma,\g,\h,x_{mag},r_{socp})$. All other parameters such as $\nu_{dep}$, $\lambda^{(dep)}$ can (and some of them will) be computed through the framework as well. We do however mention right here that what we present below assumes a fair share of familiarity with the techniques introduced in our earlier papers \cite{StojnicCSetam09,StojnicGenLasso10,StojnicGenSocp10}. To shorten the exposition we will skip many details presented in those papers and present only the key differences.

We proceed by following the line of thought presented in \cite{StojnicCSetam09,StojnicGenSocp10}. Since $\lambda^{(dep)}$ is the solution of (\ref{eq:mainlasso3ver}) there will be parameters $c_{1}$, $c_{2}$, and $c_{3}$ such that $$\lambda^{(dep)}=[\lambda_1^{(dep)},\lambda_2^{(dep)},\dots,\lambda_{c_{1}}^{(dep)},0,0,\dots,0,\lambda_{c_{2}+1}^{(dep)},\lambda_{c_{2}+2}^{(dep)},
\dots,\lambda_{n-c_{3}}^{(dep)},2,2,\dots,2]$$ and obviously $c_{1}\leq n-k$, $n-k\leq c_{2}\leq n$, and $0\leq c_{3}\leq n-c_{2}$. At this point let us assume that these parameters are known and fixed. Then following \cite{StojnicCSetam09,StojnicGenSocp10} the optimization problem from (\ref{eq:mainlasso3ver}) can be rewritten in the following way
\begin{eqnarray}
\hspace{-.3in}\xi_{prim}(\sigma,\g,\h,\xtilde,r_{socp})=\max_{\nu,\lambda_{c_{1}+1:n}} & & \sigma\sqrt{\|\g\|_2^2\nu^2-\|\nu\bar{\h}_{c_{1}+1:n}-\z_{c_{1}+1:n}+\lambda_{c_{1}+1:n}\|_2^2} -x_{mag}\sum_{i=n-k+1}^{n}\lambda_i-\nu r_{socp}\nonumber \\
\mbox{subject to}
& & \nu\geq 0\nonumber \\
& & 0 \leq\lambda_i\leq 2, c_{2}\leq i\leq n-c_{3}\nonumber \\
& & \lambda_i=2, n-c_{3}+1\leq i\leq n\nonumber \\
& & \lambda_i=0, c_{1}+1\leq i\leq n-k.\label{eq:mainlasso5}
\end{eqnarray}
To make writing of what will follow somewhat easier we set
\begin{equation}
\xi^{(obj)}=\sigma\sqrt{\|\g\|_2^2\nu^2-\|\nu\bar{\h}_{c_{1}+1:n}-\z_{c_{1}+1:n}+\lambda_{c_{1}+1:n}\|_2^2} -x_{mag}\sum_{i=n-k+1}^{n}\lambda_i-\nu r_{socp}.
\label{eq:defxiobj}
\end{equation}
We then proceed by solving the optimization in (\ref{eq:mainlasso5}) over $\nu$ and $\lambda_{c_{1}+1:n}$. To do so we first look at the derivatives with respect to $\lambda_i,c_{2}+1\leq i\leq n-c_{3}$, of the objective in (\ref{eq:mainlasso5}). Computing the derivatives and equalling them to zero gives
\begin{eqnarray}
\hspace{-.3in}& & \frac{d \xi^{(obj)}}{d \lambda_i}  =  0, c_{2}+1\leq i\leq n-c_{3}\nonumber \\
&\iff &  \sigma \frac{-(\nu\bar{\h}_i-\z_i+\lambda_i)}
{\sqrt{\|\g\|_2^2\nu^2-\|\nu\bar{\h}_{c_{1}+1:n}-\z_{c_{1}+1:n}+\lambda_{c_{1}+1:n}\|_2^2}}-x_{mag} =  0, c_{2}+1\leq i\leq n-c_{3}\nonumber \\
& \iff &  \lambda_i-\z_i+
\nu\bar{\h}_{i}=-\frac{x_{mag}}{\sigma}\sqrt{\|\g\|_2^2\nu^2-\|\nu\bar{\h}_{c_{1}+1:n}-\z_{c_{1}+1:n}+\lambda_{c_{1}+1:n}\|_2^2}, c_{2}+1\leq i\leq n-c_{3}\nonumber \\
& \iff &  \lambda_i=-\frac{x_{mag}}{\sigma}\sqrt{\|\g\|_2^2\nu^2-\|\nu\bar{\h}_{c_{1}+1:n}-\z_{c_{1}+1:n}+\lambda_{c_{1}+1:n}\|_2^2}+\z_i-
\nu\bar{\h}_{i}, c_{2}+1\leq i\leq n-c_{3}.\nonumber \\\label{eq:mainlasso6}
\end{eqnarray}
From the second to last line in the above equation one then has
\begin{multline}
(\lambda_i-\z_i+\nu\bar{\h}_{i})^2=\frac{x_{mag}^2}{\sigma^2}(\|\g\|_2^2\nu^2-\|\nu\bar{\h}_{c_{1}+1:c_{2}}-\z_{c_{1}+1:c_{2}}
+\lambda_{c_{1}+1:c_{2}}\|_2^2\\-\|\nu\bar{\h}_{c_{2}+1:n-c_{3}}-\z_{c_{2}+1:n-c_{3}}
\|_2^2-\|\nu\bar{\h}_{n-c_{3}+1:n}+\z_{n-c_{3}+1:n}\|_2^2)
\end{multline}
and after an easy algebraic transformation
\begin{equation}
\hspace{-.3in}(\lambda_i-\z_i+\nu\bar{\h}_{i})^2=\frac{x_{mag}^2}{\sigma^2+(n-c_{2}-c_{3})x_{mag}^2}(\|\g\|_2^2\nu^2-\|\nu\bar{\h}_{c_{1}+1:c_{2}}-\z_{c_{1}+1:c_{2}}
\|_2^2-\|\nu\bar{\h}_{n-c_{3}+1:n}+\z_{n-c_{3}+1:n}\|_2^2).\label{eq:mainlasso7}
\end{equation}
Using (\ref{eq:mainlasso7}) we further have
\begin{multline}
\sqrt{\|\g\|_2^2\nu^2-\|\nu\bar{\h}_{c_{1}+1:n}-\z_{c_{1}+1:n}
+\lambda_{c_{1}+1:n}\|_2^2}\\
\hspace{-.4in}=\sqrt{\|\g\|_2^2\nu^2-\|\nu\bar{\h}_{c_{1}+1:c_{2}}-\z_{c_{1}+1:c_{2}}\|_2^2-\|\nu\bar{\h}_{c_{2}+1:n-c_{3}}-\z_{c_{2}+1:n-c_{3}}
+\lambda_{c_{2}+1:n-c_{3}}\|_2^2-\|\nu\bar{\h}_{n-c_{3}+1:n}+\z_{n-c_{3}+1:n}\|_2^2}\\
=\frac{\sigma}{\sqrt{\sigma^2+(n-c_{3}-c_{2})x_{mag}^2}}\sqrt{\|\g\|_2^2\nu^2-\|\nu\bar{\h}_{c_{1}+1:c_{2}}-\z_{c_{1}+1:c_{2}}\|_2^2
-\|\nu\bar{\h}_{n-c_{3}+1:n}+\z_{n-c_{3}+1:n}\|_2^2}.\label{eq:mainlasso8}
\end{multline}
Plugging the value for $\lambda_i$ from (\ref{eq:mainlasso5}) in (\ref{eq:defxiobj}) gives
\begin{eqnarray}
\xi^{(obj)} & = & \sigma\sqrt{\|\g\|_2^2\nu^2-\|\nu\bar{\h}_{c_{1}+1:n}-\z_{c_{1}+1:n}+\lambda_{c_{1}+1:n}\|_2^2} -x_{mag}\sum_{i=n-k+1}^{n}\lambda_i-\nu r_{socp}\nonumber \\
& = & \frac{\sigma^2+(n-c_{3}-c_{2})x_{mag}^2}{\sigma}\sqrt{\|\g\|_2^2\nu^2-\|\nu\bar{\h}_{c_{1}+1:n}-\z_{c_{1}+1:n}+\lambda_{c_{1}+1:n}\|_2^2}\nonumber \\
& - & x_{mag}(n-c_{3}-c_{2})+\nu x_{mag}\sum_{i=c_2+1}^{n-c_{3}}\bar{\h}_i-2c_{3}x_{mag}-\nu r_{socp}.
\label{eq:mainlasso9}
\end{eqnarray}
Combining (\ref{eq:mainlasso8}) and (\ref{eq:mainlasso9}) we finally obtain
\begin{eqnarray}
\xi^{(obj)}
& = & \sqrt{\sigma^2+(n-c_{3}-c_{2})x_{mag}^2}
\sqrt{\|\g\|_2^2\nu^2-\|\nu\bar{\h}_{c_{1}+1:c_{2}}-\z_{c_{1}+1:c_{2}}\|_2^2
-\|\nu\bar{\h}_{n-c_{3}+1:n}+\z_{n-c_{3}+1:n}\|_2^2}\nonumber \\
& - & \nu(r_{socp}- x_{mag}\sum_{i=c_2+1}^{n-c_{3}}\bar{\h}_i)-x_{mag}(n-c_{3}-c_{2})-2c_{3}x_{mag}.
\label{eq:mainlasso10}
\end{eqnarray}
Equalling the derivative of $\xi^{(obj)}$ with respect to $\nu$ to zero further gives
\begin{eqnarray}
\hspace{-.3in}& & \frac{d \xi^{(obj)}}{d \nu}  =  0\nonumber \\
&\iff &
\frac{\nu(\|\g\|_2^2-\sum_{i=c_1+1}^{c_2}\bar{\h}_i^2-\sum_{i=n-c_3+1}^{n}\bar{\h}_i^2)+\bar{\h}_{c_{1}+1:c_{2}}^T\z_{c_{1}+1:c_{2}}
-\bar{\h}_{n-c_{3}+1:n}^T\z_{n-c_{3}+1:n}}
{(\sqrt{\sigma^2+(n-c_{3}-c_{2})x_{mag}^2})^{-1}\sqrt{\|\g\|_2^2\nu^2-\|\nu\bar{\h}_{c_{1}+1:c_{2}}-\z_{c_{1}+1:c_{2}}\|_2^2
-\|\nu\bar{\h}_{n-c_{3}+1:n}+\z_{n-c_{3}+1:n}\|_2^2}}\nonumber\\
& & -(r_{socp}- x_{mag}\sum_{i=c_2+1}^{n-c_{3}}\bar{\h}_i) =0.\nonumber \\\label{eq:mainlasso11}
\end{eqnarray}
Let
\begin{eqnarray}
 s_{dep} & = & \bar{\h}_{c_{1}+1:c_{2}}^T\z_{c_{1}+1:c_{2}}
-\bar{\h}_{n-c_{3}+1:n}^T\z_{n-c_{3}+1:n}\nonumber \\
d_{dep} & = & \sum_{i=c_1+1}^{c_2}\bar{\h}_i^2+\sum_{i=n-c_3+1}^{n}\bar{\h}_i^2 \nonumber \\
r_{dep} & = & r_{socp}- x_{mag}\sum_{i=c_2+1}^{n-c_{3}}\bar{\h}_i\nonumber \\
a_{dep} & = & \frac{\|\g\|_2^2-(\sum_{i=c_1+1}^{c_2}\bar{\h}_i^2+\sum_{i=n-c_3+1}^{n}\bar{\h}_i^2)}
{\sqrt{\sigma^2+(n-c_{3}-c_{2})x_{mag}^2}^{-1}r_{dep}}=
\frac{\sqrt{\sigma^2+(n-c_{3}-c_{2})x_{mag}^2}(\|\g\|_2^2-d_{dep})}{r_{dep}}\nonumber \\
b_{dep} & = & \frac{\bar{\h}_{c_{1}+1:c_{2}}^T\z_{c_{1}+1:c_{2}}
-\bar{\h}_{n-c_{3}+1:n}^T\z_{n-c_{3}+1:n}}{\sqrt{\sigma^2+(n-c_{3}-c_{2})x_{mag}^2}^{-1}r_{dep}}=\frac{\sqrt{\sigma^2+(n-c_{3}-c_{2})x_{mag}^2}s_{dep}}{r_{dep}}.
\label{eq:defdep}
\end{eqnarray}
Then combining (\ref{eq:mainlasso11}) and (\ref{eq:defdep}) one obtains
\begin{equation}
(a_{dep}\nu+b_{dep})^2=\|\g\|_2^2\nu^2-\|\nu\bar{\h}_{c_{1}+1:c_{2}}-\z_{c_{1}+1:c_{2}}\|_2^2
-\|\nu\bar{\h}_{n-c_{3}+1:n}+\z_{n-c_{3}+1:n}\|_2^2.\label{eq:mainlasso13}
\end{equation}
After solving (\ref{eq:mainlasso13}) over $\nu$ we have
\begin{equation}
\hspace{-.0in}\nu=\frac{-(a_{dep}b_{dep}-s_{dep})-\sqrt{(a_{dep}b_{dep}-s_{dep})^2-
(b_{dep}^2+\|\z_{c_{1}+1:c_{2}}\|_2^2+\|\z_{n-c_{3}+1:n}\|_2^2)(a_{dep}^2-\|\g\|_2^2+d_{dep})}}
{a_{dep}^2-\|\g\|_2^2+d_{dep}}.\label{eq:mainlasso14}
\end{equation}
Following what was done in \cite{StojnicCSetam09,StojnicGenSocp10}, we have that a combination of (\ref{eq:mainlasso6}) and (\ref{eq:mainlasso14}) gives the following three equations that can be used to determine $c_1$, $c_2$, and $c_3$ (the equations are rather inequalities; since we will assume a large dimensional scenario we will instead of any of the inequalities below write an equality; this will make writing much easier).
\begin{eqnarray}
\nu\bar{\h}_{c_{2}}-\z_{c_{2}}+\frac{x_{mag}}{\sigma}\sqrt{\|\g\|_2^2\nu^2-\|\nu\bar{\h}_{c_{1}+1:n}-\z_{c_{1}+1:n}+\lambda_{c_{1}+1:n}\|_2^2}& =& 0\nonumber \\
\nu\bar{\h}_{n-c_{3}}+\z_{n-c_{3}}+\frac{x_{mag}}{\sigma}\sqrt{\|\g\|_2^2\nu^2-\|\nu\bar{\h}_{c_{1}+1:n}-\z_{c_{1}+1:n}+\lambda_{c_{1}+1:n}\|_2^2} & = & 0\nonumber \\
\hspace{-.5in}\bar{\h}_{c_1}\frac{-(a_{dep}b_{dep}-s_{dep})-\sqrt{(a_{dep}b_{dep}-s_{dep})^2-
(b_{dep}^2+\|\z_{c_{1}+1:c_{2}}\|_2^2+\|\z_{n-c_{3}+1:n}\|_2^2)(a_{dep}^2-\|\g\|_2^2+d_{dep})}}
{a_{dep}^2-\|\g\|_2^2+d_{dep}} & = & 1.\nonumber \\
\label{eq:mainlasso15}
\end{eqnarray}
The last term that appears on the right hand side of the first two of the above equations can be further simplified based on
(\ref{eq:mainlasso8}) in the following way
\begin{multline}
\hspace{-.7in}\sqrt{\|\g\|_2^2\nu^2-\|\nu\bar{\h}_{c_{1}+1:n}-\z_{c_{1}+1:n}+\lambda_{c_{1}+1:n}\|_2^2}=
\frac{\sigma\sqrt{\|\g\|_2^2\nu^2-\|\nu\bar{\h}_{c_{1}+1:c_{2}}-\z_{c_{1}+1:c_{2}}\|_2^2
-\|\nu\bar{\h}_{n-c_{3}+1:n}+\z_{n-c_{3}+1:n}\|_2^2}}{\sqrt{\sigma^2+(n-c_3-c_2)x_{mag}^2}} \\
\hspace{-.5in}=\frac{\sigma\sqrt{\|\g\|_2^2\nu^2-\nu^2d_{dep}+2\nu s_{dep}-(\|\z_{c_{1}+1:c_{2}}\|_2^2+\|\z_{n-c_{3}+1:n}\|_2^2)}}{\sqrt{\sigma^2+(n-c_3-c_2)x_{mag}^2}}=
\frac{\sigma\sqrt{\|\g\|_2^2\nu^2-\nu^2d_{dep}+2\nu s_{dep}-(c_2-c_1+c_3)}}{\sqrt{\sigma^2+(n-c_3-c_2)x_{mag}^2}},\label{eq:mainlasso16}
\end{multline}
where we of course recognized that $\|\z_{c_{1}+1:c_{2}}\|_2^2+\|\z_{n-c_{3}+1:n}\|_2^2=c_2-c_1+c_3$. Combining (\ref{eq:defdep}) and (\ref{eq:mainlasso16}) one can then simplify the equations from (\ref{eq:mainlasso15}) in the following way
\begin{eqnarray}
\hspace{-.5in}\nu\bar{\h}_{c_{2}}-\z_{c_{2}}+\frac{x_{mag}}{\sqrt{\sigma^2+(n-c_3-c_2)x_{mag}^2}}\sqrt{\|\g\|_2^2\nu^2-\nu^2d_{dep}+2\nu s_{dep}-(c_2-c_1+c_3)} & = & 0\nonumber \\
 \nu\bar{\h}_{n-c_{3}}+\z_{n-c_{3}}+\frac{x_{mag}}{\sqrt{\sigma^2+(n-c_3-c_2)x_{mag}^2}}\sqrt{\|\g\|_2^2\nu^2-\nu^2d_{dep}+2\nu s_{dep}-(c_2-c_1+c_3)} & = & 0\nonumber \\
 \bar{\h}_{c_1}\frac{-(a_{dep}b_{dep}-s_{dep})-\sqrt{(a_{dep}b_{dep}-s_{dep})^2-
(b_{dep}^2+(c_2-c_1+c_3))(a_{dep}^2-\|\g\|_2^2+d_{dep})}}
{a_{dep}^2-\|\g\|_2^2+d_{dep}} & = & 1.\nonumber \\
\label{eq:c1c2c3}
\end{eqnarray}
Let $\widehat{c_1}$, $\widehat{c_2}$, and $\widehat{c_3}$ be the solution of (\ref{eq:c1c2c3}). Then
\begin{equation}
\nu_{dep} = \frac{-(\widehat{a_{dep}}\widehat{b_{dep}}-\widehat{s_{dep}})-\sqrt{(\widehat{a_{dep}}\widehat{b_{dep}}-\widehat{s_{dep}})^2-
(\widehat{b_{dep}}^2+(\widehat{c_2}-\widehat{c_1}+\widehat{c_3}))(\widehat{a_{dep}}^2-\|\g\|_2^2+\widehat{d_{dep}})}}
{\widehat{a_{dep}}^2-\|\g\|_2^2+\widehat{d_{dep}}},\label{eq:mainhatnu}
\end{equation}
where $\widehat{s_{dep}}$, $\widehat{d_{dep}}$, $\widehat{a_{dep}}$, and $\widehat{b_{dep}}$ are $s_{dep}$, $d_{dep}$, $a_{dep}$, and $b_{dep}$ from (\ref{eq:defdep}) computed with $\widehat{c_1}$, $\widehat{c_2}$, and $\widehat{c_3}$. From (\ref{eq:mainlasso4}) one then has
\begin{equation}
\|\w_{dep}\|_2=\sigma\frac{\|\nu_{dep}\bar{\h}_{\widehat{c_{1}}+1:n}-\z_{\widehat{c_{1}}+1:n}+\lambda_{\widehat{c_{1}}+1:n}^{(dep)}\|_2}
{\sqrt{\|\g\|_2^2\nu_{dep}^2-\|\nu_{dep}\bar{\h}_{\widehat{c_{1}}+1:n}-\z_{\widehat{c_{1}}+1:n}+\lambda_{\widehat{c_{1}}+1:n}^{(dep)}\|_2^2}}.\label{eq:mainhatw}
\end{equation}
Combining (\ref{eq:mainlasso16}) and (\ref{eq:mainhatw}) one further has
\begin{equation}
\|\w_{dep}\|_2=\sigma\frac{\sqrt{\|\g\|_2^2\nu_{dep}^2(n-\widehat{c_3}-\widehat{c_2})\frac{x_{mag}^2}{\sigma^2}+\nu_{dep}^2\widehat{d_{dep}}-2\nu \widehat{s_{dep}}+(\widehat{c_2}-\widehat{c_1}+\widehat{c_3})}}
{\sqrt{\|\g\|_2^2\nu_{dep}^2-\nu_{dep}^2\widehat{d_{dep}}+2\nu_{dep} \widehat{s_{dep}}-(\widehat{c_2}-\widehat{c_1}+\widehat{c_3})}}.\label{eq:mainhatw1}
\end{equation}
Combination of (\ref{eq:mainhatnu}) and (\ref{eq:mainhatw1}) is conceptually enough to determine $\|\w_{dep}\|_2$ (and then afterwards easily $E\xi_{prim}^{(dep)}(\sigma,\g,\h,x_{mag},r_{socp})$). What is left to be done is a computation of all unknown quantities that appear in (\ref{eq:mainhatnu}) and (\ref{eq:mainhatw1}). We will below show how that can be done. As mentioned earlier what we will present substantially relies on what was shown in \cite{StojnicCSetam09,StojnicGenSocp10} and we assume a familiarity with the procedures presented there.

The first thing to resolve is (\ref{eq:c1c2c3}). Since all random quantities concentrate we will be dealing (as in \cite{StojnicCSetam09,StojnicGenSocp10}) with the expected values. To compute the solution of (\ref{eq:c1c2c3}), $\widehat{c_1}$, $\widehat{c_2}$, and $\widehat{c_3}$,  we will need the following expected values
\begin{eqnarray}
& & E\|\g\|_2^2, E\|\bar{\h}_{c_{1}+1:n-k}\|_2^2, E\|\bar{\h}_{n-k+1:c_2}\|_2^2, E\|\bar{\h}_{n-c_3+1:n}\|_2^2,\nonumber \\
& & E (\bar{\h}_{c_{1}+1:n-k}^T\z_{c_{1}+1:n-k}), E (\bar{\h}_{n-k+1:c_2}^T\z_{n-k+1:c_2}), E (\bar{\h}_{n-c_3+1:n}^T\z_{n-c_{3}+1:n}).\label{eq:mainexp1}
\end{eqnarray}
Clearly, since components of $\g$ are i.i.d. standard normals one easily has
\begin{equation}
E\|\g\|_2^2=m.\label{eq:mainexpg}
\end{equation}
Let $c_{1}=(1-\theta_1)n$, $c_{2}=\theta_2 n$, and $c_{3}=\theta_3 n$ where $\theta_1$, $\theta_2$, and $\theta_3$ are constants independent of $n$. Then as shown in \cite{StojnicCSetam09,StojnicGenSocp10}
\begin{equation}
\lim_{n\rightarrow\infty}\frac{E\|\bar{\h}_{c_{1}+1:n-k}\|_2^2}{n}  =  \frac{1-\beta_w}{\sqrt{2\pi}}\left (2\frac{\sqrt{2(\erfinv(\frac{1-\theta_1}{1-\beta_w}))^2}}{e^{(\erfinv(\frac{1-\theta_1}{1-\beta_w}))^2}}\right )+\theta_1-\beta_w.\label{eq:mainexpnormh1}
\end{equation}
where we of course recall that $\beta_w=\frac{k}{n}$. Also, as shown in \cite{StojnicCSetam09,StojnicGenSocp10}
\begin{equation}
\lim_{n\rightarrow\infty}\frac{E\|\bar{\h}_{n-k+1:c_{2}}\|_2^2}{n}  =  \frac{\beta_w}{\sqrt{2\pi}}\left (\frac{\sqrt{2}\erfinv(2\frac{1-\theta_2}{\beta_w}-1)}{e^{(\erfinv(2\frac{1-\theta_2}{\beta_w}-1))^2}}\right )+\theta_2-1+\beta_w,\label{eq:mainexpnormh2}
\end{equation}
and
\begin{equation}
\lim_{n\rightarrow\infty}\frac{E\|\bar{\h}_{n-c_{3}+1:n}\|_2^2}{n}  =  \frac{\beta_w}{\sqrt{2\pi}}\left (\frac{\sqrt{2}\erfinv(2\frac{\beta_w-\theta_3}{\beta_w}-1)}{e^{(\erfinv(2\frac{\beta_w-\theta_3}{\beta_w}-1))^2}}\right )+\theta_3,\label{eq:mainexpnormh3}
\end{equation}
Following further what was established in \cite{StojnicCSetam09,StojnicGenSocp10} we have
\begin{eqnarray}
\lim_{n\rightarrow\infty}\frac{E(\bar{\h}_{c_{1}+1:n-k}^T\z_{c_{1}+1:n-k})}{n} & = &
\left ((1-\beta_w)\sqrt{\frac{2}{\pi}}e^{-(\erfinv(\frac{1-\theta_1}{1-\beta_w}))^2}\right )\nonumber \\
\lim_{n\rightarrow\infty}\frac{E(\bar{\h}_{n-k+1:c_{2}}^T\z_{n-k+1:c_{2}})}{n} & = &
\left (\beta_w\sqrt{\frac{1}{2\pi}}e^{-(\erfinv(2\frac{1-\theta_2}{\beta_w}-1))^2}\right )\nonumber \\
\lim_{n\rightarrow\infty}\frac{E(\bar{\h}_{n-c_{3}+1:n}^T\z_{n-c_{3}+1:n})}{n} & = &
-\left (\beta_w\sqrt{\frac{1}{2\pi}}e^{-(\erfinv(2\frac{\beta_w-\theta_3}{\beta_w}-1))^2}\right ).
\label{eq:mainexpprod}
\end{eqnarray}
From (\ref{eq:mainexpprod}) we also have
\begin{equation}
\lim_{n\rightarrow\infty}\frac{E(\sum_{i=c_2+1}^{n-c_3}\bar{\h}_{i})}{n} =
\left (\beta_w\sqrt{\frac{1}{2\pi}}e^{-(\erfinv(2\frac{\beta_w-\theta_3}{\beta_w}-1))^2}\right )
-\left (\beta_w\sqrt{\frac{1}{2\pi}}e^{-(\erfinv(2\frac{1-\theta_2}{\beta_w}-1))^2}\right ).\label{eq:mainexpprod1}
\end{equation}
The only other thing that we will need in order to be able to compute $\widehat{c_{1}}$, $\widehat{c_{2}}$, and $\widehat{c_{3}}$ (besides the expectations from (\ref{eq:mainexp1})) are the following inequalities related to the behavior of $\bar{\h}_{c_{1}}$, $\bar{\h}_{c_{2}}$, and $\bar{\h}_{c_{3}}$. Again, as shown in \cite{StojnicCSetam09,StojnicGenSocp10}
\begin{eqnarray}
P(\sqrt{2}\erfinv ((1+\epsilon_1^{\bar{\h}_{c_{1}}})(\frac{1-\theta_1}{1-\beta_w}))\leq \bar{\h}_{c_{1}}) & \leq & e^{-\epsilon_2^{\bar{\h}_{c_{1}}} n}\nonumber \\
P(\sqrt{2}\erfinv ((1+\epsilon_1^{\bar{\h}_{c_{2}}})(2\frac{1-\theta_2}{\beta_w}-1))\leq \bar{\h}_{c_{2}}) & \leq & e^{-\epsilon_2^{\bar{\h}_{c_{2}}} n}\nonumber \\
P(-\sqrt{2}\erfinv ((1+\epsilon_1^{\bar{\h}_{c_{3}}})(2\frac{\beta_w-\theta_3}{\beta_w}-1))\leq \bar{\h}_{n-c_{3}}) & \leq & e^{-\epsilon_2^{\bar{\h}_{n-c_{3}}} n}.\label{eq:mainbrpoint}
\end{eqnarray}
where $\epsilon_1^{\bar{\h}_{c_{1}}}>0$, $\epsilon_1^{\bar{\h}_{c_{2}}}>0$, and $\epsilon_1^{\bar{\h}_{n-c_{3}}}>0$ are arbitrarily small constants and $\epsilon_2^{\bar{\h}_{c_{1}}}$, $\epsilon_2^{\bar{\h}_{c_{2}}}$, and $\epsilon_2^{\bar{\h}_{n-c_{3}}}$ are constants dependent on $\epsilon_1^{\bar{\h}_{c_{1}}}$, $\epsilon_1^{\bar{\h}_{c_{2}}}$, and $\epsilon_1^{\bar{\h}_n-{c_{3}}}$, respectively, but independent of $n$ (essentially one only needs the direction of inequalities as in (\ref{eq:mainbrpoint}); however, a similar reverse inequalities hold as well).

At this point we have all the necessary ingredients to determine $\widehat{c_{1}}$, $\widehat{c_{2}}$, and $\widehat{c_{3}}$ and consequently $\nu_{dep}$, $\|\w_{dep}\|_2$, and $\xi_{prim}^{(dep)}(\sigma,\g,\h,x_{mag},r_{socp})$. We of course recall that in a random setup that we consider quantities $\widehat{c_{1}}$, $\widehat{c_{2}}$, $\widehat{c_{3}}$, $\nu_{dep}$, $\|\w_{dep}\|_2$, and $\xi_{prim}^{(dep)}(\sigma,\g,\h,x_{mag},r_{socp})$ can not really be determined. Instead what we will be determining are their concentrating points. The following theorem then provides a systematic way of doing so.
\begin{theorem}
Assume the setup of Theorem \ref{thm:mainlasso}. Let the nonzero components of $\xtilde$ have magnitude $x_{mag}$ and let $\bar{\h}$ be as defined in (\ref{eq:defhbar}). Further, let $r_{socp}^{(sc)}=\lim_{n\rightarrow\infty}\frac{r_{socp}}{\sqrt{n}}$ and $x_{mag}^{(sc)}=\lim_{n\rightarrow\infty}\frac{x_{mag}}{\sqrt{n}}$. Also, let $\nu_{dep}$, $\|\w_{dep}\|_2$, and $\xi_{prim}^{(dep)}(\sigma,\g,\h,x_{mag},r_{socp})$ be as defined in and right after (\ref{eq:mainlasso3ver}). Let $\alpha=\frac{m}{n}$ and $\beta_w=\frac{k}{n}$ be fixed. Consider the following
\begin{multline*}
S(\theta_1,\theta_2,\theta_3)  =  \lim_{n\rightarrow\infty}\frac{E s_{dep}}{n}  =  \left ((1-\beta_w)\sqrt{\frac{2}{\pi}}e^{-(\erfinv(\frac{1-\theta_1}{1-\beta_w}))^2}\right )+\left (\beta_w\sqrt{\frac{1}{2\pi}}e^{-(\erfinv(2\frac{1-\theta_2}{\beta_w}-1))^2}\right ) \\+\left (\beta_w\sqrt{\frac{1}{2\pi}}e^{-(\erfinv(2\frac{\beta_w-\theta_3}{\beta_w}-1))^2}\right )
\end{multline*}
\begin{multline*}
D(\theta_1,\theta_2,\theta_3)  =  \lim_{n\rightarrow\infty}\frac{E d_{dep}}{n}  =  \frac{1-\beta_w}{\sqrt{2\pi}}\left (2\frac{\sqrt{2(\erfinv(\frac{1-\theta_1}{1-\beta_w}))^2}}{e^{(\erfinv(\frac{1-\theta_1}{1-\beta_w}))^2}}\right )+\theta_1-\beta_w\\+\frac{\beta_w}{\sqrt{2\pi}}\left (\frac{\sqrt{2}\erfinv(2\frac{1-\theta_2}{\beta_w}-1)}{e^{(\erfinv(2\frac{1-\theta_2}{\beta_w}-1))^2}}\right )+\theta_2-1+\beta_w
+\frac{\beta_w}{\sqrt{2\pi}}\left (\frac{\sqrt{2}\erfinv(2\frac{1-\theta_3}{\beta_w}-1)}{e^{(\erfinv(2\frac{\beta_w-\theta_3}{\beta_w}-1))^2}}\right )+\theta_3
\end{multline*}
\begin{multline*}
R(\theta_2,\theta_3)  =  \lim_{n\rightarrow\infty}\frac{E r_{dep}}{\sqrt{n}}  =  r_{socp}^{(sc)}-x_{mag}^{(sc)}\left ( \left (\beta_w\sqrt{\frac{1}{2\pi}}e^{-(\erfinv(2\frac{\beta_w-\theta_3}{\beta_w}-1))^2}\right )
-\left (\beta_w\sqrt{\frac{1}{2\pi}}e^{-(\erfinv(2\frac{1-\theta_2}{\beta_w}-1))^2}\right ) \right)
\end{multline*}
\begin{eqnarray}
A(\theta_1,\theta_2,\theta_3) & = & \lim_{n\rightarrow\infty}\frac{E a_{dep}}{\sqrt{n}}=\frac{\sqrt{\sigma^2+(1-\theta_3-\theta_2)(x_{mag}^{(sc)})^2}(\alpha-D(\theta_1,\theta_2,\theta_3))}{R(\theta_2,\theta_3)}\nonumber \\
B(\theta_1,\theta_2,\theta_3) & = & \lim_{n\rightarrow\infty}\frac{E b_{dep}}{\sqrt{n}}=\frac{\sqrt{\sigma^2+(1-\theta_3-\theta_2)(x_{mag}^{(sc)})^2}S(\theta_1,\theta_2,\theta_3)}{R(\theta_2,\theta_3)}\nonumber \\
F(\theta_1) & = & \sqrt{2}\erfinv (\frac{1-\theta_1}{1-\beta_w})\nonumber \\
G(\theta_2) & = & \sqrt{2}\erfinv (2\frac{1-\theta_2}{\beta_w}-1)\nonumber \\
H(\theta_3) & = & \sqrt{2}\erfinv (2\frac{\beta-\theta_3}{\beta_w}-1).\label{eq:maincompthmcond1}
\end{eqnarray}
Set
\begin{multline}
N(\theta_1,\theta_2,\theta_3)=\frac{-(A(\theta_1,\theta_2,\theta_3)B(\theta_1,\theta_2,\theta_3)-S(\theta_1,\theta_2,\theta_3))}
{A(\theta_1,\theta_2,\theta_3)^2-\alpha+D(\theta_1,\theta_2,\theta_3)}\nonumber \\
\hspace{-.6in}-\frac{\sqrt{(A(\theta_1,\theta_2,\theta_3)B(\theta_1,\theta_2,\theta_3)-S(\theta_1,\theta_2,\theta_3))^2-
(B(\theta_1,\theta_2,\theta_3)^2+\theta_1+\theta_2+\theta_3-1)(A(\theta_1,\theta_2,\theta_3)^2-\alpha+D(\theta_1,\theta_2,\theta_3))}}
{A(\theta_1,\theta_2,\theta_3)^2-\alpha+D(\theta_1,\theta_2,\theta_3)}.
\end{multline}
Let the triplet ($\widehat{\theta_1}$, $\widehat{\theta_2}$, $\widehat{\theta_3}$) be the solution of the following three equations
\begin{eqnarray}
\hspace{-.6in}N(\theta_1,\theta_2,\theta_3)G(\theta_2)+\frac{x_{mag}^{(sc)}\sqrt{N(\theta_1,\theta_2,\theta_3)^2(\alpha^2
-D(\theta_1,\theta_2,\theta_3))+2N(\theta_1,\theta_2,\theta_3) S(\theta_1,\theta_2,\theta_3)-(\theta_1+\theta_2+\theta_3-1)}}
{\sqrt{\sigma^2+(1-\theta_3-\theta_2)(x_{mag}^{(sc)})^2}} & = & 1\nonumber \\
\hspace{-.6in}N(\theta_1,\theta_2,\theta_3)H(\theta_3)-\frac{x_{mag}^{(sc)}\sqrt{N(\theta_1,\theta_2,\theta_3)^2(\alpha^2
-D(\theta_1,\theta_2,\theta_3))+2N(\theta_1,\theta_2,\theta_3) S(\theta_1,\theta_2,\theta_3)-(\theta_1+\theta_2+\theta_3-1)}}
{\sqrt{\sigma^2+(1-\theta_3-\theta_2)(x_{mag}^{(sc)})^2}} & = & 1\nonumber \\
F(\theta_1)N(\theta_1,\theta_2,\theta_3) & = & 1.\nonumber \\\label{eq:mainthmc1c2c3}
\end{eqnarray}
Then the concentrating points of $\nu_{dep}$, $\|\w_{dep}\|_2$, and $\xi_{prim}^{(dep)}(\sigma,\g,\h,x_{mag},r_{socp})$ can be determined as
\begin{equation*}
E\nu_{dep}  =  N(\widehat{\theta_1},\widehat{\theta_2},\widehat{\theta_3})
\end{equation*}
\begin{equation*}
\hspace{-.8in}E\|\w_{dep}\|_2  =  \sigma\frac{\sqrt{N(\widehat{\theta_1},\widehat{\theta_2},\widehat{\theta_3})^2(\alpha(1-\widehat{\theta_3}-\widehat{\theta_2})\frac{(x_{mag}^{(sc)})^2}{\sigma^2}
+
D(\widehat{\theta_1},\widehat{\theta_2},\widehat{\theta_3}))-2N(\widehat{\theta_1},\widehat{\theta_2},\widehat{\theta_3}) S(\widehat{\theta_1},\widehat{\theta_2},\widehat{\theta_3})+(\widehat{\theta_1}+\widehat{\theta_2}+\widehat{\theta_3}-1)}}
{\sqrt{N(\widehat{\theta_1},\widehat{\theta_2},\widehat{\theta_3})^2(\alpha
-
D(\widehat{\theta_1},\widehat{\theta_2},\widehat{\theta_3}))+2N(\widehat{\theta_1},\widehat{\theta_2},\widehat{\theta_3}) S(\widehat{\theta_1},\widehat{\theta_2},\widehat{\theta_3})-(\widehat{\theta_1}+\widehat{\theta_2}+\widehat{\theta_3}-1)}}
\end{equation*}
\begin{multline}
\hspace{-.9in}\lim_{n\rightarrow\infty}\frac{E\xi_{prim}^{(dep)}(\sigma,\g,\h,x_{mag},r_{socp})}{\sqrt{n}}  =  \sigma
\frac{\sqrt{N(\widehat{\theta_1},\widehat{\theta_2},\widehat{\theta_3})^2(\alpha
-
D(\widehat{\theta_1},\widehat{\theta_2},\widehat{\theta_3}))+2N(\widehat{\theta_1},\widehat{\theta_2},\widehat{\theta_3}) S(\widehat{\theta_1},\widehat{\theta_2},\widehat{\theta_3})-(\widehat{\theta_1}+\widehat{\theta_2}+\widehat{\theta_3}-1)}}
{\sqrt{1+(1-\widehat{\theta_2}-\widehat{\theta_3})\frac{(x_{mag}^{(sc)})^2}{\sigma^2}}}\\
-N(\widehat{\theta_1},\widehat{\theta_2},\widehat{\theta_3})r_{socp}^{(sc)}.\label{eq:mainthmnuwgenxiprim}
\end{multline}
\label{thm:maincomperror}
\end{theorem}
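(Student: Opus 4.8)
The plan is to promote the exact, finite-$n$ and random characterization already assembled above to its deterministic large-$n$ image. For the equal-magnitude $\xtilde$ the optimal dual vector $\lambda^{(dep)}$ splits into the three regions fixed by $c_1,c_2,c_3$, and once those indices are frozen the maximizations over $\nu$ and over the interior multipliers were carried out in closed form: $\nu_{dep}$ is given by (\ref{eq:mainhatnu}), $\|\w_{dep}\|_2$ by (\ref{eq:mainhatw1}), and the triplet $(c_1,c_2,c_3)$ is pinned down by the three break-point conditions (\ref{eq:c1c2c3}). The theorem asserts nothing more than that, as $n\to\infty$, each of these three objects concentrates around the deterministic quantity obtained by replacing every random ingredient by its limiting normalized expectation. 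So the whole proof reduces to three tasks: (i) establishing the concentration of the building blocks, (ii) substituting, and (iii) arguing that the solution of the random system (\ref{eq:c1c2c3}) tracks the solution of the deterministic system (\ref{eq:mainthmc1c2c3}).

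First I would fix the scalings $c_1=(1-\theta_1)n$, $c_2=\theta_2 n$, $c_3=\theta_3 n$ and record the growth rates that keep all normalized constants finite: with $r_{socp}$ of order $\sqrt{n}$ and $x_{mag}$ of order $1/\sqrt{n}$ (so that $x_{mag}^{(sc)}$ is read as the limit of $\sqrt{n}\,x_{mag}$ and the effective-noise term $\sigma^2+(n-c_2-c_3)x_{mag}^2$ stays of order one), one has $\|\g\|_2^2$, $s_{dep}$, $d_{dep}$ of order $n$, while $r_{dep}$, $a_{dep}$, $b_{dep}$ are of order $\sqrt{n}$ and $\nu_{dep}$ of order one. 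I would then invoke the concentration machinery of \cite{StojnicCSetam09,StojnicGenSocp10}: $\|\g\|_2^2/n\to\alpha$ by (\ref{eq:mainexpg}); the partial squared norms and inner products of the order statistics of $\bar{\h}$ concentrate to the integrals in (\ref{eq:mainexpnormh1})--(\ref{eq:mainexpprod1}), which are exactly the functions $D$, $S$ and the sum entering $R$; and the individual break-point entries $\bar{\h}_{c_1}$, $\bar{\h}_{c_2}$, $\bar{\h}_{n-c_3}$ concentrate to the quantile values $F(\theta_1)$, $G(\theta_2)$, $H(\theta_3)$ by (\ref{eq:mainbrpoint}). Each holds with overwhelming probability, so a union bound yields simultaneous concentration, and dividing the definitions (\ref{eq:defdep}) by the appropriate power of $n$ gives $a_{dep}/\sqrt{n}\to A$, $b_{dep}/\sqrt{n}\to B$, $s_{dep}/n\to S$, $d_{dep}/n\to D$, $r_{dep}/\sqrt{n}\to R$.

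With these limits in hand the substitution is mechanical. Cancelling the common factor $n$ between numerator and denominator of (\ref{eq:mainhatnu}) — using $\|\z_{c_1+1:c_2}\|_2^2+\|\z_{n-c_3+1:n}\|_2^2=(c_2-c_1)+c_3=(\theta_1+\theta_2+\theta_3-1)n$ — turns $\nu_{dep}$ into $N(\theta_1,\theta_2,\theta_3)$, so $E\nu_{dep}=N(\widehat\theta_1,\widehat\theta_2,\widehat\theta_3)$. Dividing the first two equations in (\ref{eq:c1c2c3}) by their order-$\sqrt{n}$ normalizing factor and replacing $\bar{\h}_{c_2}$, $\bar{\h}_{n-c_3}$, $\bar{\h}_{c_1}$ by $G$, $H$, $F$ and $\nu$ by $N$ reproduces the first two deterministic break-point equations of (\ref{eq:mainthmc1c2c3}), while the third condition $\bar{\h}_{c_1}\nu_{dep}=1$ becomes $F(\theta_1)N(\theta_1,\theta_2,\theta_3)=1$; their common solution is the concentrating triplet $(\widehat\theta_1,\widehat\theta_2,\widehat\theta_3)$. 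Finally, dividing numerator and denominator inside (\ref{eq:mainhatw1}) by $n$ and inserting the limits gives the stated quotient for $E\|\w_{dep}\|_2$, and feeding the optimal $(\nu_{dep},\lambda^{(dep)})$ back into the objective via (\ref{eq:mainlasso10}), then normalizing by $\sqrt{n}$, yields $\lim E\xi_{prim}^{(dep)}/\sqrt{n}$.

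The step I expect to be the genuine obstacle is task (iii): showing that the root of the \emph{random} system (\ref{eq:c1c2c3}) itself concentrates to the root of the \emph{deterministic} system (\ref{eq:mainthmc1c2c3}), not merely that each coefficient concentrates. This is a transfer of concentration through a root-finding map and requires that the map from the (concentrating) coefficients to the triplet $(\theta_1,\theta_2,\theta_3)$ be single-valued and continuous on the admissible region, i.e. that the break-point conditions determine the thresholds uniquely and stably. I would establish this exactly as in \cite{StojnicCSetam09,StojnicGenSocp10}, exploiting the monotonicity of the sorted entries of $\bar{\h}$ (which forces each of $c_1$, $c_2$, $n-c_3$ to be crossed exactly once) together with the exponential concentration that makes the convergence uniform over the relevant range of $\theta$'s, so that a continuity and sandwiching argument pins the random thresholds to their deterministic limits. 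A secondary bookkeeping point is to keep the scalings of $r_{socp}$ and $x_{mag}$ straight throughout, since it is precisely the regime with $x_{mag}$ of order $1/\sqrt{n}$ that renders each normalized constant $S,D,R,A,B$ finite and nonzero.
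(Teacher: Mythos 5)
Your proposal is correct and takes essentially the same route as the paper: the paper's proof of this theorem is just a pointer back to the Section \ref{sec:probdepframe} derivation --- the stationarity conditions, the closed-form $\nu$ in (\ref{eq:mainhatnu}), the error norm (\ref{eq:mainhatw1}), the breakpoint system (\ref{eq:c1c2c3}), and the concentration facts (\ref{eq:mainexpg})--(\ref{eq:mainbrpoint}) --- followed by exactly the normalization and substitution you describe in tasks (i) and (ii). Your task (iii) (passing concentration through the root of the random system to the root of the deterministic system) and your scaling remark (that $x_{mag}$ must be of order $1/\sqrt{n}$ for $S,D,R,A,B$ to be finite, i.e. $x_{mag}^{(sc)}=\lim_{n\rightarrow\infty}\sqrt{n}\,x_{mag}$ rather than the paper's literal $\lim_{n\rightarrow\infty}x_{mag}/\sqrt{n}$) are points the paper leaves implicit by deferring to \cite{StojnicCSetam09,StojnicGenSocp10}, so your write-up is, if anything, more careful than the published argument.
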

\begin{proof}
Follows from Theorem \ref{thm:mainlasso} based on the discussion presented above and a combination of (\ref{eq:defdep}), (\ref{eq:c1c2c3}), (\ref{eq:mainhatnu}), and (\ref{eq:mainhatw1}).
\end{proof}

The results from the above theorem can be used to compute parameters of interest in our derivation for particular values of $\beta_w$, $\alpha$, $\sigma$, $x_{mag}$, and $r_{socp}$. In the following subsection we will present a collection of such results.

%%%%%%%%%%%%%%%%%%%%%%%%%%%%%%%%%%%%%%%%%%%%%%%%%%%%%%%%%%%%%%%%%%%%%%%%%%%%%%%%%%%%%%%
\subsubsection{Theoretical predictions} \label{sec:unsignedtheorypred}
%%%%%%%%%%%%%%%%%%%%%%%%%%%%%%%%%%%%%%%%%%%%%%%%%%%%%%%%%%%%%%%%%%%%%%%%%%%%%%%%%%%%%%%

In this subsection we present the theoretical predictions one can get based on the result of the previous sections. We will split the presentation of the results into several parts.

\textbf{\underline{\emph{1) $\frac{E\|\w_{dep}\|_2}{\sigma}=\frac{E\|\w_{socp}\|_2}{\sigma}$ as a function of $x_{mag}^{(sc)}$}}}

To present this portion (as well as several others that will follow) of theoretical results we will look at three regimes: 1) low $\alpha$-, medium $\alpha$-, and high $\alpha$-regime. For each of the regimes we will show the theoretical results for $\frac{E\|\w_{dep}\|_2}{\sigma}=\frac{E\|\w_{socp}\|_2}{\sigma}$ as a function of $x_{mag}^{(sc)}$. We will take $\alpha=0.3$ as a representative of the low $\alpha$-regime, $\alpha=0.5$ as a representative of the medium $\alpha$-regime, and $\alpha=0.7$ as a representative of the high $\alpha$-regime. We will consider $r_{socp}=r_{socp}^{(opt)}=\sigma\sqrt{\frac{\alpha n}{1+\rho^2}}$. For each of the $\alpha$-regimes we will look at two different sub-regimes: low $\beta_w$- and high $\beta_w$-regime (which based on results from \cite{StojnicGenSocp10} is equivalent to low $\rho$- and high $\rho$-regimes). For each of these two sub-regimes $\beta_w$ will be selected based on the curves obtained in \cite{StojnicGenSocp10} (or those obtained in \cite{StojnicGenLasso10}) in the following way. In the low $\beta_w$ sub-regime we will set $\rho=2$ and $r_{socp}=r_{socp}^{(opt)}=\sigma\sqrt{\frac{\alpha n}{5}}$ whereas in the high $\beta_w$ sub-regime we will set $\rho=3$ and $r_{socp}=r_{socp}^{(opt)}=\sigma\sqrt{\frac{\alpha n}{10}}$. At the same time from \cite{StojnicGenSocp10} we will have $r_{socp}=r_{socp}^{(opt)}=\sigma\sqrt{(\alpha-\alpha_w)n}$ where $\alpha_w$ and $\beta_w$ are such that (\ref{eq:fundl1}) holds (we also recall on \cite{StojnicGenSocp10} where it was reasoned that the low $\beta$ regime is selected so that the pair $(\alpha,\beta_w)$ is well below the fundamental characterization (\ref{eq:fundl1}) whereas the high $\beta$ regime is selected so that the pair $(\alpha,\beta_w)$ is closer to the fundamental characterization (\ref{eq:fundl1})). The values for $\frac{E\|\w_{dep}\|_2}{\sigma}=\frac{E\|\w_{socp}\|_2}{\sigma}$ one can then get through the results of Theorem \ref{thm:maincomperror} for such $(\alpha,\beta_w)$ pairs, $r_{socp}^{(opt)}$ are shown in Figure \ref{fig:errorvarx} as functions of $x_{mag}^{(sc)}$.
\begin{figure}[htb]
\begin{minipage}[b]{.33\linewidth}
\centering
\centerline{\epsfig{figure=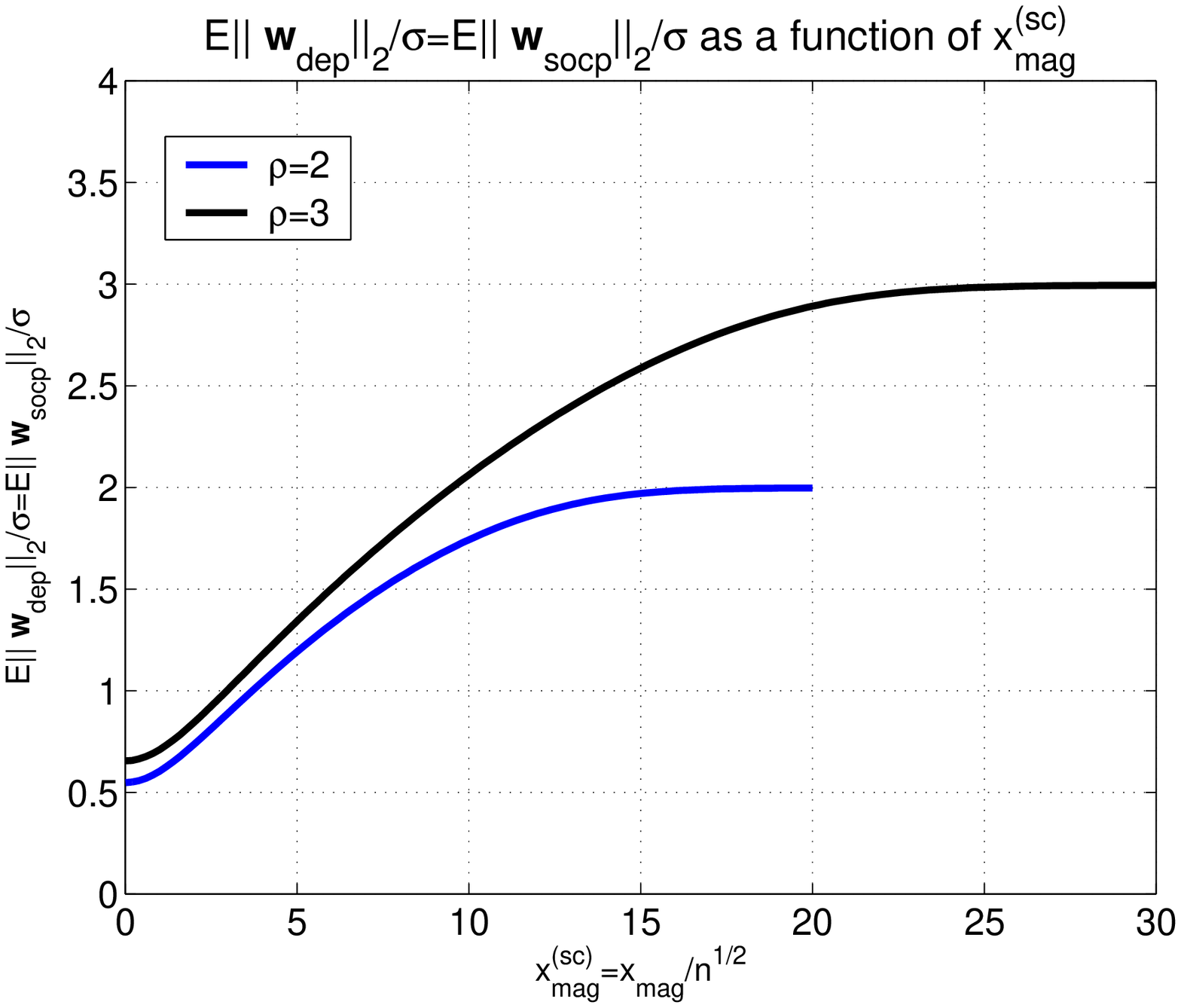,width=5.3cm,height=5.3cm}}
\end{minipage}
\begin{minipage}[b]{.33\linewidth}
\centering
\centerline{\epsfig{figure=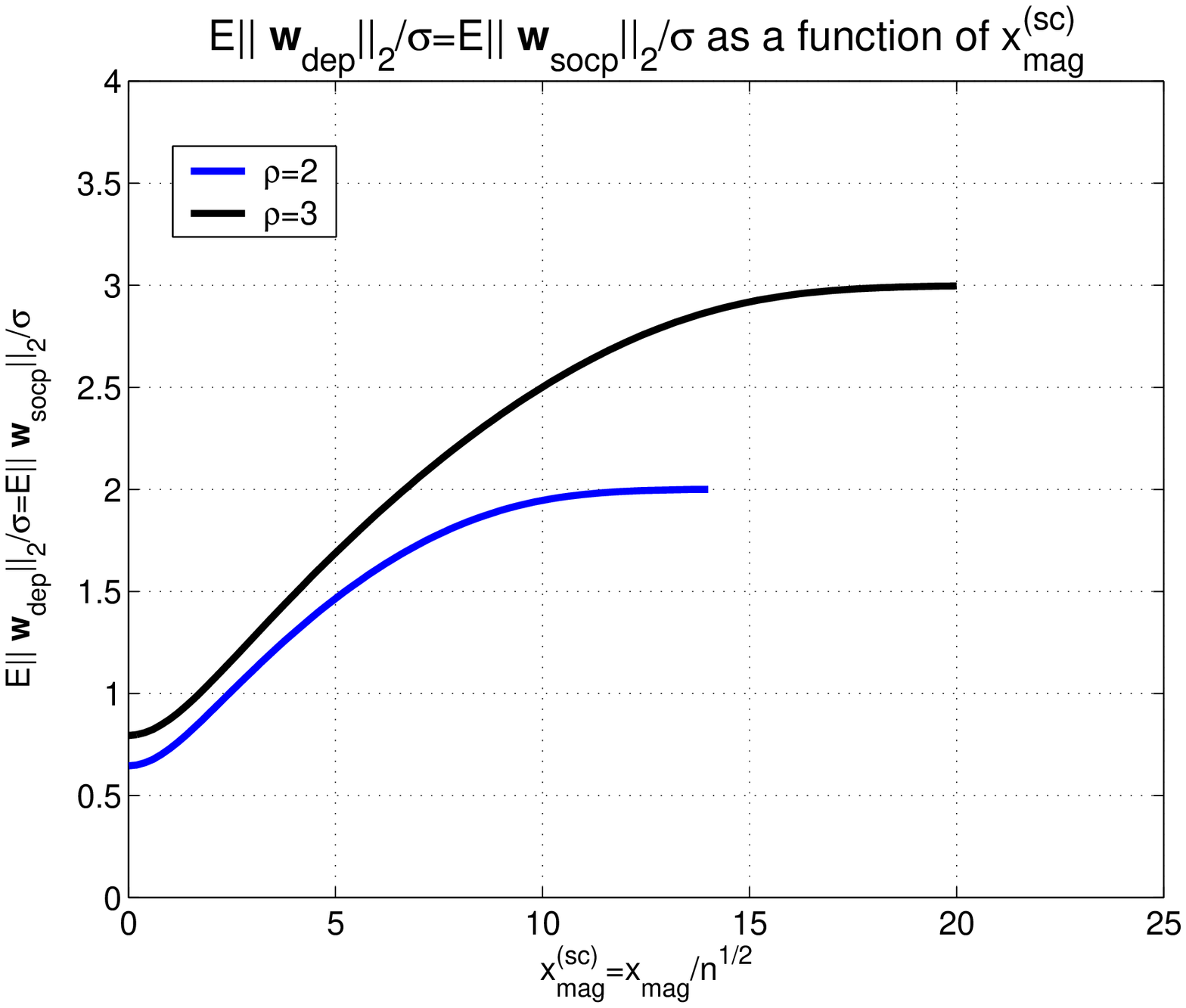,width=5.3cm,height=5.3cm}}
\end{minipage}
\begin{minipage}[b]{.33\linewidth}
\centering
\centerline{\epsfig{figure=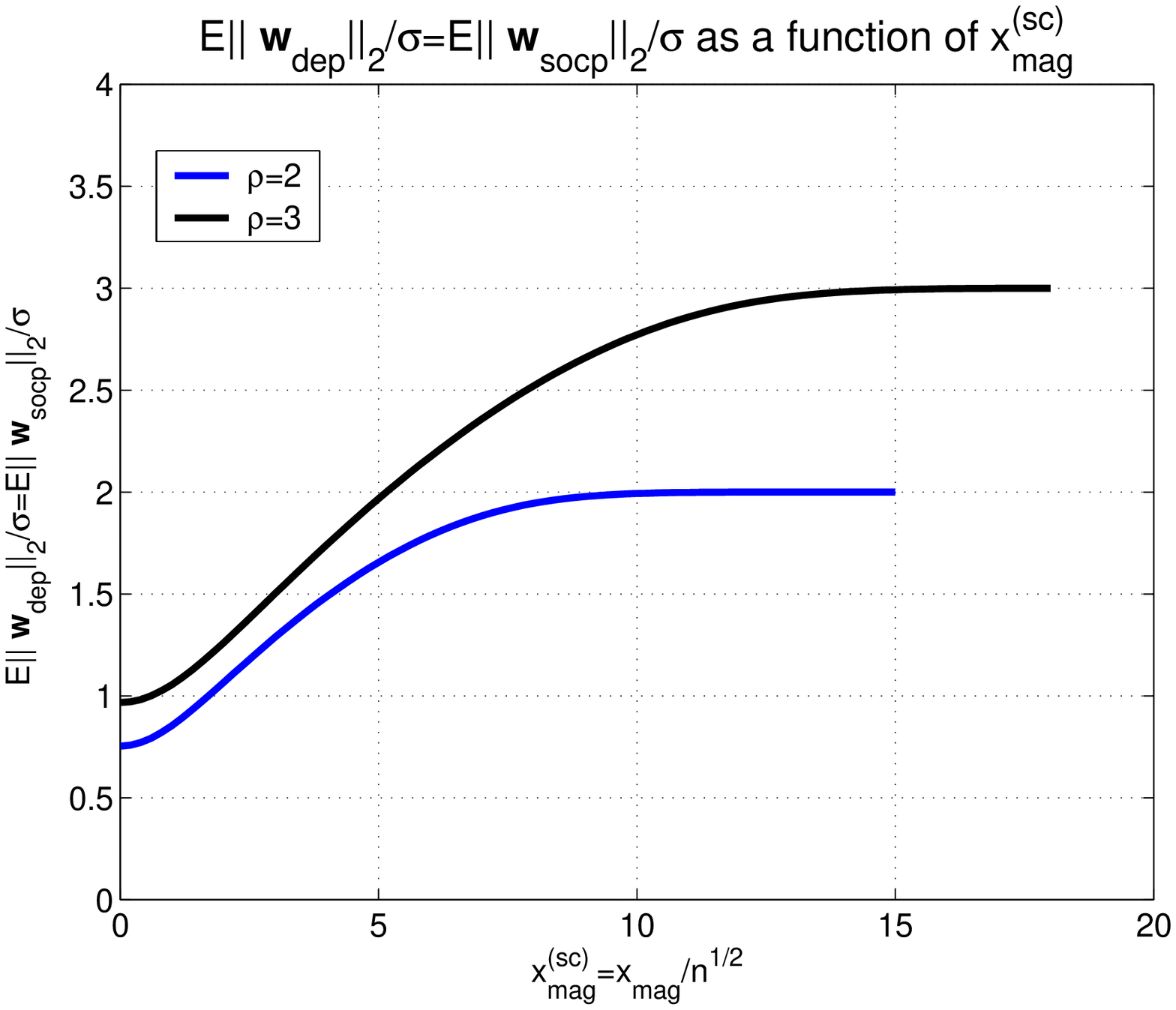,width=5.3cm,height=5.3cm}}
\end{minipage}
\caption{$\frac{E\|\w_{dep}\|_2}{\sigma}=\frac{E\|\w_{socp}\|_2}{\sigma}$ as a function of $x_{mag}^{(sc)}$; $r_{socp}=\sqrt{\frac{\alpha n}{1+\rho^2}}$; left --- $\alpha=0.3$, center --- $\alpha=0.5$, right --- $\alpha=0.7$}
\label{fig:errorvarx}
\end{figure}
As can be seen from Figure \ref{fig:errorvarx}, the values of $\frac{E\|\w_{dep}\|_2}{\sigma}=\frac{E\|\w_{socp}\|_2}{\sigma}$ converge to $\rho$ as $x_{mag}^{(sc)}$ increases. This is of course in agreement with \cite{StojnicGenSocp10} where it was demonstrated that for $r_{socp}^{(opt)}$ one has $\rho=\frac{\|\w_{socp}\|_2}{\sigma}$ with overwhelming probability. Another interesting observation one can make is that the convergence is ``faster" (or happens for smaller $x_{mag}^{(sc)}$) for larger $\alpha$.

\textbf{\underline{\emph{2) $\frac{Ef_{obj}}{\sqrt{n}}=\frac{E\xi_{prim}^{(dep)}(\sigma,\g,\h,x_{mag},r_{socp})}{\sqrt{n}}$ as a function of $x_{mag}^{(sc)}$}}}

Similarly to what was discussed above (and is related to $\|\w_{socp}\|_2$ and $\|\w_{dep}\|_2$) one can determine the concentrating points of $f_{obj}$ and $\xi_{prim}^{(dep)}$ also as functions of $x_{mag}^{(sc)}$. To present these results we restrict ourselves to the medium $\alpha$-regime, or in other words to $\alpha=0.5$. As above we again choose $r_{socp}=r_{socp}^{(opt)}=\sigma\sqrt{\frac{\alpha n}{1+\rho^2}}$ and consider low $\rho=2$- and high $\rho=3$- regime.
The obtained results are shown in Figure \ref{fig:objvarx}.
\begin{figure}[htb]
\begin{minipage}[b]{1\linewidth}
\centering
\centerline{\epsfig{figure=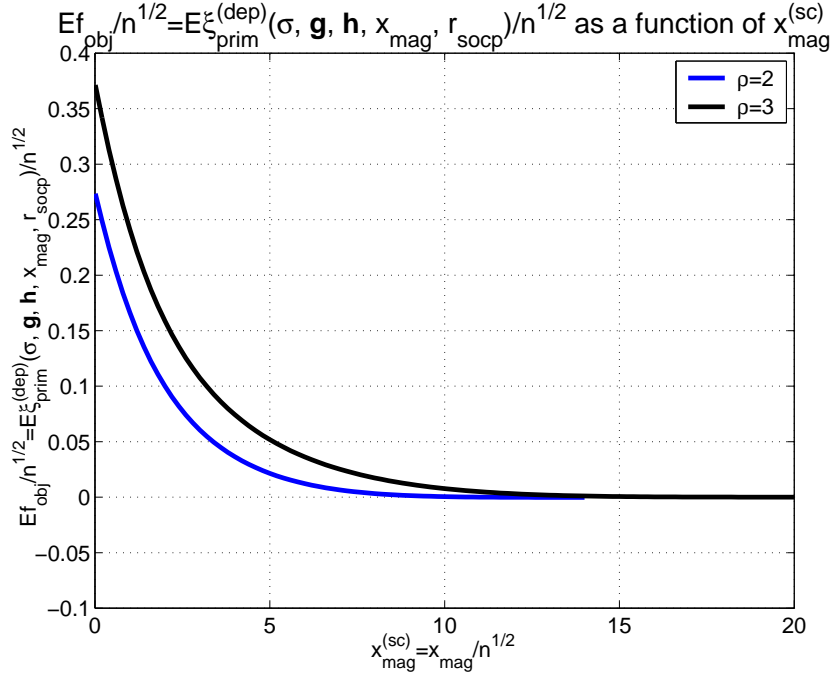,width=11cm,height=9cm}}
\end{minipage}
%\begin{minipage}[b]{.33\linewidth}
%\centering
%\centerline{\epsfig{figure=PrDepSocpVarxm05.eps,width=5.3cm,height=5.3cm}}
%\end{minipage}
%\begin{minipage}[b]{.33\linewidth}
%\centering
%\centerline{\epsfig{figure=PrDepSocpVarxm07.eps,width=5.3cm,height=5.3cm}}
%\end{minipage}
\caption{$\frac{Ef_{obj}}{\sqrt{n}}=\frac{E\xi_{prim}^{(dep)}(\sigma,\g,\h,x_{mag},r_{socp})}{\sqrt{n}}$ as a function of $x_{mag}^{(sc)}$; $r_{socp}=\sqrt{\frac{\alpha n}{1+\rho^2}}$; $\alpha=0.5$}
\label{fig:objvarx}
\end{figure}
As can be seen from Figure \ref{fig:objvarx} $\frac{Ef_{obj}}{\sqrt{n}}$ is larger for larger $\rho$.

\textbf{\underline{\emph{3) $\frac{E\|\w_{dep}\|_2}{\sigma}=\frac{E\|\w_{socp}\|_2}{\sigma}$ as a function of $x_{mag}^{(sc)}$; varying $r_{socp}$}}}

Another interesting set of results relates to possible variations in the $r_{socp}$ that can be used in (\ref{eq:socp}). The results that we presented above assume an optimal choice for $r_{socp}$ (in a sense defined in \cite{StojnicGenSocp10}). Namely, they assume that for a fixed pair $(\alpha,\beta_w)$ one chooses $r_{socp}=r_{socp}^{(opt)}=\sigma\sqrt{(\alpha-\alpha_w)n}$ where $\alpha_w$ and $\beta_w$ are such that (\ref{eq:fundl1}) holds. In the worst-case scenario (or in the generic scenario as we referred to it in \cite{StojnicGenSocp10}) one has that choice $r_{socp}^{(opt)}$ offers the minimal norm-2 of the error vector. However, such a scenario assumes particular $\xtilde$'s which leaves a possibility that for a wide range of other $\xtilde$'s the performance of the SOCP from (\ref{eq:socp}) in the $\ell_2$ norm of the error vector sense can be more favorable. Of course as shown in Figure (\ref{fig:errorvarx}) this indeed happens to be the case. On the other hand that also leaves an option that one can possibly choose a different $r_{socp}$ and get say a smaller norm-2 of the error vector for various different $\xtilde$. Below we present a few results in this direction.

We will consider again only the medium or $\alpha=0.5$ regime. For two different values of $\rho$, $r_{socp}=r_{socp}^{(opt)}$, and $\beta_w$ we presented the results for $\frac{E\|\w_{dep}\|_2}{\sigma}=\frac{E\|\w_{socp}\|_2}{\sigma}$ in Figure \ref{fig:errorvarx}. In addition to that we now in Figure \ref{fig:errorvarxvarrho} show similar results one can get through Theorem \ref{thm:maincomperror} for two different choices of $r_{socp}$. To be more precise, for $\rho=2$ we choose the same $\alpha$ and $\beta_w$ as in Figure and only vary $r_{socp}$ over $\{\sigma\sqrt{0.05\alpha n},\sigma\sqrt{0.2\alpha n}\sigma\sqrt{0.6\alpha n}\}$. Clearly, choice $\sigma\sqrt{0.05\alpha n}$ is smaller than $r_{socp}^{(opt)}=\sigma\sqrt{0.2\alpha n}$ whereas choice
$\sigma\sqrt{0.6\alpha n}$ is larger than $r_{socp}^{(opt)}=\sigma\sqrt{0.2\alpha n}$. On the other hand for $\rho=3$ we choose the same $\alpha$ and $\beta_w$ as we have chose for $\rho=3$ in Figure \ref{fig:errorvarx} and vary $r_{socp}$ but this time over $\{\sigma\sqrt{0.05\alpha n},\sigma\sqrt{0.1\alpha n}\sigma\sqrt{0.5\alpha n}\}$. Again, clearly, choice $\sigma\sqrt{0.05\alpha n}$ is smaller than $r_{socp}^{(opt)}=\sigma\sqrt{0.1\alpha n}$ whereas choice
$\sigma\sqrt{0.6\alpha n}$ is larger than $r_{socp}^{(opt)}=\sigma\sqrt{0.2\alpha n}$. It is rather obvious but we mention for the completeness that the middle $r_{socp}$ choices for both, $\rho=2$ and $\rho=3$, cases correspond to the center plot in Figure \ref{fig:errorvarx}.
\begin{figure}[htb]
\begin{minipage}[b]{.5\linewidth}
\centering
\centerline{\epsfig{figure=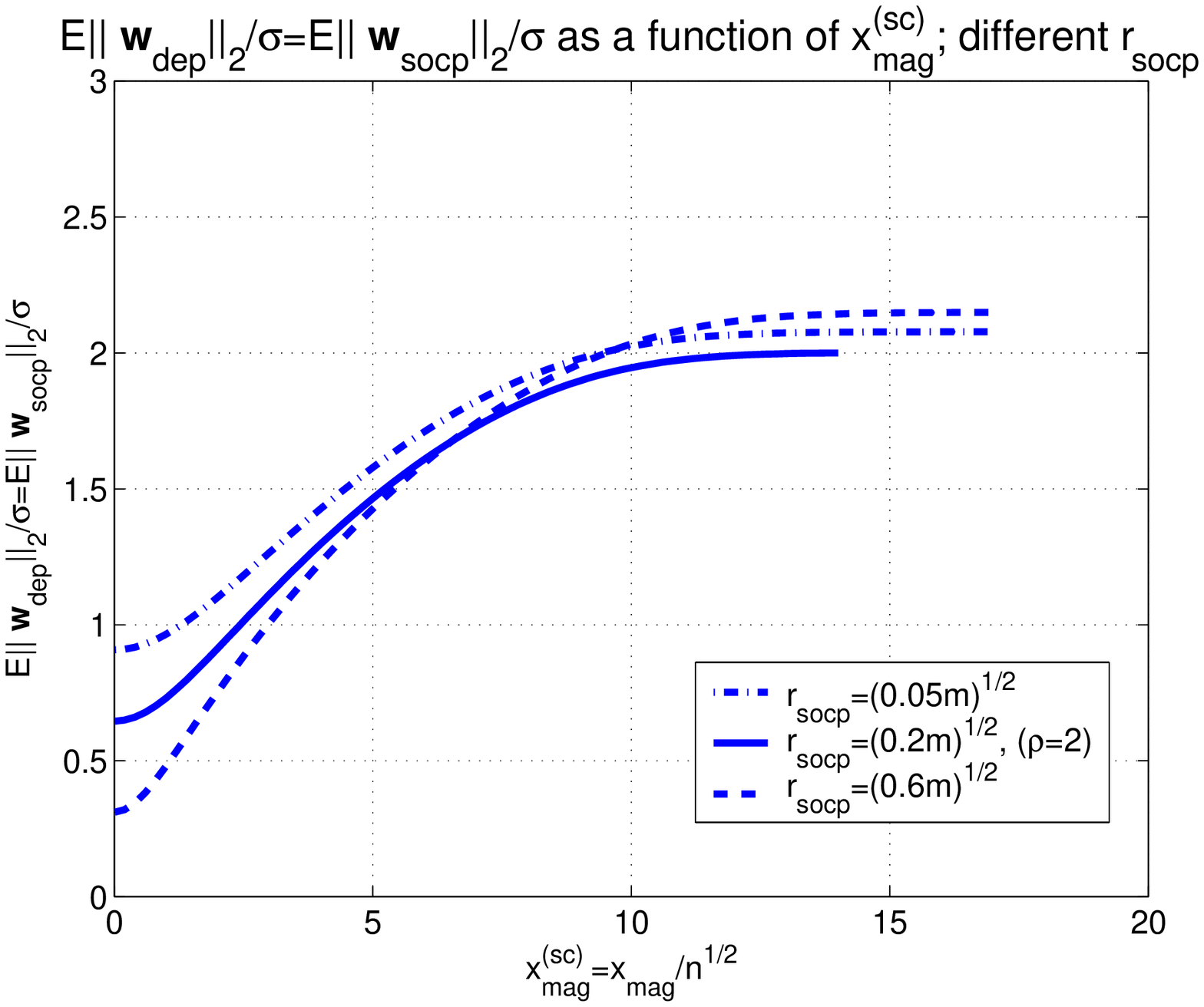,width=8cm,height=7cm}}
\end{minipage}
\begin{minipage}[b]{.5\linewidth}
\centering
\centerline{\epsfig{figure=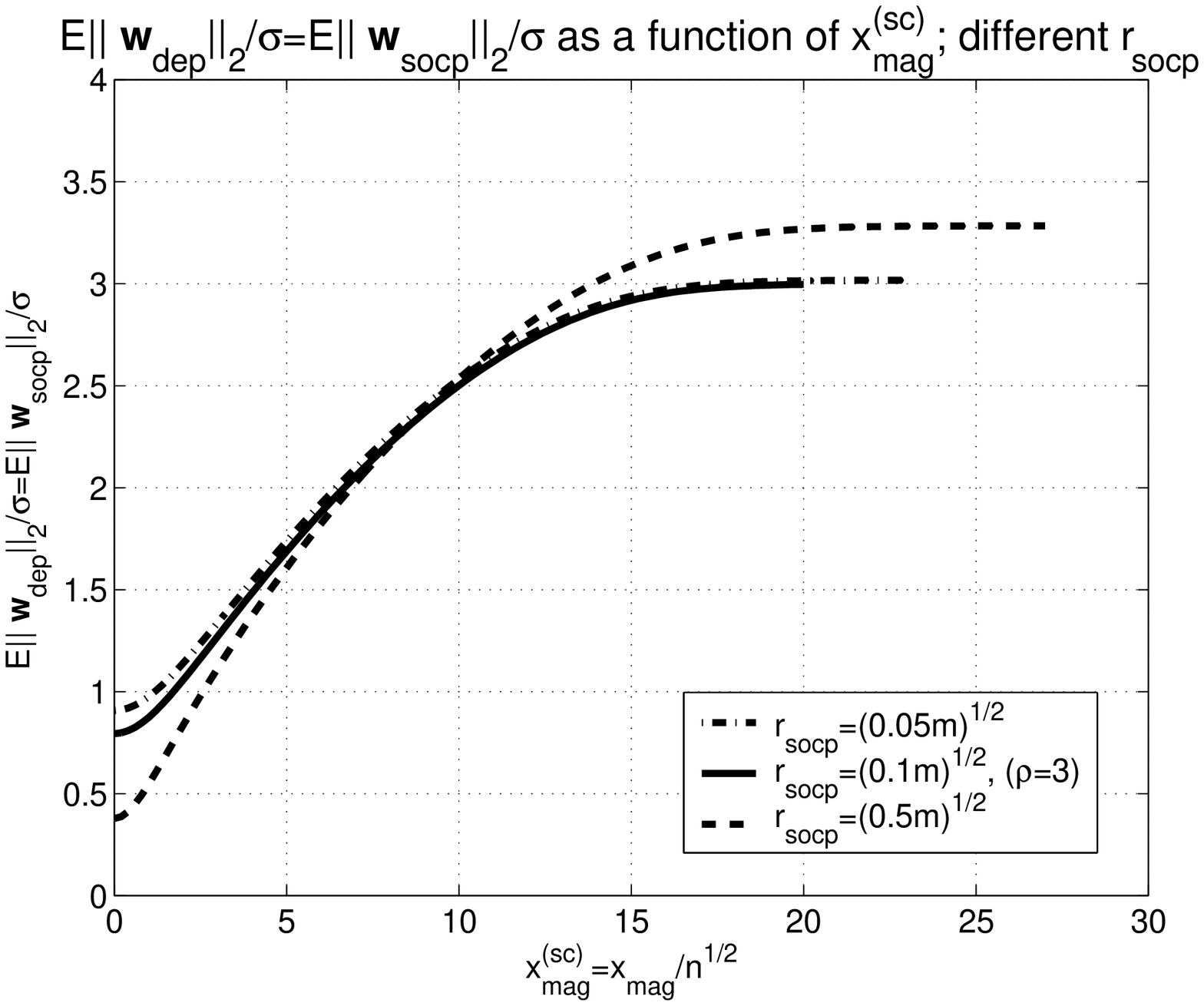,width=8cm,height=7cm}}
\end{minipage}
%\begin{minipage}[b]{.33\linewidth}
%\centering
%\centerline{\epsfig{figure=PrDepSocpVarxm07.eps,width=5.3cm,height=5.3cm}}
%\end{minipage}
\caption{$\frac{E\|\w_{dep}\|_2}{\sigma}=\frac{E\|\w_{socp}\|_2}{\sigma}$ as a function of $x_{mag}^{(sc)}$ for different $r_{socp}$; left --- $\rho=2$, $r_{socp}\in\{\sigma\sqrt{0.05\alpha n},\sigma\sqrt{0.2\alpha n}\sigma\sqrt{0.6\alpha n}\}$; right --- $\rho=3$, $r_{socp}\in\{\sigma\sqrt{0.05\alpha n},\sigma\sqrt{0.1\alpha n}\sigma\sqrt{0.5\alpha n}\}$}
\label{fig:errorvarxvarrho}
\end{figure}
We make two interesting observations related to the results presented in Figure \ref{fig:errorvarxvarrho}. The first one is that Figure \ref{fig:errorvarxvarrho} suggests that if $r_{socp}$ is smaller than $r_{socp}^{(opt)}$ then $\frac{E\|\w_{socp}\|_2}{\sigma}$ could be larger than the one that can be obtained for $r_{socp}^{(opt)}$. This actually happens to be the case. A reasoning similar to the one presented in Section 2.4.2 in \cite{StojnicGenSocp10} can show that this is indeed true. Moreover, not only is it true for the $\xtilde$ considered in Theorem \ref{thm:maincomperror} but it is actually true for any $\xtilde$. We skip the details of this simple exercise, though. The second observation is that if $r_{socp}$ is larger than $r_{socp}^{(opt)}$ then for certain $\xtilde$ (but of course not for all of them and certainly not for the worst-case or the generic one) $\frac{E\|\w_{socp}\|_2}{\sigma}$ could be smaller than the one that can be obtained for $r_{socp}^{(opt)}$. This of course suggests that a choice of $r_{socp}$ larger than $r_{socp}^{(opt)}$ could be more favorable in certain applications and for a particular measure of performance. However, if one has no a priori available knowledge about $\xtilde$ then adapting $r_{socp}$ beyond $r_{socp}^{(opt)}$ would be hard.

\textbf{\underline{\emph{4) $\frac{Ef_{obj}}{\sqrt{n}}=\frac{E\xi_{prim}^{(dep)}(\sigma,\g,\h,x_{mag},r_{socp})}{\sqrt{n}}$ as a function of $x_{mag}^{(sc)}$; varying $r_{socp}$}}}

Similarly to what was done above in part 2) one can also determine the theoretical predictions for $\frac{Ef_{obj}}{\sqrt{n}}=\frac{E\xi_{prim}^{(dep)}}{\sqrt{n}}$ for a varying $r_{socp}$. As in parts 2) and 3) above, we restrict our attention only to the medium $\alpha=0.5$ regime. We also assume exactly the same scenarios as in part 3). The obtained results are shown in Figure \ref{fig:objvarxvarrho}.
\begin{figure}[htb]
\begin{minipage}[b]{.5\linewidth}
\centering
\centerline{\epsfig{figure=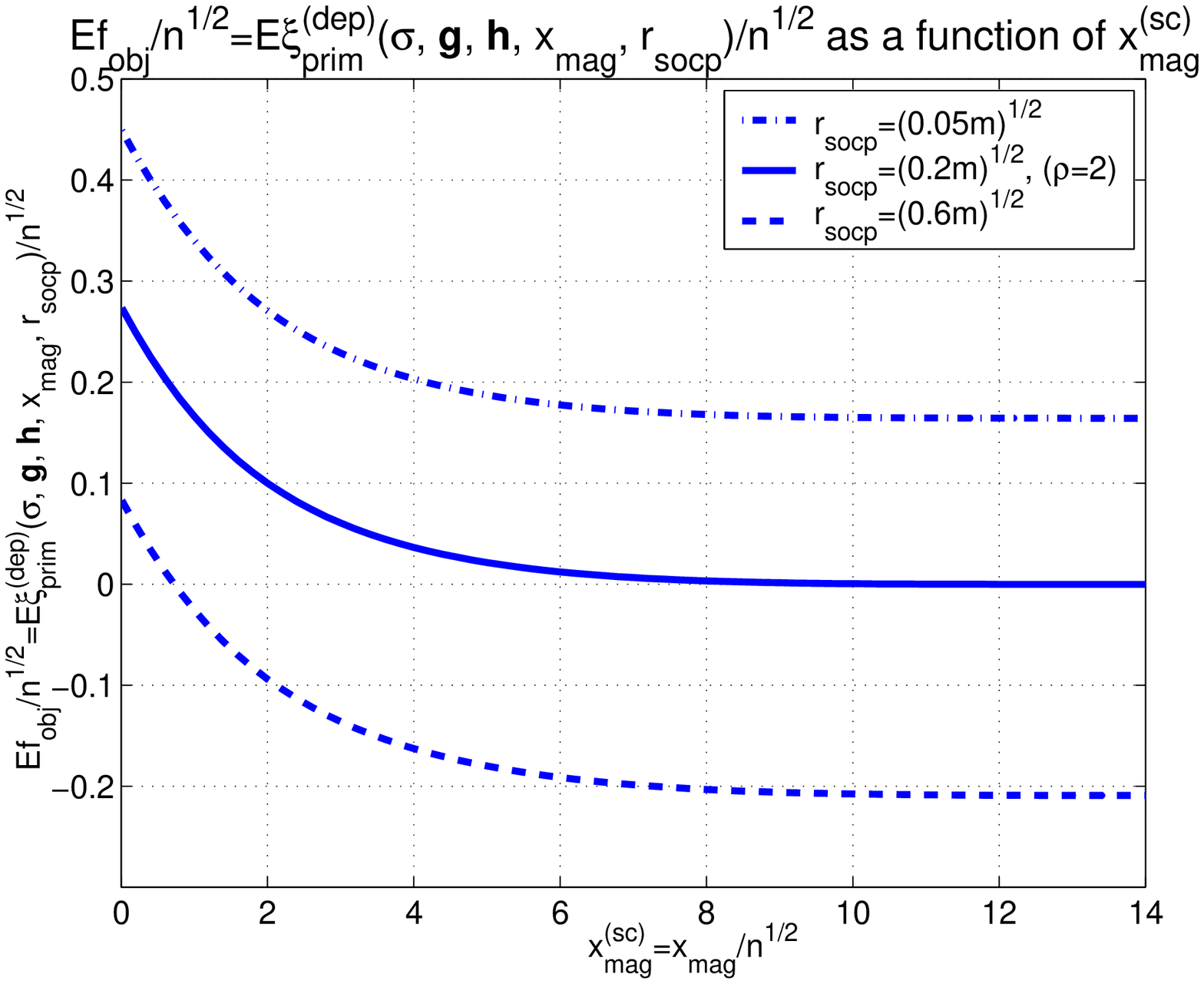,width=8cm,height=7cm}}
\end{minipage}
\begin{minipage}[b]{.5\linewidth}
\centering
\centerline{\epsfig{figure=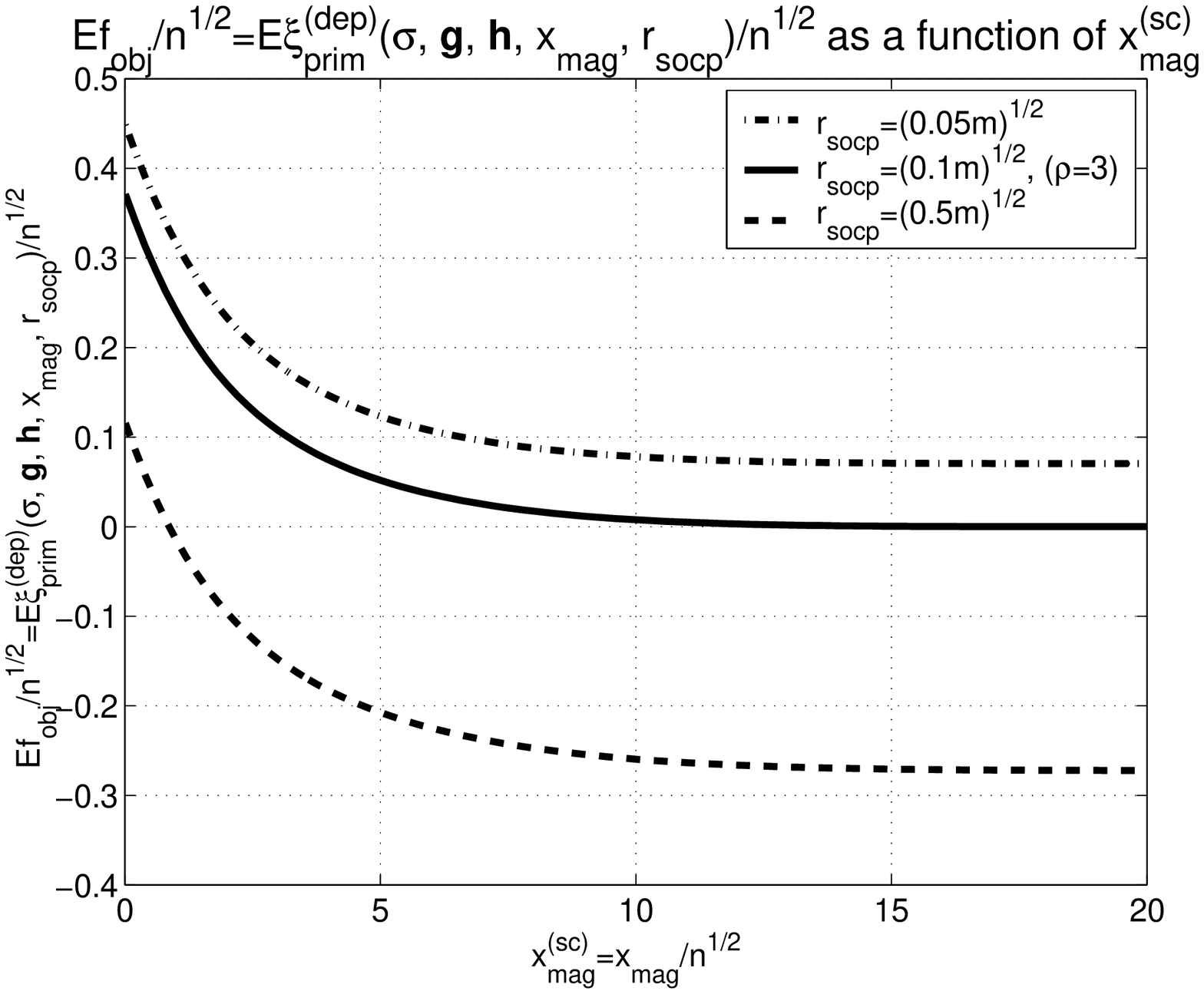,width=8cm,height=7cm}}
\end{minipage}
%\begin{minipage}[b]{.33\linewidth}
%\centering
%\centerline{\epsfig{figure=PrDepSocpVarxm07.eps,width=5.3cm,height=5.3cm}}
%\end{minipage}
\caption{$\frac{Ef_{obj}}{\sqrt{n}}=\frac{E\xi_{prim}^{(dep)}(\sigma,\g,\h,x_{mag},r_{socp})}{\sqrt{n}}$ as a function of $x_{mag}^{(sc)}$ for different $r_{socp}$; left --- $\rho=2$, $r_{socp}\in\{\sigma\sqrt{0.05\alpha n},\sigma\sqrt{0.2\alpha n}\sigma\sqrt{0.6\alpha n}\}$; right --- $\rho=3$, $r_{socp}\in\{\sigma\sqrt{0.05\alpha n},\sigma\sqrt{0.1\alpha n}\sigma\sqrt{0.5\alpha n}\}$}
\label{fig:objvarxvarrho}
\end{figure}
As in part 2) Figure \ref{fig:objvarxvarrho} shows that $\frac{Ef_{obj}}{\sqrt{n}}$ is larger for larger $\rho$. On the other hand it also shows that
$\frac{Ef_{obj}}{\sqrt{n}}$ decreases as $r_{socp}$ increases. This also follows rather trivially by the use of arguments from Section 2.4.2. We skip this easy exercise as well.

We conducted massive numerical experiments and found that the results one can get through them are in a firm agreement (as they should be) with what the presented theory predicts. In the next subsection we present a sample of the results obtained through the conducted numerical experiments.

%%%%%%%%%%%%%%%%%%%%%%%%%%%%%%%%%%%%%%%%%%%%%%%%%%%%%%%%%%%%%%%%%%%%%%%%%%%%%%%%%%%%%%%
\subsubsection{Numerical experiments} \label{sec:unsignednumexp}
%%%%%%%%%%%%%%%%%%%%%%%%%%%%%%%%%%%%%%%%%%%%%%%%%%%%%%%%%%%%%%%%%%%%%%%%%%%%%%%%%%%%%%%

Similarly to what was done in the previous subsection, we will split the presentation of the numerical results in several parts. The numerical results that we will present below are obtained by running the SOCP from (\ref{eq:socp}). To demonstrate the precision of our technique we will in parallel show the results obtained by running (\ref{eq:mainlasso3ver}). To make scaling simpler in all our numerical experiments we set $\sigma=1$.

\textbf{\underline{\emph{1) $\frac{E\|\w_{dep}\|_2}{\sigma}$ and $\frac{E\|\w_{socp}\|_2}{\sigma}$ as functions of $x_{mag}^{(sc)}$}}}

In this part we will show the numerical results that correspond to the theoretical ones given in part 1) in the previous subsection. To shorten a bit the exposition we will restrict our attention again only on the medium or $\alpha=0.5$ regime. We then set all other parameters as in the center plot of Figure \ref{fig:errorvarx} (these parameters are of course different depending if we are considering $\rho=2$ or $\rho=3$; below we will consider both of them).

\underline{\emph{a) Low $(\alpha,\beta_w)$ regime, $\rho=2$}}

We first consider the $\rho=2$ scenario. As mentioned above in our experiments we set $\alpha=0.5$, $r_{socp}=\sqrt{\frac{\alpha n}{1+\rho^2}}=\sqrt{0.2\alpha n}$, and (as shown in \cite{StojnicGenSocp10}) $\beta_w$ such that $(\alpha_w,\beta_w)$ satisfy (\ref{eq:fundl1}) and $\alpha_w=\frac{\rho^2}{1+\rho^2}\alpha$. We then ran (\ref{eq:socp}) $300$ times with $n=800$ for various $x_{mag}^{(sc)}$. In parallel we ran (\ref{eq:mainlasso3ver}) for the exact same parameters with only one difference; namely we ran (\ref{eq:mainlasso3ver}) with $n=2000$. The obtained results for $\frac{E\|\w_{socp}\|_2}{\sigma}$ and $\frac{E\|\w_{dep}\|_2}{\sigma}$ are shown on the left-hand and right-hand side of Figure \ref{fig:errorvarxsim}, respectively (given our assumption that $\sigma=1$ $\frac{E\|\w_{dep}\|_2}{\sigma}$ and $\frac{E\|\w_{socp}\|_2}{\sigma}$ are of course just $E\|\w_{dep}\|_2$ and $E\|\w_{socp}\|_2$, respectively). We also show in Figure \ref{fig:errorvarxsim} the corresponding theoretical predictions obtained in the previous subsection.
\begin{figure}[htb]
\begin{minipage}[b]{.5\linewidth}
\centering
\centerline{\epsfig{figure=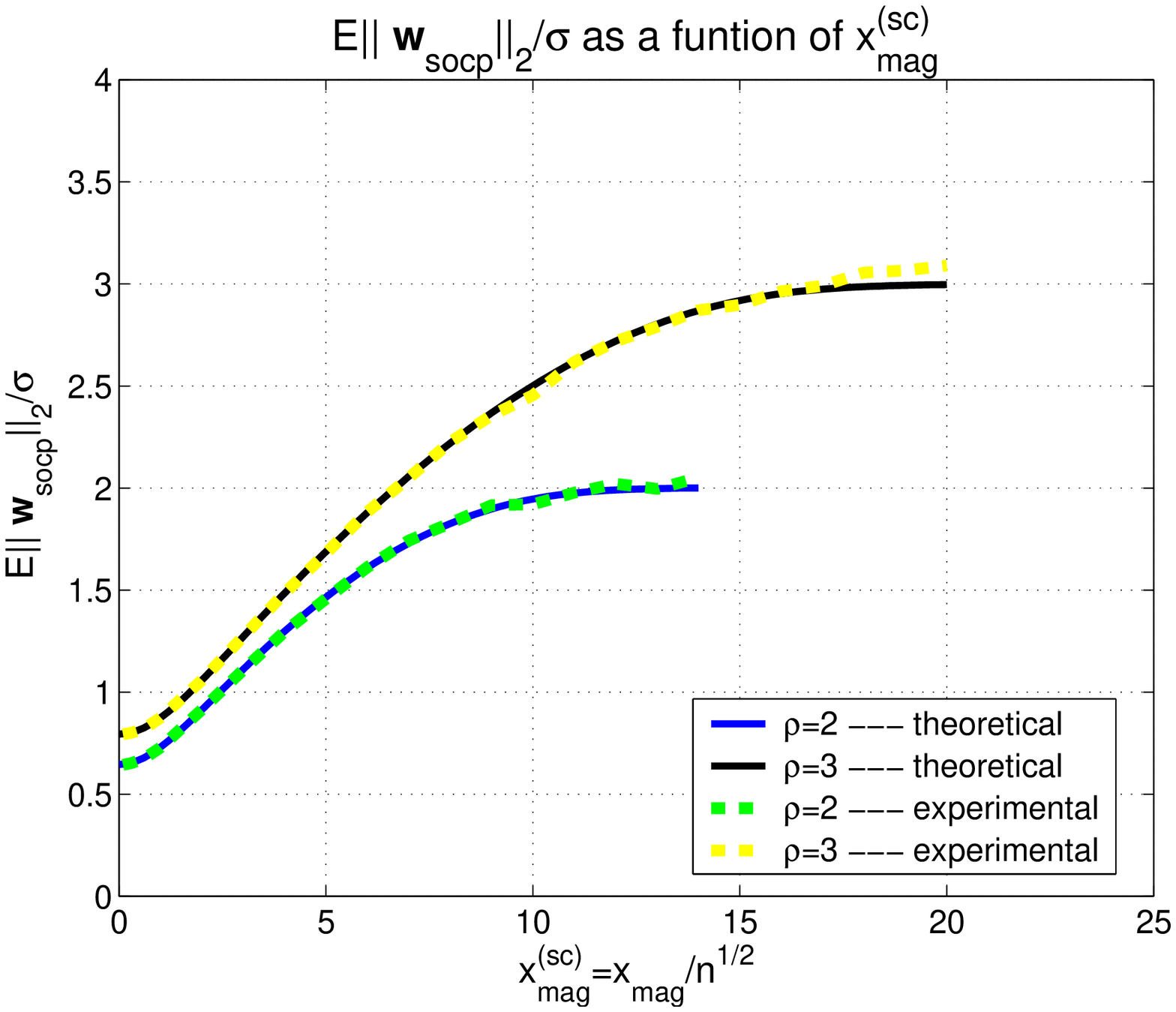,width=8cm,height=7cm}}
\end{minipage}
\begin{minipage}[b]{.5\linewidth}
\centering
\centerline{\epsfig{figure=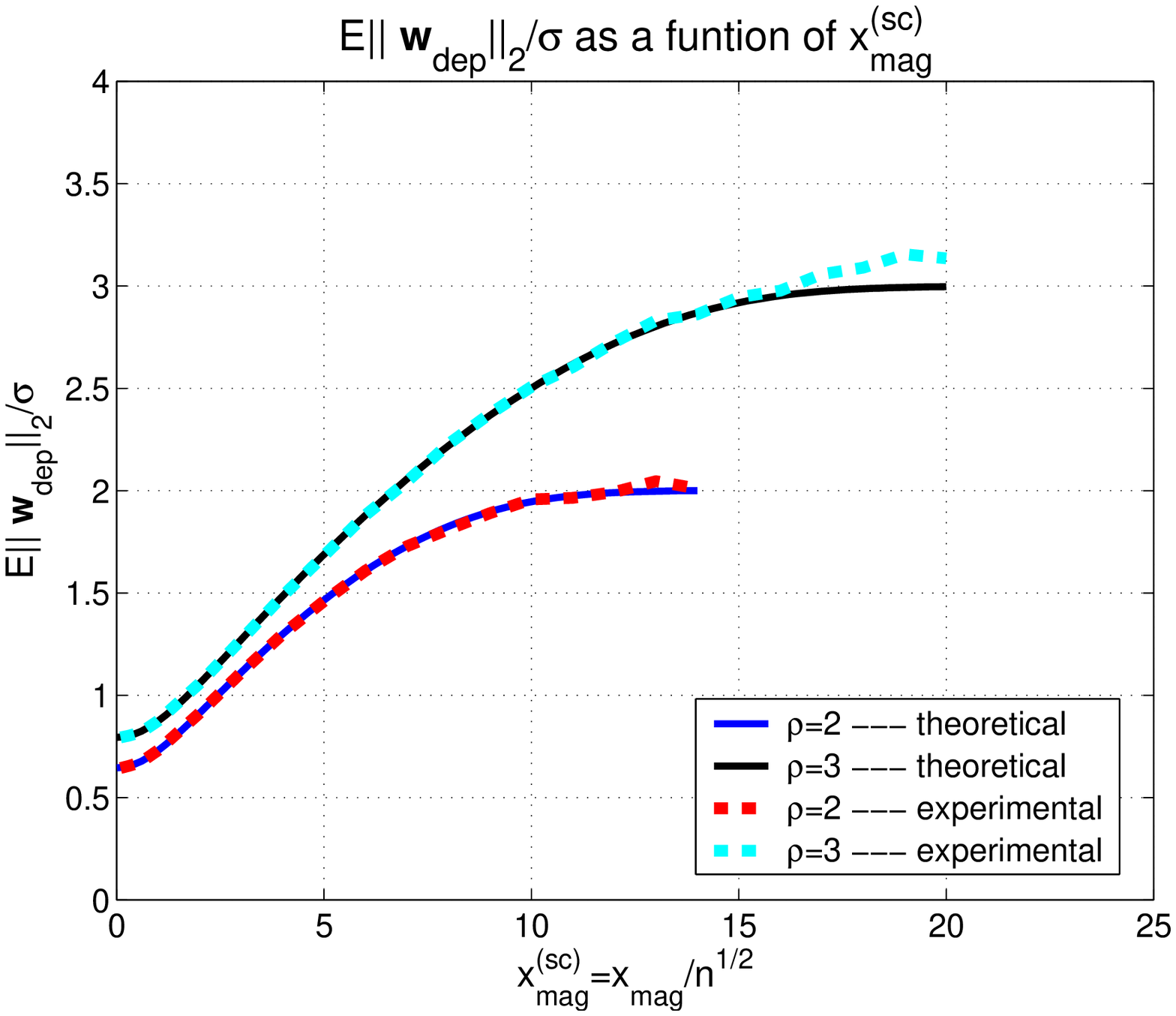,width=8cm,height=7cm}}
\end{minipage}
%\begin{minipage}[b]{.33\linewidth}
%\centering
%\centerline{\epsfig{figure=PrDepSocpVarxm07.eps,width=5.3cm,height=5.3cm}}
%\end{minipage}
\caption{Experimental results for $\frac{E\|\w_{socp}\|_2}{\sigma}$ and $\frac{E\|\w_{dep}\|_2}{\sigma}$ as a function of $x_{mag}^{(sc)}$; $\rho=2$, $r_{socp}=\sqrt{0.2 \alpha n}$; $\rho=3$, $r_{socp}=\sqrt{0.1 \alpha n}$; left --- SOCP from (\ref{eq:socp}), right --- (\ref{eq:mainlasso3ver})}
\label{fig:errorvarxsim}
\end{figure}

\underline{\emph{b) High $(\alpha,\beta_w)$ regime, $\rho=3$}}

We also conducted a set of experiments in the so-called ``high" $(\alpha,\beta_w)$ regime. We used exactly the same parameters as in low $(\alpha,\beta_w)$ except that we changed $\rho$ from $2$ to $3$. Consequently we chose $r_{socp}=\sqrt{0.1 \alpha n}$ and $\beta_w$ such that $(\alpha_w,\beta_w)$ satisfy (\ref{eq:fundl1}) and $\alpha_w=\frac{\rho^2}{1+\rho^2}\alpha$. As above we ran $300$ times each (\ref{eq:socp}) and (\ref{eq:mainlasso3ver}). We ran (\ref{eq:socp}) with $n=800$ and (\ref{eq:mainlasso3ver}) with $n=2000$. The numerical results obtained for $\rho=3$ together with the theoretical predictions are again shown in Figure \ref{fig:errorvarxsim}. From Figure \ref{fig:errorvarxsim} we observe a solid agreement between the theoretical predictions and the results obtained through numerical experiments.

\textbf{\underline{\emph{2) $\frac{Ef_{obj}}{\sqrt{n}}$ and $\frac{E\xi_{prim}^{(dep)}(\sigma,\g,\h,x_{mag},r_{socp})}{\sqrt{n}}$ as functions of $x_{mag}^{(sc)}$}}}

In this part we will show the numerical results that correspond to the theoretical ones given in part 2) in the previous subsection. We then set all parameters as in Figure \ref{fig:errorvarx} (these parameters are exactly the same as in experiments whose results we just presented above). Of course we again distinguish two cases: $\rho=2$ and $\rho=3$. For both $\rho=2$ and $\rho=3$ we ran $300$ times each, (\ref{eq:socp}) and (\ref{eq:mainlasso3ver}) and again we ran (\ref{eq:socp}) with $n=800$ and (\ref{eq:mainlasso3ver}) with $n=2000$. The numerical results that we obtained for $\frac{Ef_{obj}}{\sqrt{n}}$ and $\frac{E\xi_{prim}^{(dep)}(\sigma,\g,\h,x_{mag},r_{socp})}{\sqrt{n}}$ are shown in Figure \ref{fig:objvarxsim}.
\begin{figure}[htb]
\begin{minipage}[b]{.5\linewidth}
\centering
\centerline{\epsfig{figure=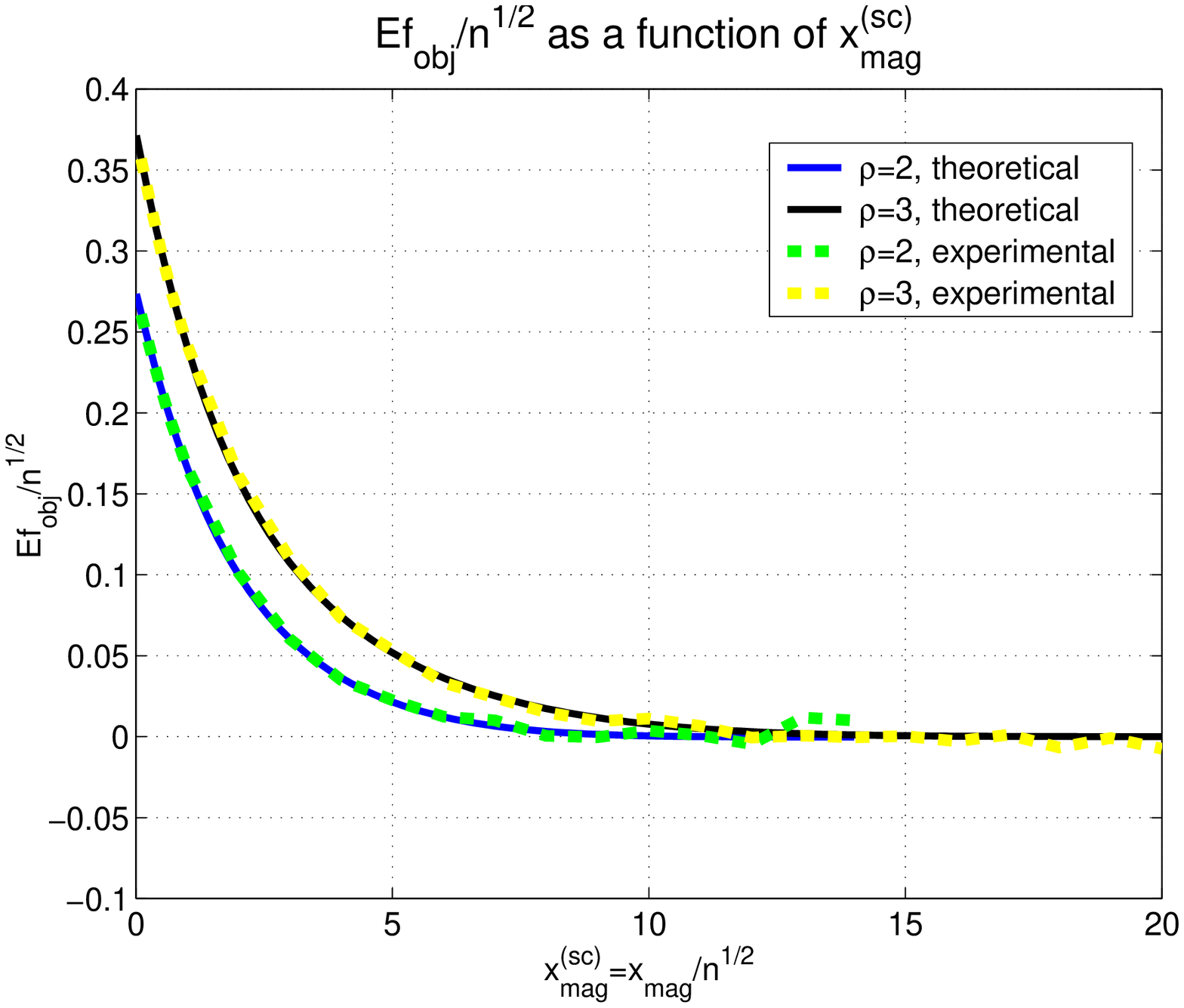,width=8cm,height=7cm}}
\end{minipage}
\begin{minipage}[b]{.5\linewidth}
\centering
\centerline{\epsfig{figure=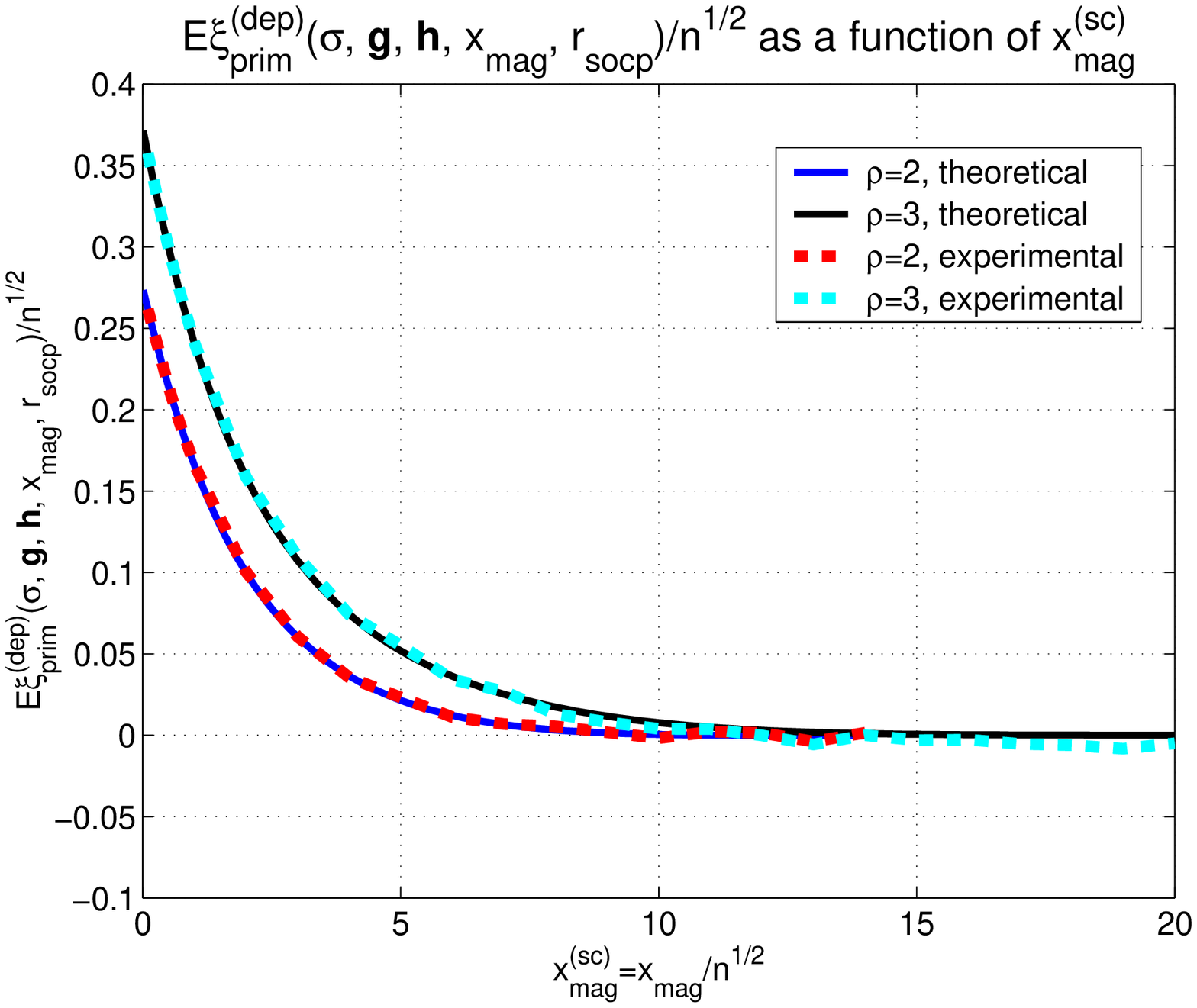,width=8cm,height=7cm}}
\end{minipage}
%\begin{minipage}[b]{.33\linewidth}
%\centering
%\centerline{\epsfig{figure=PrDepSocpVarxm07.eps,width=5.3cm,height=5.3cm}}
%\end{minipage}
\caption{Experimental results for $\frac{Ef_{obj}}{\sqrt{n}}$ and $\frac{E\xi_{prim}^{(dep)}(\sigma,\g,\h,x_{mag},r_{socp})}{\sqrt{n}}$ as a function of $x_{mag}^{(sc)}$; $\rho=2$, $r_{socp}=\sqrt{0.2 \alpha n}$; $\rho=3$, $r_{socp}=\sqrt{0.1 \alpha n}$; left --- SOCP from (\ref{eq:socp}); right --- (\ref{eq:mainlasso3ver})}
\label{fig:objvarxsim}
\end{figure}
We again observe a solid agreement between the theoretical predictions and the results obtained through numerical experiments.

\textbf{\underline{\emph{3) $\frac{E\|\w_{dep}\|_2}{\sigma}$ and $\frac{E\|\w_{socp}\|_2}{\sigma}$ as functions of $x_{mag}^{(sc)}$; varying $r_{socp}$}}}

In this part we will show the numerical results that correspond to the theoretical ones given in part 3) in the previous subsection. These results relate to possible variations in the $r_{socp}$ that can be used in (\ref{eq:socp}). We then set all other parameters as in Figure \ref{fig:errorvarxvarrho} (these parameters are of course again different depending if we are considering $\rho=2$ or $\rho=3$).

\underline{\emph{a) Low $(\alpha,\beta_w)$ regime, $\rho=2$}}

We first consider the $\rho=2$ scenario. As in part 1) of this subsection we set $\alpha=0.5$ and choose $\beta_w$ as in part 1). However, differently from part 1) we now consider two different possibilities for $r_{socp}$, namely $r_{socp}=\sqrt{0.05\alpha n}$ and $r_{socp}=\sqrt{0.6 \alpha n}$. We then ran (\ref{eq:socp}) $300$ times with $n=800$ for various $x_{mag}^{(sc)}$. In parallel we ran (\ref{eq:mainlasso3ver}) with $n=2000$. The obtained results for $\frac{E\|\w_{socp}\|_2}{\sigma}$ and $\frac{E\|\w_{dep}\|_2}{\sigma}$ are shown on the left-hand and right-hand side of Figure \ref{fig:errorvarxvarrhosim}, respectively (again, given our assumption that $\sigma=1$ $\frac{E\|\w_{dep}\|_2}{\sigma}$ and $\frac{E\|\w_{socp}\|_2}{\sigma}$ are of course just $E\|\w_{dep}\|_2$ and $E\|\w_{socp}\|_2$, respectively). We also show in Figure \ref{fig:errorvarxvarrhosim} the corresponding theoretical predictions obtained in the previous subsection.
\begin{figure}[htb]
\begin{minipage}[b]{.5\linewidth}
\centering
\centerline{\epsfig{figure=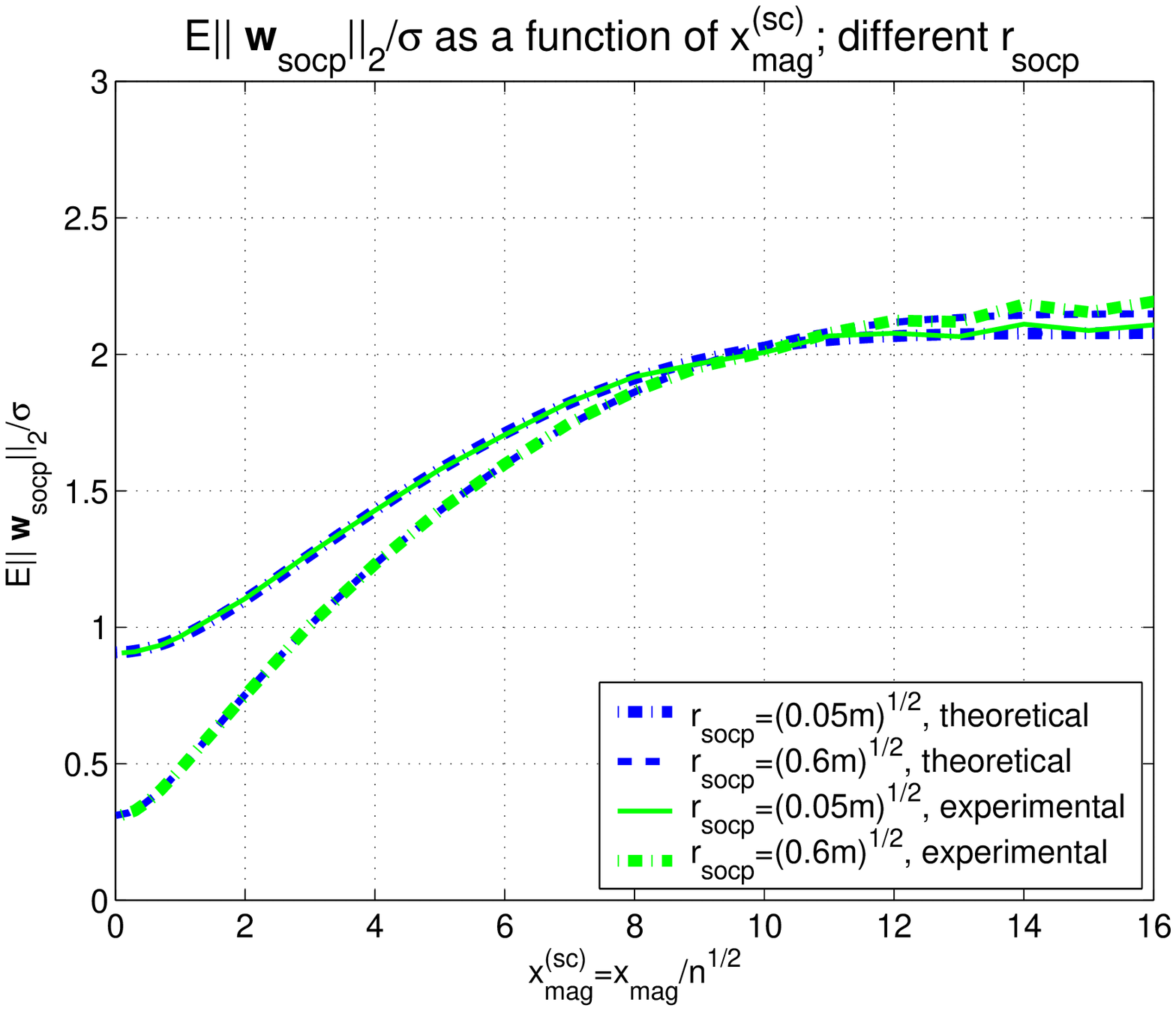,width=8cm,height=7cm}}
\end{minipage}
\begin{minipage}[b]{.5\linewidth}
\centering
\centerline{\epsfig{figure=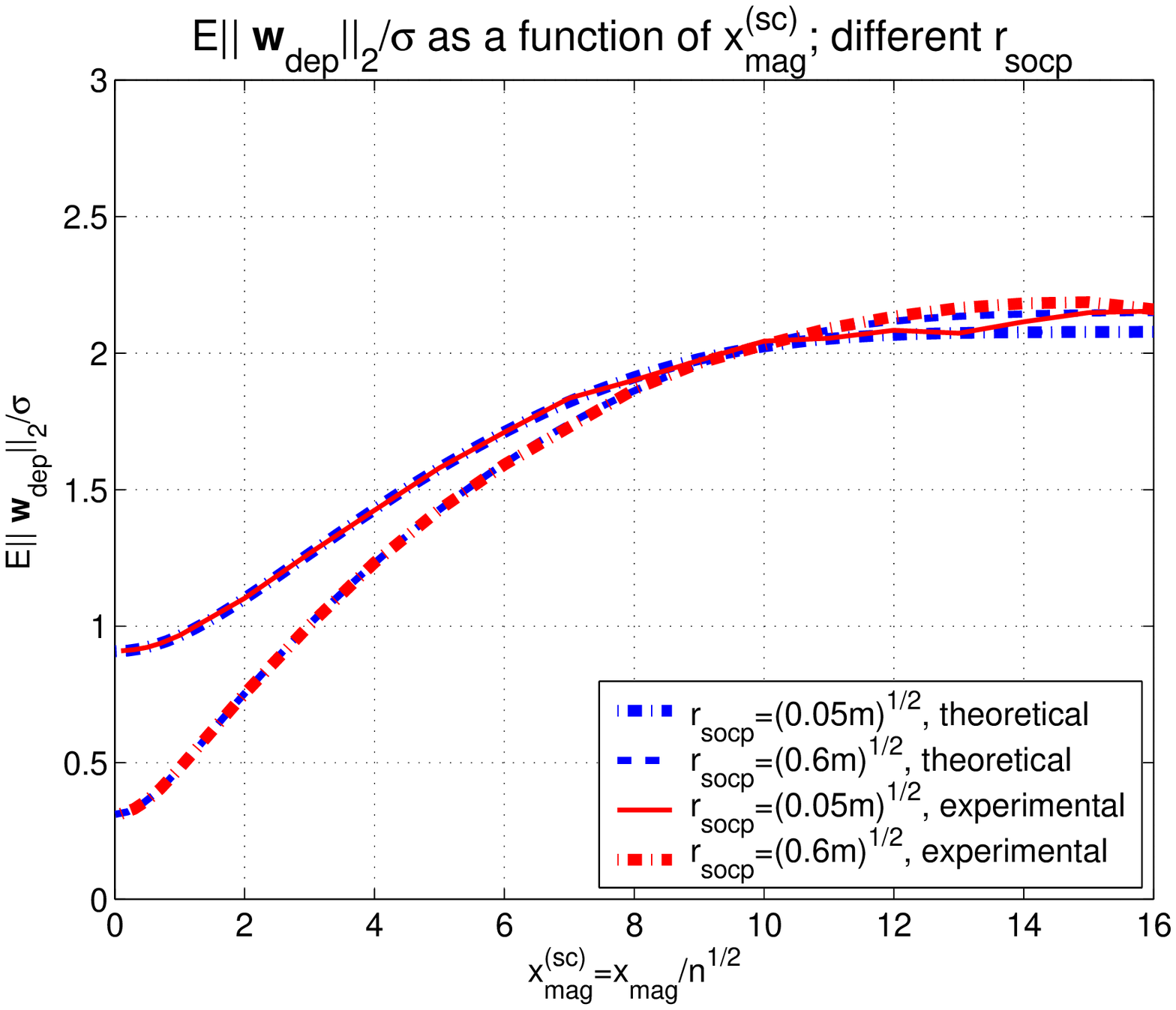,width=8cm,height=7cm}}
\end{minipage}
%\begin{minipage}[b]{.33\linewidth}
%\centering
%\centerline{\epsfig{figure=PrDepSocpVarxm07.eps,width=5.3cm,height=5.3cm}}
%\end{minipage}
\caption{Experimental results for $\frac{E\|\w_{socp}\|_2}{\sigma}$ and $\frac{E\|\w_{dep}\|_2}{\sigma}$ as a function of $x_{mag}^{(sc)}$; $\rho=2$; $r_{socp}\in\{\sqrt{0.05 \alpha n},\sqrt{0.6 \alpha n}\}$; left --- SOCP from (\ref{eq:socp}), right --- (\ref{eq:mainlasso3ver})}
\label{fig:errorvarxvarrhosim}
\end{figure}

\underline{\emph{b) High $(\alpha,\beta_w)$ regime, $\rho=3$}}

We also consider the $\rho=3$ scenario. As above, we set $\alpha=0.5$ and choose $\beta_w$ as in part 1) of this subsection. Everything else remain the same as in $\rho=2$ case except the way we vary $r_{socp}$. This time we consider (as in part 3) of the previous section when $\rho=3$ case was considered) $r_{socp}=\sqrt{0.05\alpha n}$ and $r_{socp}=\sqrt{0.5 \alpha n}$. As usual (\ref{eq:socp}) was run $300$ times with $n=800$ for various $x_{mag}^{(sc)}$. In parallel we ran (\ref{eq:mainlasso3ver}) with $n=2000$. The obtained numerical results for $\frac{Ef_{obj}}{\sqrt{n}}$ and $\frac{E\xi_{prim}^{(dep)}(\sigma,\g,\h,x_{mag},r_{socp})}{\sqrt{n}}$ as well as the corresponding theoretical predictions obtained in the previous subsection are shown on the left-hand and right-hand side of Figure \ref{fig:errorvarxvarrhosim1}, respectively.
\begin{figure}[htb]
\begin{minipage}[b]{.5\linewidth}
\centering
\centerline{\epsfig{figure=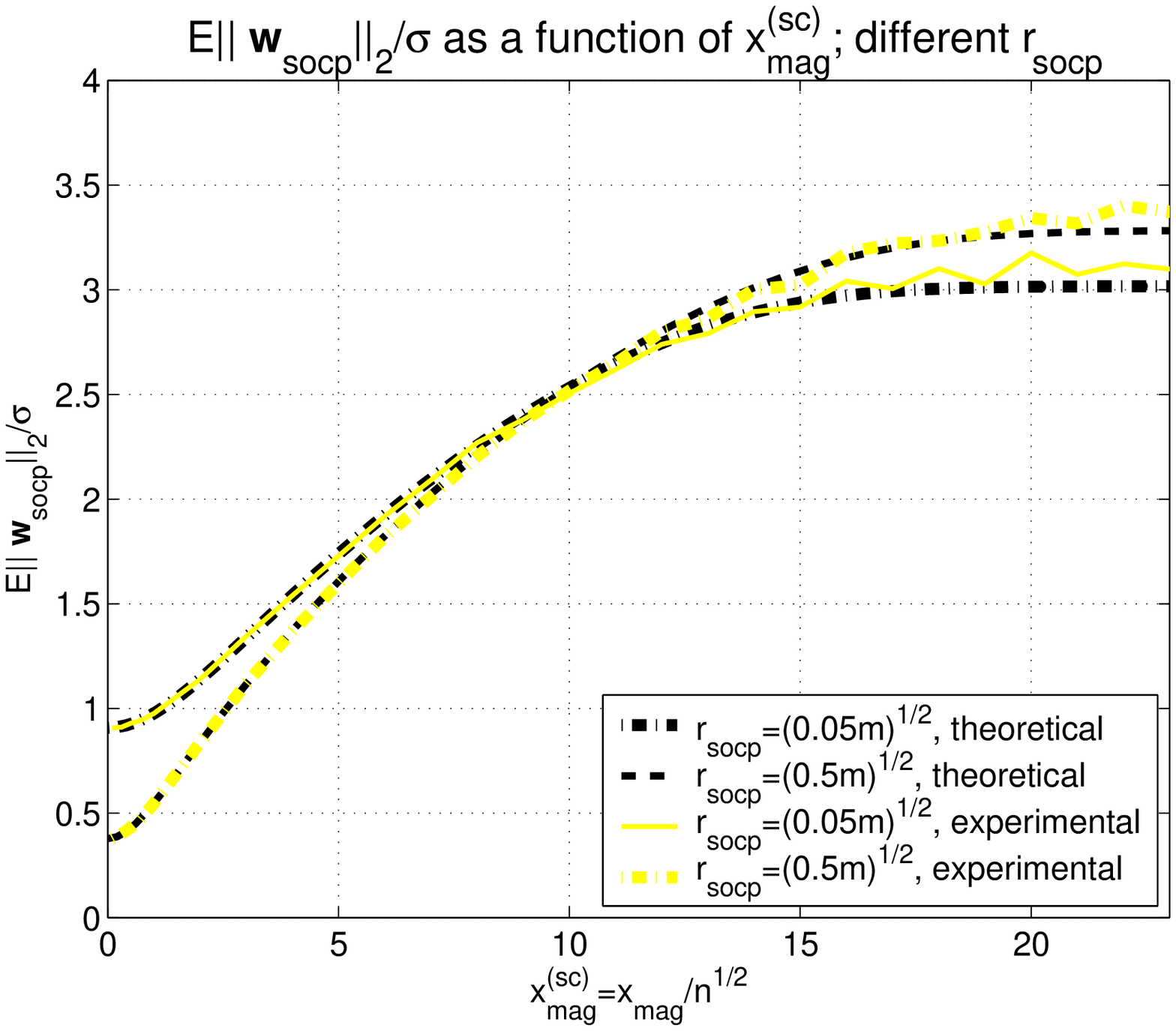,width=8cm,height=7cm}}
\end{minipage}
\begin{minipage}[b]{.5\linewidth}
\centering
\centerline{\epsfig{figure=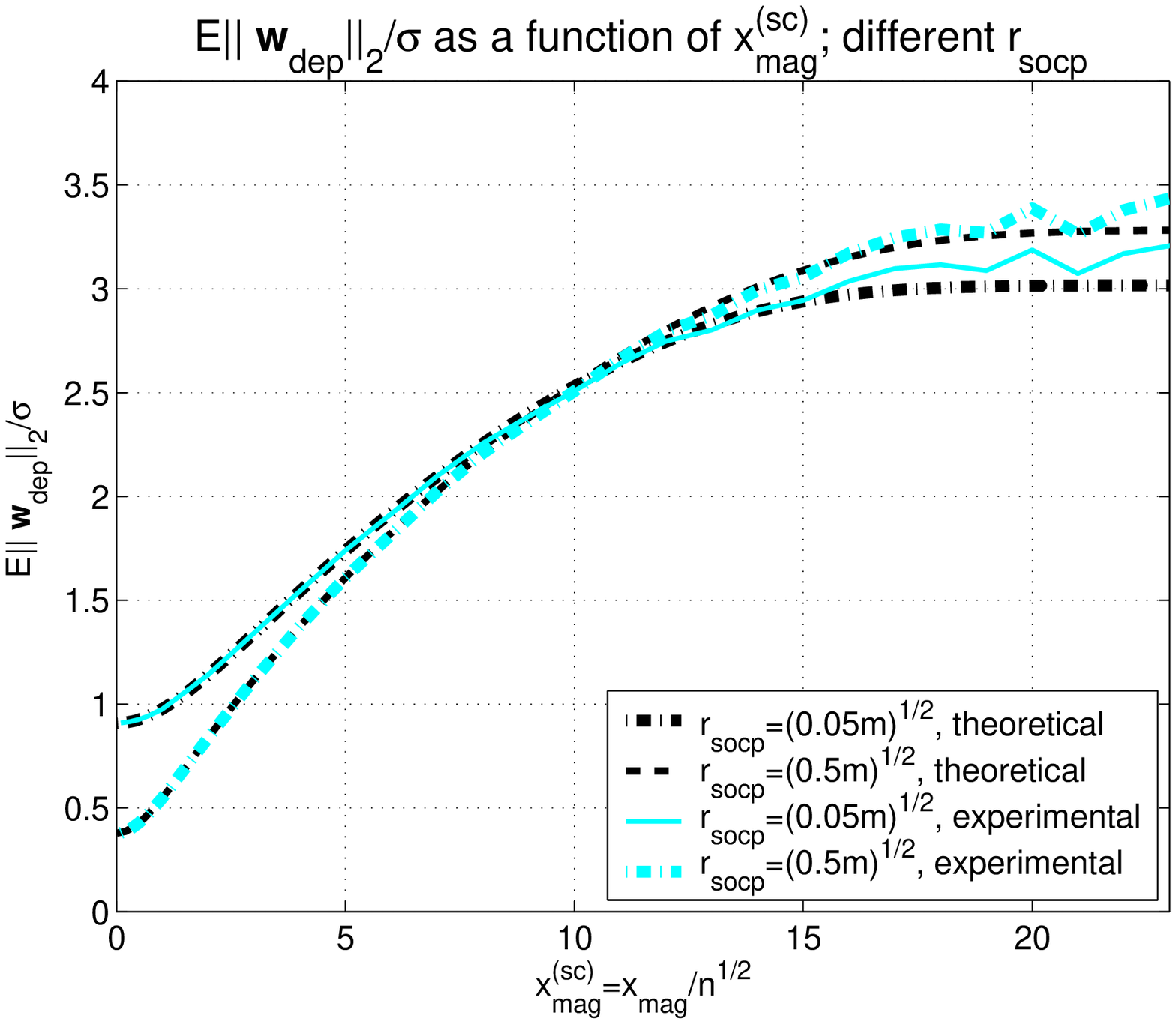,width=8cm,height=7cm}}
\end{minipage}
%\begin{minipage}[b]{.33\linewidth}
%\centering
%\centerline{\epsfig{figure=PrDepSocpVarxm07.eps,width=5.3cm,height=5.3cm}}
%\end{minipage}
\caption{Experimental results for $\frac{E\|\w_{socp}\|_2}{\sigma}$ and $\frac{E\|\w_{dep}\|_2}{\sigma}$ as a function of $x_{mag}^{(sc)}$; $\rho=3$; $r_{socp}\in\{\sqrt{0.05 \alpha n},\sqrt{0.5 \alpha n}\}$; left --- SOCP from (\ref{eq:socp}), right --- (\ref{eq:mainlasso3ver})}
\label{fig:errorvarxvarrhosim1}
\end{figure}
We again observe a solid agreement between the theoretical predictions and the results obtained through numerical experiments. Small glitches that happen in large $x_{mag}^{(sc)}$ regime could have been fixed by choosing a larger $n$. We purposely chose a smaller $n$ to show that results are fairly good even when $n$ is not very large. In fact, even a smaller $n$ than the one we have chosen would work quite fine.

\textbf{\underline{\emph{4) $\frac{Ef_{obj}}{\sqrt{n}}$ and $\frac{E\xi_{prim}^{(dep)}(\sigma,\g,\h,x_{mag},r_{socp})}{\sqrt{n}}$ as functions of $x_{mag}^{(sc)}$; varying $r_{socp}$}}}

In this part we will show the numerical results that correspond to the theoretical ones given in part 4) in the previous subsection. These results relate to behavior of $\frac{Ef_{obj}}{\sqrt{n}}$ and $\frac{E\xi_{prim}^{(dep)}(\sigma,\g,\h,x_{mag},r_{socp})}{\sqrt{n}}$ when one varies $r_{socp}$ in (\ref{eq:socp}). We again consider $\rho=2$ or $\rho=3$.

\underline{\emph{a) Low $(\alpha,\beta_w)$ regime, $\rho=2$}}

The setup that we consider is exactly the same as the one considered in part 3a) of this subsection. We set $\alpha=0.5$, choose $\beta_w$ as in part 1), and considered two different possibilities for $r_{socp}$, namely $r_{socp}=\sqrt{0.05\alpha n}$ and $r_{socp}=\sqrt{0.6 \alpha n}$.  The obtained results for $\frac{Ef_{obj}}{\sqrt{n}}$ and $\frac{E\xi_{prim}^{(dep)}(\sigma,\g,\h,x_{mag},r_{socp})}{\sqrt{n}}$ are shown on the left-hand and right-hand side of Figure \ref{fig:objvarxvarrhosim}, respectively. The corresponding theoretical predictions obtained in the previous subsection are also shown in Figure \ref{fig:objvarxvarrhosim}.
\begin{figure}[htb]
\begin{minipage}[b]{.5\linewidth}
\centering
\centerline{\epsfig{figure=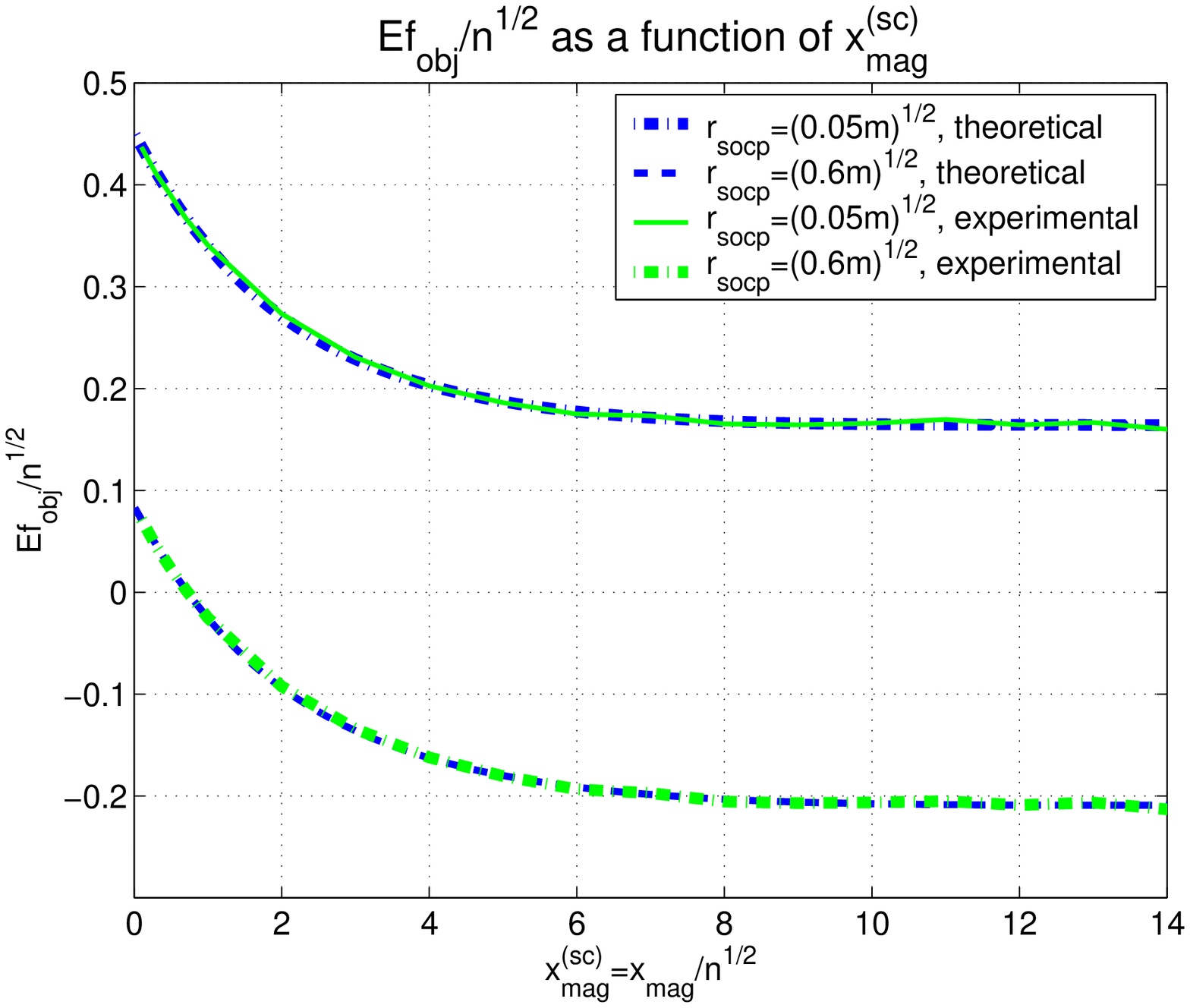,width=8cm,height=7cm}}
\end{minipage}
\begin{minipage}[b]{.5\linewidth}
\centering
\centerline{\epsfig{figure=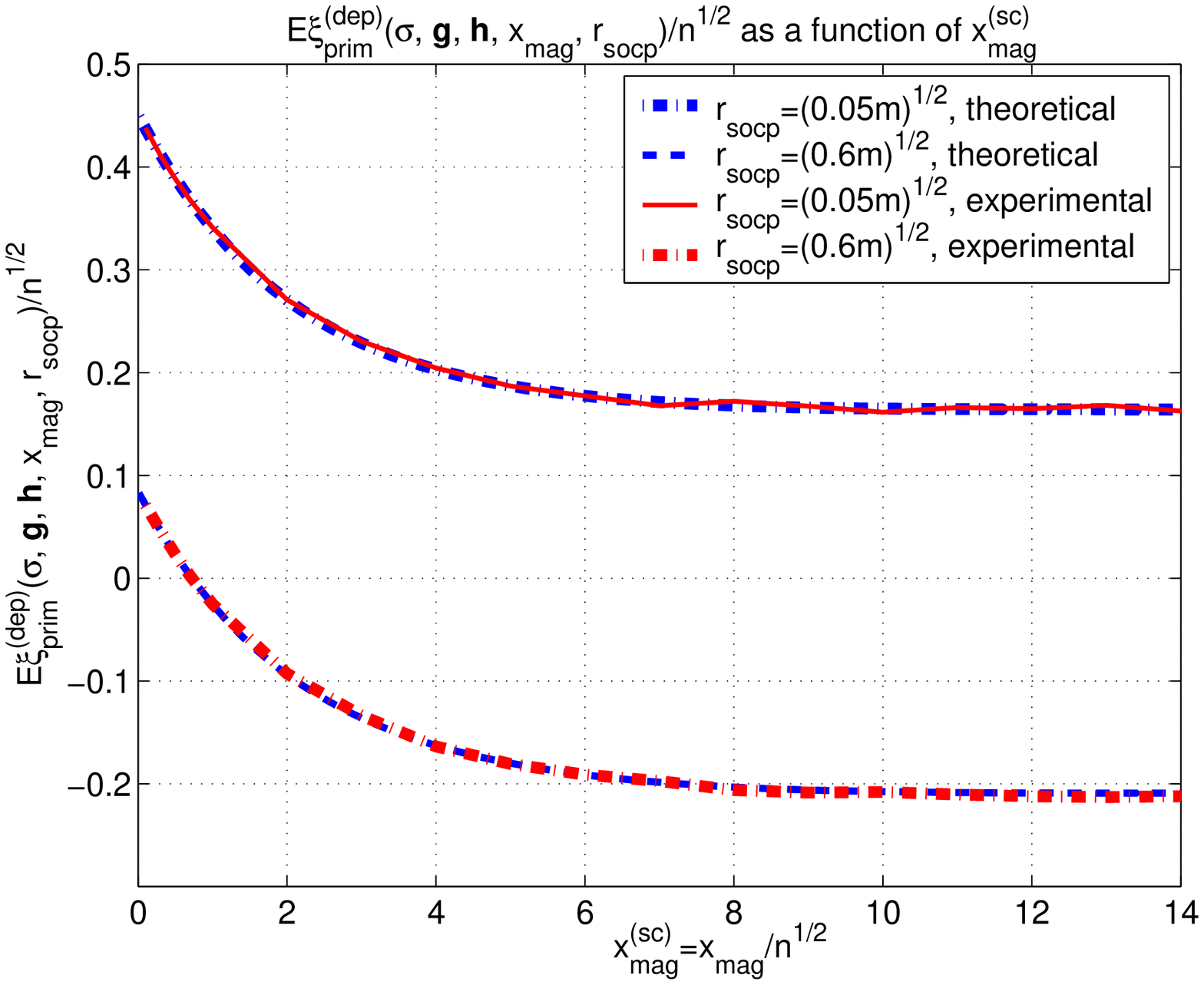,width=8cm,height=7cm}}
\end{minipage}
%\begin{minipage}[b]{.33\linewidth}
%\centering
%\centerline{\epsfig{figure=PrDepSocpVarxm07.eps,width=5.3cm,height=5.3cm}}
%\end{minipage}
\caption{Experimental results for $\frac{Ef_{obj}}{\sqrt{n}}$ and $\frac{E\xi_{prim}^{(dep)}(\sigma,\g,\h,x_{mag},r_{socp})}{\sqrt{n}}$ as a function of $x_{mag}^{(sc)}$; $\rho=2$; $r_{socp}\in\{\sqrt{0.05 \alpha n},\sqrt{0.6 \alpha n}\}$; left --- SOCP from (\ref{eq:socp}), right --- (\ref{eq:mainlasso3ver})}
\label{fig:objvarxvarrhosim}
\end{figure}

\underline{\emph{b) High $(\alpha,\beta_w)$ regime, $\rho=3$}}

The setup that we consider is exactly the same as the one considered in part 3b) of this subsection. We set $\alpha=0.5$, choose $\beta_w$ as in part 1), and considered two different possibilities for $r_{socp}$, namely $r_{socp}=\sqrt{0.05\alpha n}$ and $r_{socp}=\sqrt{0.5 \alpha n}$.  The obtained results for $\frac{E\|\w_{socp}\|_2}{\sigma}$ and $\frac{E\|\w_{dep}\|_2}{\sigma}$ are shown on the left-hand and right-hand side of Figure \ref{fig:objvarxvarrhosim1}, respectively. The corresponding theoretical predictions obtained in the previous subsection are also shown in Figure \ref{fig:objvarxvarrhosim1}.
\begin{figure}[htb]
\begin{minipage}[b]{.5\linewidth}
\centering
\centerline{\epsfig{figure=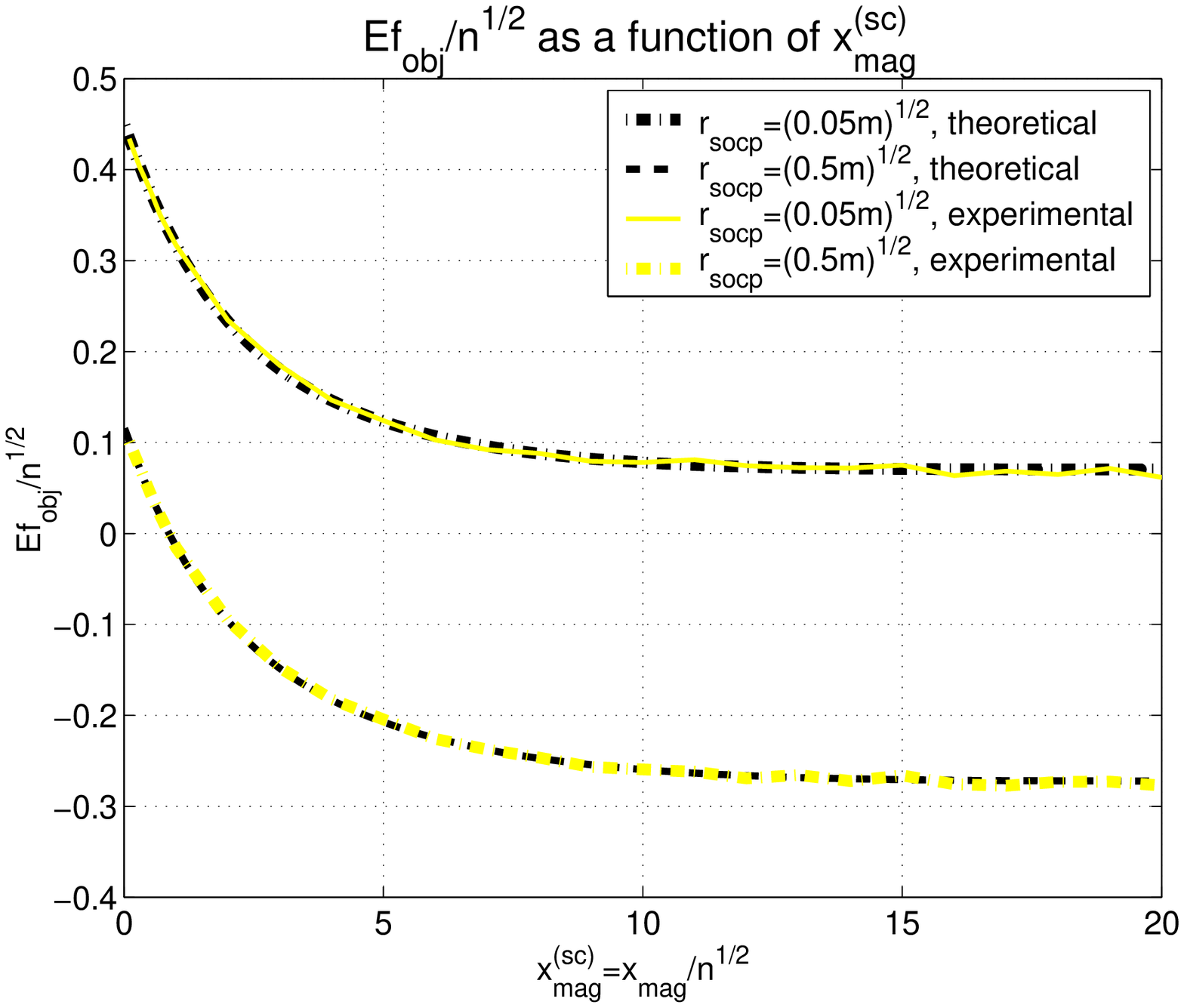,width=8cm,height=7cm}}
\end{minipage}
\begin{minipage}[b]{.5\linewidth}
\centering
\centerline{\epsfig{figure=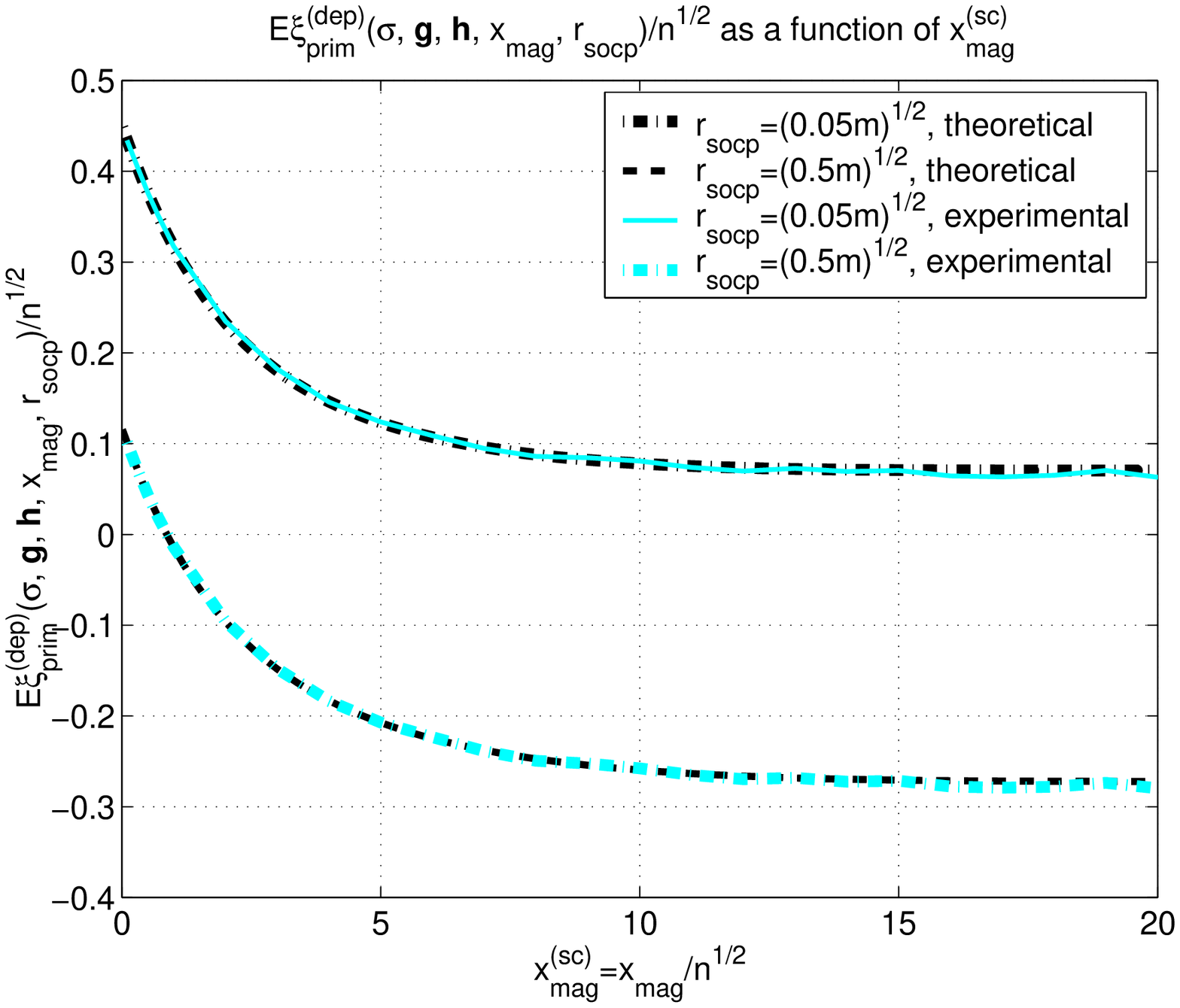,width=8cm,height=7cm}}
\end{minipage}
%\begin{minipage}[b]{.33\linewidth}
%\centering
%\centerline{\epsfig{figure=PrDepSocpVarxm07.eps,width=5.3cm,height=5.3cm}}
%\end{minipage}
\caption{Experimental results for $\frac{Ef_{obj}}{\sqrt{n}}$ and $\frac{E\xi_{prim}^{(dep)}(\sigma,\g,\h,x_{mag},r_{socp})}{\sqrt{n}}$ as a function of $x_{mag}^{(sc)}$; $\rho=3$; $r_{socp}\in\{\sqrt{0.05 \alpha n},\sqrt{0.5 \alpha n}\}$; left --- SOCP from (\ref{eq:socp}), right --- (\ref{eq:mainlasso3ver})}
\label{fig:objvarxvarrhosim1}
\end{figure}
We again observe a solid agreement between the theoretical predictions and the results obtained through numerical experiments.

%%%%%%%%%%%%%%%%%%%%%%%%%%%%%%%%%%%%%%%%%%%%%%%%%%%%%%%%%%%%%%%%%%%%%%%%%%%%%%%%%%
\section{SOCP's problem dependent performance -- signed $\x$} \label{sec:signed}
%%%%%%%%%%%%%%%%%%%%%%%%%%%%%%%%%%%%%%%%%%%%%%%%%%%%%%%%%%%%%%%%%%%%%%%%%%%%%%%%%%

In this section we show how the SOCP's problem dependent performance analysis developed in the previous section can be specialized to the case when signals are \emph{a priori} known to have nonzero components of certain sign.

%%%%%%%%%%%%%%%%%%%%%%%%%%%%%%%%%%%%%%%%%%%%%%%%%%%%%%%%%%%%%%%%%%%%%%%%%%%%%%%%%%
\subsection{Basic properties of the SOCP's framework} \label{sec:signedbasic}
%%%%%%%%%%%%%%%%%%%%%%%%%%%%%%%%%%%%%%%%%%%%%%%%%%%%%%%%%%%%%%%%%%%%%%%%%%%%%%%%%%

All major assumptions stated at the beginning of the previous section will continue to hold in this section as well; namely, we will continue to consider matrices $A$ with i.i.d. standard normal random variables; elements of $\v$ will again be i.i.d. Gaussian random variables with zero mean and variance $\sigma$. The main difference, though, comes in the definition of $\xtilde$. We will in this section assume that $\xtilde$ is the original $\x$ in (\ref{eq:systemnoise}) that we are trying to recover and that it is \emph{any} $k$-sparse vector with a given fixed location of its nonzero elements and with a priori known signs of its elements. Given the statistical context,
it will be fairly easy to see later on that everything that we will present in this section will be irrelevant with respect to what particular location and what particular combination of signs of nonzero elements are chosen. We therefore for the simplicity of the exposition and without loss of generality assume that the components $\x_{1},\x_{2},\dots,\x_{n-k}$ of $\x$ are equal to zero and that the remaining components of $\x$, $\x_{n-k+1},\x_{n-k+2},\dots,\x_n$, are greater than or equal to zero. However, differently from what was assumed in the previous section, we now assume that this information is \emph{a priori} known. That essentially means that
this information is also known to the solving algorithm. Then instead of (\ref{eq:socp}) one can consider its a better (``signed") version
\begin{eqnarray}
\min_{\x} & &  \|\x\|_1\nonumber \\
\mbox{subject to} & & \|\y-A\x\|_2\leq r_{socp+}\nonumber \\
& & \x_{i}\geq 0,1\leq i\leq n.
\label{eq:socpnon}
\end{eqnarray}
Also, one should again note that $r_{socp+}$ in (\ref{eq:socpnon}) is a parameter that critically impacts the outcome of any SOCP type of algorithm (again, for different $r_{socp+}$'s one will have different SOCP's). The analysis that we will present assumes a general $r_{socp+}$. However, we do mention right here that problem (\ref{eq:socpnon}) is not feasible for all choices of $\xtilde$, $\alpha$, $\beta_w^+$, $\sigma$, and $r_{socp+}$. Unless mentioned otherwise what we present below assumes that $\xtilde$, $\alpha$, $\beta_w^+$, $\sigma$, and $r_{socp+}$ are such that (\ref{eq:socpnon}) is feasible with overwhelming probability. For example, as discussed in \cite{StojnicGenSocp10}, a statistical choice $r_{socp+}>\sigma\sqrt{m}$ guarantees feasibility with overwhelming probability. Of course, there are other choices of parameters $\xtilde$, $\alpha$, $\beta_w^+$, $\sigma$, and $r_{socp+}$ that guarantee feasibility as well. Towards the end of this section we will mention some of them and address this general question of feasibility in more detail.

Given the positivity of $\x_i,1\leq i\leq n$, one can, of course, replace $\ell_1$ norm in the objective of (\ref{eq:socpnon}) by the sum of all elements of $\x$. However, to maintain visual similarity between what we will present in this section and what we presented in Section \ref{sec:unsigned} we will keep the $\ell_1$ norm in the objective. Along the same lines, in what follows we will try to mimic the procedure presented in the previous section as much as possible. On such a path we will skip all the obvious parallels and emphasize the points that are different.

As a first step in making the presentation of the ``signed" case as parallel as possible to the presentation of the ``general" case we will again introduce a few definitions that will turn out to be helpful in what follows. First, let us define the optimal value of a slightly changed objective from (\ref{eq:socpnon}) in the following way
\begin{eqnarray}
f_{obj+}=\min_{\x} & & \|\x\|_1-\|\xtilde\|_1\nonumber \\
\mbox{subject to} & & \|\y-A\x\|_2\leq r_{socp+}\nonumber \\
& & \x_{i}\geq 0,1\leq i\leq n. \label{eq:objlassol1non}
\end{eqnarray}
As in the previous section, $f_{obj+}$ is clearly a function of $\sigma,\xtilde,A,\v$, and $r_{socp+}$. To make writing easier we will adopt the same convention as in Section \ref{sec:unsigned} and omit them. Also let $\x_{socp+}$ be the solution of (\ref{eq:socpnon}) (or the solution of (\ref{eq:objlassol1non})) and let $\w_{socp+}\in R^n$  be the so-called error vector defined in the following way
\begin{equation}
\w_{socp+}=\x_{socp+}-\xtilde.\label{eq:xhatdefnon}
\end{equation}
As in Section \ref{sec:unsigned} our main goal in this section will be to provide various characterizations of $\w_{socp+}$ and $f_{obj+}$. Throughout the paper we will heavily rely on the following theorem from \cite{StojnicGenSocp10} that provides a general characterization of $\w_{socp+}$ and $f_{obj+}$.

\begin{theorem}(\cite{StojnicGenSocp10} --- SOCP's performance characterization; signed $\x$)
Let $\v$ be an $n\times 1$ vector of i.i.d. zero-mean variance $\sigma^2$ Gaussian random variables and let $A$ be an $m\times n$ matrix of i.i.d. standard normal random variables. Further, let $\g$ and $\h$ be $m\times 1$ and $n\times 1$ vectors of i.i.d. standard normals, respectively and let $\z$ be $n\times 1$ vector of all ones. Consider a $k$-sparse $\xtilde$ defined in (\ref{eq:xtildedef}) and a $\y$ defined in (\ref{eq:systemnoise}) for $\x=\xtilde$. Let the solution of (\ref{eq:socpnon}) be $\x_{socp+}$ and let the so-called error vector of the SOCP from (\ref{eq:socpnon}) be $\w_{socp+}=\x_{socp+}-\xtilde$. Let $r_{socp+}$ in (\ref{eq:socpnon}) be a positive scalar. Let $n$ be large and let constants $\alpha=\frac{m}{n}$ and $\beta_w^+=\frac{k}{n}$ be below the following so-called \emph{signed fundamental} characterization of $\ell_1$ optimization
\begin{equation}
(1-\beta_w^+)\frac{\sqrt{\frac{1}{2\pi}}e^{-(\erfinv(2\frac{1-\alpha_w^+}{1-\beta_w^+}-1))^2}}{\alpha_w^+}-\sqrt{2}\erfinv (2\frac{1-\alpha_w^+}{1-\beta_w^+}-1)=0.
\label{eq:fundl1non}
\end{equation}
Furthermore, let $\xtilde$, $\alpha$, $\beta_w^+$, $\sigma$, and $r_{socp+}$ be such that (\ref{eq:socpnon}) is feasible with overwhelming probability and $E\xi_{prim+}(\sigma,\g,\h,\xtilde,r_{socp+})$ defined below is finite.
Consider the following optimization problem:
\begin{eqnarray}
\xi_{prim+}(\sigma,\g,\h,\xtilde,r_{socp+})=\max_{\nu,\lambda} & & \sigma\sqrt{\|\g\|_2^2\nu^2-\|\nu\h+\z-\lambda\|_2^2} -\sum_{i=n-k+1}^{n}\lambda_i\xtilde_i-\nu r_{socp+}\nonumber \\
\mbox{subject to}
& & \nu\geq 0\nonumber \\
& & \lambda_i\geq 0,1\leq i\leq n.\label{eq:mainlasso1non}
\end{eqnarray}
Let $\widehat{\nu^+}$ and $\widehat{\lambda^{+}}$ be the solution of (\ref{eq:mainlasso1non}). Set
\begin{equation}
\|\widehat{\w^+}\|_2=\sigma\frac{\|\widehat{\nu^+}\h+\z-\widehat{\lambda^+}\|_2}
{\sqrt{\|\g\|_2^2\widehat{\nu^+}^2-\|\widehat{\nu^+}\h+\z-\widehat{\lambda^+}\|_2^2}}.\label{eq:mainlasso2non}
\end{equation}
Then:
\begin{multline}
P(\|\xtilde+\w_{socp+}\|_1-\|\xtilde\|_1
\in (E\xi_{prim+}(\sigma,\g,\h,\xtilde,r_{socp+}))-\epsilon_1^{(socp)}|E\xi_{prim+}(\sigma,\g,\h,\xtilde,r_{socp+}))|,\\
E\xi_{prim+}(\sigma,\g,\h,\xtilde,r_{socp+}))+\epsilon_1^{(socp)}|E\xi_{prim+}(\sigma,\g,\h,\xtilde,r_{socp+}))|)=1-e^{-\epsilon_2^{(socp)}n}\label{eq:mainlasso3non}
\end{multline}
and
\begin{equation}
P((1-\epsilon_1^{(socp)})E\|\widehat{\w^+}\|_2\leq \|\w_{socp+}\|_2
\leq (1+\epsilon_1^{(socp)})E\|\widehat{\w^+}\|_2) =1-e^{-\epsilon_2^{(socp)}n},\label{eq:mainlasso4non}
\end{equation}
where $\epsilon_1^{(socp)}>0$ is an arbitrarily small constant and $\epsilon_2^{(socp)}$ is a constant dependent on $\epsilon_1^{(socp)}$ and $\sigma$ but independent of $n$.
\label{thm:mainlassonon}
\end{theorem}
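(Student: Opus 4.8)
The plan is to mirror the derivation of Theorem \ref{thm:mainlasso}, the only genuinely new ingredient being the way the a priori sign information reshapes the feasible set of the dual variable $\lambda$. First I would substitute $\x=\xtilde+\w$ into (\ref{eq:socpnon}). Using $\y=A\xtilde+\v$, the quadratic constraint becomes $\|A\w-\v\|_2\leq r_{socp+}$, while the constraints $\x_i\geq 0$ turn into $\w_i\geq -\xtilde_i$ (that is, $\w_i\geq 0$ on the block of zeros and $\w_i\geq -\xtilde_i$ on the nonzero block). Because both $\xtilde$ and $\xtilde+\w$ are nonnegative, the objective collapses to a linear form, $\|\x\|_1-\|\xtilde\|_1=\1^T\w=\z^T\w$, so $f_{obj+}$ becomes the value of a linear functional minimized over the intersection of the Euclidean ball constraint in $A\w$ with the shifted nonnegative orthant. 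This is exactly the primal whose optimal value and optimal error vector we must control.

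Next I would pass to a Lagrangian min-max form. Encoding $\|A\w-\v\|_2\leq r_{socp+}$ through a multiplier $\nu\geq 0$ together with $\|A\w-\v\|_2=\max_{\|\u\|_2\leq 1}\u^T(A\w-\v)$, and encoding each sign constraint $\w_i\geq -\xtilde_i$ through a nonnegative multiplier $\lambda_i$, the coefficient of $\w_i$ in the Lagrangian acquires the term $\z_i-\lambda_i$ and the term $-\sum_i\lambda_i\xtilde_i$ appears, which reduces to $-\sum_{i=n-k+1}^{n}\lambda_i\xtilde_i$ since $\xtilde_i=0$ on the first block. The crucial structural point — and the single place the signed case departs from Theorem \ref{thm:mainlasso} — is the admissible range of $\lambda$. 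In the unsigned problem the $\ell_1$ subdifferential forces $\z_i-\lambda_i\in[-1,1]$, i.e. $0\leq\lambda_i\leq 2$; here the objective is already linear and the positivity multiplier contributes only a one-sided term, so the upper bound disappears and one is left with $\lambda_i\geq 0$, precisely the constraint set appearing in (\ref{eq:mainlasso1non}).

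The heart of the argument is then the Gaussian comparison step of \cite{StojnicGenSocp10}. The bilinear Gaussian process $\u^T A\w$ indexed by the pair $(\u,\w)$ is replaced, via Gordon's min-max comparison inequality, by the decoupled process $\|\w\|_2\,\g^T\u+\|\u\|_2\,\h^T\w$; carrying out the inner maximization over $\u$ (and absorbing the noise $\v$, which furnishes the overall factor $\sigma$) converts $\|A\w-\v\|_2$ into the form $\sqrt{\|\g\|_2^2\nu^2-\|\nu\h+\z-\lambda\|_2^2}$ that defines $\xi_{prim+}$. Standard Lipschitz concentration of the objective and of $\|\w\|_2$ then upgrades the comparison to the high-probability two-sided bounds (\ref{eq:mainlasso3non}) and (\ref{eq:mainlasso4non}), with deviation controlled by $\epsilon_1^{(socp)}$ and failure probability $e^{-\epsilon_2^{(socp)}n}$. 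The restriction that $(\alpha,\beta_w^+)$ lie below the signed fundamental characterization (\ref{eq:fundl1non}) is what guarantees the comparison is asymptotically tight rather than vacuous, exactly as (\ref{eq:fundl1}) did in the unsigned regime.

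The step I expect to be the main obstacle is the handling of feasibility and finiteness, which is why the hypotheses explicitly assume (\ref{eq:socpnon}) is feasible with overwhelming probability and that $E\xi_{prim+}$ is finite. Unlike the box $0\leq\lambda_i\leq 2$ of the unsigned case, the unbounded range $\lambda_i\geq 0$ lets the dual objective in principle diverge to $-\infty$ (signalling primal infeasibility) or the comparison degenerate; one must verify that under the stated conditions the saddle point is attained and that Gordon's inequality can be deployed in both orientations — a lower bound on the primal value via the comparison and a matching upper bound via the companion inequality — so the primal value genuinely concentrates at $E\xi_{prim+}$. Establishing that this concentration is two-sided, and that the optimizing $\w$ inherits the concentration of its norm so as to yield (\ref{eq:mainlasso4non}), is the delicate part; the remaining manipulations are the same routine calculus as in \cite{StojnicGenSocp10}.
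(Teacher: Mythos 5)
The paper gives no proof of this theorem at all --- its proof reads, in full, ``Presented in \cite{StojnicGenSocp10}'' --- so the statement is imported from the companion paper rather than established here, and there is no in-paper argument to compare yours against step by step. That said, your sketch reconstructs exactly the framework that citation (and Section \ref{sec:unsigned} of this paper) rests on: the error-vector reformulation, Lagrangian duality in which the a priori sign information turns the two-sided dual constraint $0\leq\lambda_i\leq 2$ of the unsigned case into the one-sided $\lambda_i\geq 0$, Gordon's Gaussian min-max comparison deployed in both orientations, and exponential concentration --- so yours is essentially the same approach, with the feasibility/finiteness issues correctly identified as hypotheses rather than obligations.
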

\begin{proof}
Presented in \cite{StojnicGenSocp10}.
\end{proof}

\noindent \textbf{Remark:} A pair $(\alpha,\beta_w^+)$ lies below the signed fundamental characterization (\ref{eq:fundl1non}) if $\alpha>\alpha_w^+$ and $\alpha_w^+$ and $\beta_w^+$ are such that (\ref{eq:fundl1non}) holds.

%%%%%%%%%%%%%%%%%%%%%%%%%%%%%%%%%%%%%%%%%%%%%%%%%%%%%%%%%%%%%%%%%%%%%%%%%%
\subsection{Problem dependent properties of the framework} \label{sec:probdepframenon}
%%%%%%%%%%%%%%%%%%%%%%%%%%%%%%%%%%%%%%%%%%%%%%%%%%%%%%%%%%%%%%%%%%%%%%%%%%

To facilitate the exposition that will follow we similarly to what was done in Section \ref{sec:unsigned} (and earlier in \cite{StojnicCSetam09,StojnicEquiv10,StojnicGenSocp10}) set
\begin{equation}
\bar{\h}^+=[\h_{(1)}^{(1)},\h_{(2)}^{(2)},\dots,\h_{(n-k)}^{(n-k)},\h_{n-k+1}^{(k)},\h_{n-k+2}^{(k-1)},\dots,\h_n^{(1)}]^T,\label{eq:defhbarnon}
\end{equation}
where $[\h_{(1)}^{(1)},\h_{(2)}^{(2)},\dots,\h_{(n-k)}^{(n-k)}]$ are the elements of $[-\h_{1},-\h_{2},\dots,-\h_{n-k}]$ sorted in increasing order and
$[\h_{n-k+1}^{(k)},\h_{n-k+2}^{(k-1)},\dots,\h_n^{(1)}]$ are the elements of $[-\h_{n-k+1},-\h_{n-k+2},\dots,-\h_n]$ sorted in decreasing order (possible ties in the sorting processes are of course broken arbitrarily). One can then rewrite the optimization problem from (\ref{eq:mainlasso1non}) in the following way
\begin{eqnarray}
\xi_{prim+}(\sigma,\g,\h,\xtilde,r_{socp+})=\max_{\nu,\lambda} & & \sigma\sqrt{\|\g\|_2^2\nu^2-\|\nu\bar{\h}^+-\z+\lambda\|_2^2} -\sum_{i=n-k+1}^{n}\lambda_i\xtilde_i-\nu r_{socp+}\nonumber \\
\mbox{subject to}
& & \nu\geq 0\nonumber \\
& & \lambda_i\geq 0, 1\leq i\leq n.\label{eq:mainlasso3non}
\end{eqnarray}
In what follows we will restrict our attention to a specific class of unknown vectors $\xtilde$. Namely, we will consider vectors $\xtilde$ that have amplitude of the nonzero components equal to $x_{mag}$. In the noiseless case these problem instances are typically the hardest to solve (at least as long as one uses the signed version of the $\ell_1$ optimization from (\ref{eq:l1})). We will again emphasize that the fact that magnitudes of the nonzero elements of $\xtilde$ are $x_{mag}$ is not known a priori and can not be used in the solving algorithm (i.e. one can not add constraints that would exploit this knowledge in optimization problem (\ref{eq:socpnon})). It is just that we will consider how the SOCP from (\ref{eq:socpnon}) behaves when used to solve problem instances generated by such an $\xtilde$. Also, for such an $\xtilde$ (\ref{eq:mainlasso3non}) can be rewritten in the following way
\begin{eqnarray}
\xi_{prim+}^{(dep)}(\sigma,\g,\h,x_{mag},r_{socp+})=\max_{\nu,\lambda} & & \sigma\sqrt{\|\g\|_2^2\nu^2-\|\nu\bar{\h}^+-\z+\lambda\|_2^2} -x_{mag}\sum_{i=n-k+1}^{n}\lambda_i-\nu r_{socp+}\nonumber \\
\mbox{subject to}
& & \nu\geq 0\nonumber \\
& & \lambda_i\geq 0, 1\leq i\leq n.\label{eq:mainlasso3vernon}
\end{eqnarray}
Now, let $\nu_{dep+}$ and $\lambda^{(dep+)}$ be the solution of (\ref{eq:mainlasso3vernon}). Then analogously to (\ref{eq:mainlasso2non}) we can set
\begin{equation}
\|\w_{dep+}\|_2=\sigma\frac{\|\nu_{dep+}\bar{\h}^+-\z+\lambda^{(dep+)}\|_2}
{\sqrt{\|\g\|_2^2\nu_{dep+}^2-\|\nu_{dep+}\bar{\h}^+-\z+\lambda^{(dep+)}\|_2^2}},\label{eq:mainlasso4non}
\end{equation}
In what follows we will determine $\|\w_{dep+}\|_2$ and $\xi_{prim+}^{(dep)}(\sigma,\g,\h,x_{mag},r_{socp+})$ or more precisely their concentrating points
$E\|\w_{dep+}\|_2$ and $E\xi_{prim+}^{(dep)}(\sigma,\g,\h,x_{mag},r_{socp+})$. All other parameters such as $\nu_{dep+}$, $\lambda^{(dep+)}$ can (and some of them will) be computed through the framework as well.

We proceed by following the line of thought presented in Section \ref{sec:unsigned} and earlier in \cite{StojnicCSetam09,StojnicGenSocp10}. Since $\lambda^{(dep+)}$ is the solution of (\ref{eq:mainlasso3vernon}) there will be parameters $c_{1}^+$ and $c_{2}^+$ such that $$\lambda^{(dep+)}=[\lambda_1^{(dep+)},\lambda_2^{(dep+)},\dots,\lambda_{c_{1}^+}^{(dep+)},0,0,\dots,0,\lambda_{c_{2}^++1}^{(dep+)},\lambda_{c_{2}^++2}^{(dep+)},
\dots,\lambda_{n}^{(dep+)}]$$ and obviously $c_{1}^+\leq n-k$ and $n-k\leq c_{2}^+\leq n$. At this point let us assume that these parameters are known and fixed. Then following \cite{StojnicCSetam09,StojnicGenSocp10} as well as what was presented in Section \ref{sec:unsigned} the optimization problem from (\ref{eq:mainlasso3vernon}) can be rewritten in the following way
\begin{eqnarray}
\hspace{-.45in}\xi_{prim+}(\sigma,\g,\h,\xtilde,r_{socp+})=\max_{\nu,\lambda_{c_{1}^++1:n}} & & \sigma\sqrt{\|\g\|_2^2\nu^2-\|\nu\bar{\h}^+_{c_{1}^++1:n}-\z_{c_{1}^++1:n}+\lambda_{c_{1}^++1:n}\|_2^2} -x_{mag}\sum_{i=n-k+1}^{n}\lambda_i-\nu r_{socp+}\nonumber \\
\mbox{subject to}
& & \nu\geq 0\nonumber \\
& & \lambda_i\geq 0, c_{2}^++1\leq i\leq n\nonumber \\
& & \lambda_i=0, c_{1}^++1\leq i\leq c_{2}^+.\label{eq:mainlasso5non}
\end{eqnarray}
To make writing of what will follow somewhat easier we set
\begin{equation}
\xi^{(obj+)}=\sigma\sqrt{\|\g\|_2^2\nu^2-\|\nu\bar{\h}^+_{c_{1}^++1:n}-\z_{c_{1}^++1:n}+\lambda_{c_{1}^++1:n}\|_2^2} -x_{mag}\sum_{i=n-k+1}^{n}\lambda_i-\nu r_{socp+}.
\label{eq:defxiobjnon}
\end{equation}

Similarly to what was done in Section \ref{sec:unsigned} we then proceed by solving the optimization in (\ref{eq:mainlasso5non}) over $\nu$ and $\lambda_{c_{1}^++1:n}$. To do so we first look at the derivatives with respect to $\lambda_i,c_{2}^++1\leq i\leq n$, of the objective in (\ref{eq:mainlasso5non}). Computing the derivatives and equalling them to zero gives
\begin{eqnarray}
\hspace{-.3in}& & \frac{d \xi^{(obj+)}}{d \lambda_i}  =  0, c_{2}^++1\leq i\leq n\nonumber \\
&\iff &  \sigma \frac{-(\nu\bar{\h}^+_i-\z_i+\lambda_i)}
{\sqrt{\|\g\|_2^2\nu^2-\|\nu\bar{\h}^+_{c_{1}^++1:n}-\z_{c_{1}^++1:n}+\lambda_{c_{1}^++1:n}\|_2^2}}-x_{mag} =  0, c_{2}^++1\leq i\leq n\nonumber \\
& \iff &  \lambda_i-\z_i+
\nu\bar{\h}^+_{i}=-\frac{x_{mag}}{\sigma}\sqrt{\|\g\|_2^2\nu^2-\|\nu\bar{\h}^+_{c_{1}^++1:n}-\z_{c_{1}^++1:n}+\lambda_{c_{1}^++1:n}\|_2^2}, c_{2}^++1\leq i\leq n\nonumber \\
& \iff &  \lambda_i=-\frac{x_{mag}}{\sigma}\sqrt{\|\g\|_2^2\nu^2-\|\nu\bar{\h}^+_{c_{1}^++1:n}-\z_{c_{1}^++1:n}+\lambda_{c_{1}^++1:n}\|_2^2}+\z_i-
\nu\bar{\h}^+_{i}, c_{2}^++1\leq i\leq n.\nonumber \\\label{eq:mainlasso6non}
\end{eqnarray}
From the second to last line in the above equation one then has
\begin{equation*}
(\lambda_i-\z_i+\nu\bar{\h}^+_{i})^2=\frac{x_{mag}^2}{\sigma^2}(\|\g\|_2^2\nu^2-\|\nu\bar{\h}^+_{c_{1}^++1:c_{2}^+}-\z_{c_{1}^++1:c_{2}^+}
+\lambda_{c_{1}^++1:c_{2}^+}\|_2^2-\|\nu\bar{\h}_{c_{2}^++1:n}-\z_{c_{2}^++1:n}+\lambda_{c_{2}^++1:n}
\|_2^2)
\end{equation*}
and after an easy algebraic transformation
\begin{equation}
\hspace{-.3in}(\lambda_i-\z_i+\nu\bar{\h}^+_{i})^2=\frac{x_{mag}^2}{\sigma^2+(n-c_{2}^+)x_{mag}^2}(\|\g\|_2^2\nu^2-\|\nu\bar{\h}^+_{c_{1}^++1:c_{2}^+}-\z_{c_{1}^++1:c_{2}^+}
\|_2^2).\label{eq:mainlasso7non}
\end{equation}
Using (\ref{eq:mainlasso7non}) we further have
\begin{equation}
\hspace{-.3in}\sqrt{\|\g\|_2^2\nu^2-\|\nu\bar{\h}^+_{c_{1}^++1:n}-\z_{c_{1}^++1:n}
+\lambda_{c_{1}^++1:n}\|_2^2}
=\frac{\sigma}{\sqrt{\sigma^2+(n-c_{2}^+)x_{mag}^2}}\sqrt{\|\g\|_2^2\nu^2-\|\nu\bar{\h}^+_{c_{1}^++1:c_{2}^+}-\z_{c_{1}^++1:c_{2}^+}\|_2^2}.
\label{eq:mainlasso8non}
\end{equation}
Plugging the value for $\lambda_i$ from (\ref{eq:mainlasso5non}) in (\ref{eq:defxiobjnon}) gives
\begin{eqnarray}
\xi^{(obj+)} & = & \sigma\sqrt{\|\g\|_2^2\nu^2-\|\nu\bar{\h}^+_{c_{1}^++1:n}-\z_{c_{1}^++1:n}+\lambda_{c_{1}^++1:n}\|_2^2} -x_{mag}\sum_{i=n-k+1}^{n}\lambda_i-\nu r_{socp+}\nonumber \\
& = & \frac{\sigma^2+(n-c_{2}^+)x_{mag}^2}{\sigma}\sqrt{\|\g\|_2^2\nu^2-\|\nu\bar{\h}^+_{c_{1}^++1:n}-\z_{c_{1}^++1:n}+\lambda_{c_{1}^++1:n}\|_2^2}\nonumber \\
& - & x_{mag}(n-c_{2}^+)+\nu x_{mag}\sum_{i=c_2^++1}^{n}\bar{\h}^+_i-\nu r_{socp+}.
\label{eq:mainlasso9non}
\end{eqnarray}
Combining (\ref{eq:mainlasso8non}) and (\ref{eq:mainlasso9non}) we finally obtain the following ``signed" analogue to (\ref{eq:mainlasso10})
\begin{equation}
\hspace{-.3in}\xi^{(obj+)}
 =  \sqrt{\sigma^2+(n-c_{2}^+)x_{mag}^2}
\sqrt{\|\g\|_2^2\nu^2-\|\nu\bar{\h}^+_{c_{1}^++1:c_{2}^+}-\z_{c_{1}^++1:c_{2}^+}\|_2^2}
-  \nu(r_{socp+}- x_{mag}\sum_{i=c_2^++1}^{n}\bar{\h}^+_i)-x_{mag}(n-c_{2}^+).
\label{eq:mainlasso10non}
\end{equation}
Equalling the derivative of $\xi^{(obj+)}$ with respect to $\nu$ to zero further gives
\begin{eqnarray}
\hspace{-.3in}& & \frac{d \xi^{(obj+)}}{d \nu}  =  0\nonumber \\
&\iff &
\frac{\nu(\|\g\|_2^2-\sum_{i=c_1^++1}^{c_2^+}(\bar{\h}^+_i)^2)+(\bar{\h}^+_{c_{1}^++1:c_{2}^+})^T\z_{c_{1}^++1:c_{2}^+}}
{(\sqrt{\sigma^2+(n-c_{2}^+)x_{mag}^2})^{-1}\sqrt{\|\g\|_2^2\nu^2-\|\nu\bar{\h}^+_{c_{1}^++1:c_{2}^+}-\z_{c_{1}^++1:c_{2}^+}\|_2^2}}-(r_{socp+}- x_{mag}\sum_{i=c_2^++1}^{n}\bar{\h}^+_i) =0.\nonumber \\\label{eq:mainlasso11non}
\end{eqnarray}
Let
\begin{eqnarray}
 s_{dep+} & = & (\bar{\h}^+_{c_{1}^++1:c_{2}^+})^T\z_{c_{1}^++1:c_{2}^+}\nonumber \\
d_{dep+} & = & \sum_{i=c_1+1}^{c_2}(\bar{\h}^+_i)^2 \nonumber \\
r_{dep+} & = & r_{socp+}- x_{mag}\sum_{i=c_2+1}^{n}\bar{\h}^+_i\nonumber \\
a_{dep+} & = & \frac{\|\g\|_2^2-(\sum_{i=c_1+1}^{c_2}(\bar{\h}^+_i)^2)}
{\sqrt{\sigma^2+(n-c_{2}^+)x_{mag}^2}^{-1}r_{dep+}}=
\frac{\sqrt{\sigma^2+(n-c_{2}^+)x_{mag}^2}(\|\g\|_2^2-d_{dep+})}{r_{dep+}}\nonumber \\
b_{dep+} & = & \frac{(\bar{\h}^+_{c_{1}^++1:c_{2}^+})^T\z_{c_{1}^++1:c_{2}^+}}
{\sqrt{\sigma^2+(n-c_{2}^+)x_{mag}^2}^{-1}r_{dep+}}=\frac{\sqrt{\sigma^2+(n-c_{2}^+)x_{mag}^2}s_{dep+}}{r_{dep+}}.
\label{eq:defdepnon}
\end{eqnarray}
Then combining (\ref{eq:mainlasso11non}) and (\ref{eq:defdepnon}) one obtains
\begin{equation}
(a_{dep+}\nu+b_{dep+})^2=\|\g\|_2^2\nu^2-\|\nu\bar{\h}^+_{c_{1}^++1:c_{2}^+}-\z_{c_{1}^++1:c_{2}^+}\|_2^2.\label{eq:mainlasso13non}
\end{equation}
After solving (\ref{eq:mainlasso13non}) over $\nu$ we have
\begin{equation}
\hspace{-.0in}\nu=\frac{-(a_{dep+}b_{dep+}-s_{dep+})-\sqrt{(a_{dep+}b_{dep+}-s_{dep+})^2-
(b_{dep+}^2+\|\z_{c_{1}^++1:c_{2}^+}\|_2^2)(a_{dep+}^2-\|\g\|_2^2+d_{dep+})}}
{a_{dep+}^2-\|\g\|_2^2+d_{dep+}}.\label{eq:mainlasso14non}
\end{equation}
Following what was done in Section \ref{sec:unsigned} and earlier in \cite{StojnicCSetam09,StojnicGenSocp10}, we have that a combination of (\ref{eq:mainlasso6non}) and (\ref{eq:mainlasso14non}) gives the following two equations that can be used to determine $c_1$ and $c_2$.
\begin{eqnarray}
\nu\bar{\h}^+_{c_{2}^+}-\z_{c_{2}^+}+\frac{x_{mag}}{\sigma}\sqrt{\|\g\|_2^2\nu^2-\|\nu\bar{\h}^+_{c_{1}^++1:n}-\z_{c_{1}^++1:n}+\lambda_{c_{1}^++1:n}\|_2^2}& =& 0\nonumber \\
\hspace{-.5in}\bar{\h}^+_{c_1^+}\frac{-(a_{dep+}b_{dep+}-s_{dep+})-\sqrt{(a_{dep+}b_{dep+}-s_{dep+})^2-
(b_{dep+}^2+\|\z_{c_{1}^++1:c_{2}^+}\|_2^2)(a_{dep+}^2-\|\g\|_2^2+d_{dep+})}}
{a_{dep+}^2-\|\g\|_2^2+d_{dep+}} & = & 1.\nonumber \\
\label{eq:mainlasso15non}
\end{eqnarray}
The last term that appears on the right hand side of the first of the above equations can be further simplified based on
(\ref{eq:mainlasso8non}) in the following way
\begin{multline}
\hspace{-.0in}\sqrt{\|\g\|_2^2\nu^2-\|\nu\bar{\h}^+_{c_{1}^++1:n}-\z_{c_{1}^++1:n}+\lambda_{c_{1}^++1:n}\|_2^2}=
\frac{\sigma\sqrt{\|\g\|_2^2\nu^2-\nu^2d_{dep+}+2\nu s_{dep+}-(c_2^+-c_1^+)}}{\sqrt{\sigma^2+(n-c_2^+)x_{mag}^2}},\label{eq:mainlasso16non}
\end{multline}
where we of course recognized that $\|\z_{c_{1}^++1:c_{2}^+}\|_2^2=c_2^+-c_1^+$. Combining (\ref{eq:defdepnon}) and (\ref{eq:mainlasso16non}) one can then simplify the equations from (\ref{eq:mainlasso15non}) in the following way
\begin{eqnarray}
\hspace{-.5in}\nu\bar{\h}^+_{c_{2}^+}-\z_{c_{2}^+}+\frac{x_{mag}}{\sqrt{\sigma^2+(n-c_2^+)x_{mag}^2}}\sqrt{\|\g\|_2^2\nu^2-\nu^2d_{dep+}+2\nu s_{dep+}-(c_2^+-c_1^+)} & = & 0\nonumber \\
\bar{\h}^+_{c_1^+}\frac{-(a_{dep+}b_{dep+}-s_{dep+})-\sqrt{(a_{dep+}b_{dep+}-s_{dep+})^2-
(b_{dep+}^2+(c_2^+-c_1^+))(a_{dep+}^2-\|\g\|_2^2+d_{dep+})}}
{a_{dep+}^2-\|\g\|_2^2+d_{dep+}} & = & 1.\nonumber \\
\label{eq:c1c2c3non}
\end{eqnarray}
Let $\widehat{c_1^+}$ and $\widehat{c_2^+}$ be the solution of (\ref{eq:c1c2c3non}). Then
\begin{equation}
\nu_{dep+} = \frac{-(\widehat{a_{dep+}}\widehat{b_{dep+}}-\widehat{s_{dep+}})-\sqrt{(\widehat{a_{dep+}}\widehat{b_{dep+}}-\widehat{s_{dep+}})^2-
(\widehat{b_{dep+}}^2+(\widehat{c_2}-\widehat{c_1}))(\widehat{a_{dep+}}^2-\|\g\|_2^2+\widehat{d_{dep+}})}}
{\widehat{a_{dep+}}^2-\|\g\|_2^2+\widehat{d_{dep+}}},\label{eq:mainhatnunon}
\end{equation}
where $\widehat{s_{dep+}}$, $\widehat{d_{dep+}}$, $\widehat{a_{dep+}}$, and $\widehat{b_{dep+}}$ are $s_{dep+}$, $d_{dep+}$, $a_{dep+}$, and $b_{dep+}$ from (\ref{eq:defdepnon}) computed with $\widehat{c_1}$ and $\widehat{c_2}$. From (\ref{eq:mainlasso4non}) one then has
\begin{equation}
\|\w_{dep+}\|_2=\sigma\frac{\|\nu_{dep+}\bar{\h}^+_{\widehat{c_{1}^+}+1:n}-\z_{\widehat{c_{1}^+}+1:n}+\lambda_{\widehat{c_{1}^+}+1:n}^{(dep+)}\|_2}
{\sqrt{\|\g\|_2^2\nu_{dep+}^2-\|\nu_{dep+}\bar{\h}^+_{\widehat{c_{1}^+}+1:n}-\z_{\widehat{c_{1}^+}+1:n}+\lambda_{\widehat{c_{1}^+}+1:n}^{(dep+)}\|_2^2}}.
\label{eq:mainhatwnon}
\end{equation}
Combining (\ref{eq:mainlasso16non}) and (\ref{eq:mainhatwnon}) one further has
\begin{equation}
\|\w_{dep+}\|_2=\sigma\frac{\sqrt{\|\g\|_2^2\nu_{dep+}^2(n-\widehat{c_2^+})\frac{x_{mag}^2}{\sigma^2}+\nu_{dep+}^2\widehat{d_{dep+}}-2\nu \widehat{s_{dep+}}+(\widehat{c_2^+}-\widehat{c_1^+})}}
{\sqrt{\|\g\|_2^2\nu_{dep+}^2-\nu_{dep+}^2\widehat{d_{dep+}}+2\nu_{dep+} \widehat{s_{dep+}}-(\widehat{c_2^+}-\widehat{c_1^+})}}.\label{eq:mainhatw1non}
\end{equation}
Combination of (\ref{eq:mainhatnunon}) and (\ref{eq:mainhatw1non}) is conceptually enough to determine $\|\w_{dep+}\|_2$ (and then afterwards easily $E\xi_{prim+}^{(dep)}(\sigma,\g,\h,x_{mag},r_{socp+})$). The part that remains though is a computation of all unknown quantities that appear in (\ref{eq:mainhatnunon}) and (\ref{eq:mainhatw1non}). We will below show how that can be done. In doing so we as usual substantially rely on what was shown in \cite{StojnicCSetam09,StojnicGenSocp10} and assume a familiarity with the procedures presented there.

The first thing to resolve is (\ref{eq:c1c2c3non}). Since all random quantities concentrate we will be dealing (as in \cite{StojnicCSetam09,StojnicGenSocp10}) with the expected values. To compute the solution of (\ref{eq:c1c2c3non}), $\widehat{c_1}$ and $\widehat{c_2}$,  we will need the following expected values
\begin{equation}
E\|\g\|_2^2, E\|\bar{\h}^+_{c_{1}^++1:n-k}\|_2^2, E\|\bar{\h}^+_{n-k+1:c_2^+}\|_2^2,
E ((\bar{\h}^+_{c_{1}^++1:n-k})^T\z_{c_{1}^++1:n-k}), E ((\bar{\h}^+_{n-k+1:c_2^+})^T\z_{n-k+1:c_2^+}).\label{eq:mainexp1non}
\end{equation}
As in Section \ref{sec:unsigned} we easily have
\begin{equation}
E\|\g\|_2^2=m.\label{eq:mainexpgnon}
\end{equation}
Let $c_{1}^+=(1-\theta_1^+)n$ and $c_{2}^+=\theta_2^+ n$ where $\theta_{1}^+$ and $\theta_{2}^+$ are constants independent of $n$. Then as shown in \cite{StojnicCSetam09,StojnicGenSocp10}
\begin{equation}
\lim_{n\rightarrow\infty}\frac{E\|\bar{\h}^+_{c_{1}^++1:n-k}\|_2^2}{n}  =  \frac{1-\beta_{w}^+}{\sqrt{2\pi}}\left (\frac{\sqrt{2}(\erfinv(2\frac{1-\theta_{1}^+}{1-\beta_{w}^+}-1))}{e^{(\erfinv(2\frac{1-\theta_{1}^+}{1-\beta_{w}^+}-1))^2}}\right )+\theta_{1}^+-\beta_{w}^+.\label{eq:mainexpnormh1non}
\end{equation}
where we of course recall that $\beta_{w}^+=\frac{k}{n}$. Also, as in Section \ref{sec:unsigned} and earlier in \cite{StojnicCSetam09,StojnicGenSocp10}
we have
\begin{equation}
\lim_{n\rightarrow\infty}\frac{E\|\bar{\h}^+_{n-k+1:c_{2}^+}\|_2^2}{n}  =  \frac{\beta_{w}^+}{\sqrt{2\pi}}\left (\frac{\sqrt{2}\erfinv(2\frac{1-\theta_{2}^+}{\beta_{w}^+}-1)}{e^{(\erfinv(2\frac{1-\theta_{2}^+}{\beta_{w}^+}-1))^2}}\right )+\theta_{2}^+-1+\beta_{w}^+.\label{eq:mainexpnormh2non}
\end{equation}
Following further what was established in \cite{StojnicCSetam09,StojnicGenSocp10} we have
\begin{eqnarray}
\lim_{n\rightarrow\infty}\frac{E((\bar{\h}^+_{c_{1}^++1:n-k})^T\z_{c_{1}^++1:n-k})}{n} & = &
\left ((1-\beta_{w}^+)\sqrt{\frac{1}{2\pi}}e^{-(\erfinv(2\frac{1-\theta_{1}^+}{1-\beta_{w}^+}-1))^2}\right )\nonumber \\
\lim_{n\rightarrow\infty}\frac{E((\bar{\h}^+_{n-k+1:c_{2}^+})^T\z_{n-k+1:c_{2}^+})}{n} & = &
\left (\beta_{w}^+\sqrt{\frac{1}{2\pi}}e^{-(\erfinv(2\frac{1-\theta_{2}^+}{\beta_{w}^+}-1))^2}\right ).
\label{eq:mainexpprodnon}
\end{eqnarray}
From (\ref{eq:mainexpprodnon}) we also have
\begin{equation}
\lim_{n\rightarrow\infty}\frac{E(\sum_{i=c_2+1}^{n}\bar{\h}^+_{i})}{n} =
-\left (\beta_{w}^+\sqrt{\frac{1}{2\pi}}e^{-(\erfinv(2\frac{1-\theta_{2}^+}{\beta_{w}^+}-1))^2}\right ).\label{eq:mainexpprod1non}
\end{equation}
The only other thing that we will need in order to be able to compute $\widehat{c_{1}^+}$ and $\widehat{c_{2}^+}$ (besides the expectations from (\ref{eq:mainexp1non})) are the following inequalities related to the behavior of $\bar{\h}^+_{c_{1}^+}$ and $\bar{\h}^+_{c_{2}^+}$. Again, as shown in \cite{StojnicCSetam09,StojnicGenSocp10}
\begin{eqnarray}
P(\sqrt{2}\erfinv ((1+\epsilon_1^{\bar{\h}^+_{c_{1}^+}})(2\frac{1-\theta_{1}^+}{1-\beta_{w}^+}-1))\leq \bar{\h}^+_{c_{1}^+}) & \leq & e^{-\epsilon_2^{\bar{\h}^+_{c_{1}^+}} n}\nonumber \\
P(\sqrt{2}\erfinv ((1+\epsilon_1^{\bar{\h}^+_{c_{2}^+}})(2\frac{1-\theta_{2}^+}{\beta_{w}^+}-1))\leq \bar{\h}^+_{c_{2}^+}) & \leq & e^{-\epsilon_2^{\bar{\h}^+_{c_{2}^+}} n}.\label{eq:mainbrpointnon}
\end{eqnarray}
where $\epsilon_1^{\bar{\h}^+_{c_{1}^+}}>0$ and $\epsilon_1^{\bar{\h}^+_{c_{2}^+}}>0$ are arbitrarily small constants and $\epsilon_2^{\bar{\h}^+_{c_{1}^+}}$ and $\epsilon_2^{\bar{\h}^+_{c_{2}^+}}$ are constants dependent on $\epsilon_1^{\bar{\h}^+_{c_{1}^+}}$ and $\epsilon_1^{\bar{\h}^+_{c_{2}^+}}$, respectively, but independent of $n$.

At this point we have all the necessary ingredients to determine $\widehat{c_{1}^+}$ and $\widehat{c_{2}^+}$ and consequently $\nu_{dep+}$, $\|\w_{dep+}\|_2$, and $\xi_{prim+}^{(dep)}(\sigma,\g,\h,x_{mag},r_{socp+})$, or to be more precise their concentrating points. The following theorem then provides a systematic way of doing so.
\begin{theorem}
Assume the setup of Theorem \ref{thm:mainlassonon}. Let the nonzero components of $\xtilde$ have magnitude $x_{mag}$ and let $\bar{\h}^+$ be as defined in (\ref{eq:defhbarnon}). Further, let $r_{socp+}^{(sc)}=\lim_{n\rightarrow\infty}\frac{r_{socp+}}{\sqrt{n}}$ and $x_{mag}^{(sc)}=\lim_{n\rightarrow\infty}\frac{x_{mag}}{\sqrt{n}}$. Also, let $\nu_{dep+}$, $\|\w_{dep+}\|_2$, and $\xi_{prim+}^{(dep)}(\sigma,\g,\h,x_{mag},r_{socp+})$ be as defined in and right after (\ref{eq:mainlasso3vernon}). Let $\alpha=\frac{m}{n}$ and $\beta_{w}^+=\frac{k}{n}$ be fixed. Consider the following
\begin{equation*}
S(\theta_{1}^+,\theta_{2}^+)  =  \lim_{n\rightarrow\infty}\frac{E s_{dep+}}{n}  =  \left ((1-\beta_{w}^+)\sqrt{\frac{1}{2\pi}}e^{-(\erfinv(2\frac{1-\theta_{1}^+}{1-\beta_{w}^+}-1))^2}\right )+\left (\beta_{w}^+\sqrt{\frac{1}{2\pi}}e^{-(\erfinv(2\frac{1-\theta_{2}^+}{\beta_{w}^+}-1))^2}\right )
\end{equation*}
\begin{multline*}
D(\theta_{1}^+,\theta_{2}^+)  =  \lim_{n\rightarrow\infty}\frac{E d_{dep+}}{n}  =  \frac{1-\beta_{w}^+}{\sqrt{2\pi}}\left (\frac{\sqrt{2}(\erfinv(2\frac{1-\theta_{1}^+}{1-\beta_{w}^+}-1))}{e^{(\erfinv(2\frac{1-\theta_{1}^+}{1-\beta_{w}^+}-1))^2}}\right )+\theta_{1}^+-\beta_{w}^+\\+\frac{\beta_{w}^+}{\sqrt{2\pi}}\left (\frac{\sqrt{2}\erfinv(2\frac{1-\theta_{2}^+}{\beta_{w}^+}-1)}{e^{(\erfinv(2\frac{1-\theta_{2}^+}{\beta_{w}^+}-1))^2}}\right )+\theta_{2}^+-1+\beta_{w}^+
\end{multline*}
\begin{equation*}
R(\theta_{2}^+)  =  \lim_{n\rightarrow\infty}\frac{E r_{dep+}}{\sqrt{n}}  =  r_{socp+}^{(sc)}+x_{mag}^{(sc)}\left (\beta_{w}^+\sqrt{\frac{1}{2\pi}}e^{-(\erfinv(2\frac{1-\theta_{2}^+}{\beta_{w}^+}-1))^2}\right )
\end{equation*}
\begin{eqnarray}
A(\theta_{1}^+,\theta_{2}^+) & = & \lim_{n\rightarrow\infty}\frac{E a_{dep+}}{\sqrt{n}}=\frac{\sqrt{\sigma^2+(1-\theta_{2}^+)(x_{mag}^{(sc)})^2}(\alpha-D(\theta_{1}^+,\theta_{2}^+))}{R(\theta_{2}^+)}\nonumber \\
B(\theta_{1}^+,\theta_{2}^+) & = & \lim_{n\rightarrow\infty}\frac{E b_{dep+}}{\sqrt{n}}=\frac{\sqrt{\sigma^2+(1-\theta_{2}^+)(x_{mag}^{(sc)})^2}S(\theta_{1}^+,\theta_{2}^+)}{R(\theta_{2}^+)}\nonumber \\
F(\theta_{1}^+) & = & \sqrt{2}\erfinv (2\frac{1-\theta_{1}^+}{1-\beta_{w}^+}-1)\nonumber \\
G(\theta_{2}^+) & = & \sqrt{2}\erfinv (2\frac{1-\theta_{2}^+}{\beta_{w}^+}-1).\label{eq:maincompthmcond1non}
\end{eqnarray}
Set
\begin{multline}
N(\theta_{1}^+,\theta_{2}^+)=\frac{-(A(\theta_{1}^+,\theta_{2}^+)B(\theta_{1}^+,\theta_{2}^+)-S(\theta_{1}^+,\theta_{2}^+))}
{A(\theta_{1}^+,\theta_{2}^+)^2-\alpha+D(\theta_{1}^+,\theta_{2}^+)}\nonumber \\
\hspace{-.6in}-\frac{\sqrt{(A(\theta_{1}^+,\theta_{2}^+)B(\theta_{1}^+,\theta_{2}^+)-S(\theta_{1}^+,\theta_{2}^+))^2-
(B(\theta_{1}^+,\theta_{2}^+)^2+\theta_{1}^++\theta_{2}^+-1)(A(\theta_{1}^+,\theta_{2}^+)^2-\alpha+D(\theta_{1}^+,\theta_{2}^+))}}
{A(\theta_{1}^+,\theta_{2}^+)^2-\alpha+D(\theta_{1}^+,\theta_{2}^+)}.
\end{multline}
Let the pair ($\widehat{\theta_{1}^+}$, $\widehat{\theta_{2}^+}$) be the solution of the following two equations
\begin{eqnarray}
\hspace{-.6in}N(\theta_{1}^+,\theta_{2}^+)G(\theta_{2}^+)+\frac{x_{mag}^{(sc)}\sqrt{N(\theta_{1}^+,\theta_{2}^+)^2(\alpha
-D(\theta_{1}^+,\theta_{2}^+))+2N(\theta_{1}^+,\theta_{2}^+) S(\theta_{1}^+,\theta_{2}^+)-(\theta_{1}^++\theta_{2}^+-1)}}
{\sqrt{\sigma^2+(1-\theta_{2}^+)(x_{mag}^{(sc)})^2}} & = & 1\nonumber \\
F(\theta_{1}^+)N(\theta_{1}^+,\theta_{2}^+) & = & 1.\nonumber \\\label{eq:mainthmc1c2c3non}
\end{eqnarray}
Then the concentrating points of $\nu_{dep+}$, $\|\w_{dep+}\|_2$, and $\xi_{prim+}^{(dep)}(\sigma,\g,\h,x_{mag},r_{socp+})$ can be determined as
\begin{equation*}
E\nu_{dep+}  =  N(\widehat{\theta_{1}^+},\widehat{\theta_{2}^+})
\end{equation*}
\begin{equation*}
\hspace{-.8in}E\|\w_{dep+}\|_2  =  \sigma\frac{\sqrt{N(\widehat{\theta_{1}^+},\widehat{\theta_{2}^+})^2(\alpha(1-\widehat{\theta_{2}^+})\frac{(x_{mag}^{(sc)})^2}{\sigma^2}
+
D(\widehat{\theta_{1}^+},\widehat{\theta_{2}^+}))-2N(\widehat{\theta_{1}^+},\widehat{\theta_{2}^+}) S(\widehat{\theta_{1}^+},\widehat{\theta_{2}^+})+(\widehat{\theta_{1}^+}+\widehat{\theta_{2}^+}-1)}}
{\sqrt{N(\widehat{\theta_{1}^+},\widehat{\theta_{2}^+})^2(\alpha
-
D(\widehat{\theta_{1}^+},\widehat{\theta_{2}^+}))+2N(\widehat{\theta_{1}^+},\widehat{\theta_{2}^+}) S(\widehat{\theta_{1}^+},\widehat{\theta_{2}^+})-(\widehat{\theta_{1}^+}+\widehat{\theta_{2}^+}-1)}}
\end{equation*}
\begin{multline}
\hspace{-.9in}\lim_{n\rightarrow\infty}\frac{E\xi_{prim+}^{(dep)}(\sigma,\g,\h,x_{mag},r_{socp+})}{\sqrt{n}}  =  \sigma
\frac{\sqrt{N(\widehat{\theta_{1}^+},\widehat{\theta_{2}^+})^2(\alpha
-
D(\widehat{\theta_{1}^+},\widehat{\theta_{2}^+}))+2N(\widehat{\theta_{1}^+},\widehat{\theta_{2}^+}) S(\widehat{\theta_{1}^+},\widehat{\theta_{2}^+})-(\widehat{\theta_{1}^+}+\widehat{\theta_{2}^+}-1)}}
{\sqrt{1+(1-\widehat{\theta_{2}^+})\frac{(x_{mag}^{(sc)})^2}{\sigma^2}}}\\
-N(\widehat{\theta_{1}^+},\widehat{\theta_{2}^+})r_{socp+}^{(sc)}.\label{eq:mainthmnuwgenxiprimnon}
\end{multline}
\label{thm:maincomperrornon}
\end{theorem}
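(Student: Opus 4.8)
The plan is to mirror, step for step, the argument that established Theorem~\ref{thm:maincomperror}, replacing the three-block dual structure of the general case by the two-block structure forced by the one-sided constraint $\lambda_i\ge 0$ of the signed problem. The point of departure is Theorem~\ref{thm:mainlassonon}: its concentration statements reduce the analysis of the random program (\ref{eq:socpnon}) to that of the deterministic maximization (\ref{eq:mainlasso3vernon}), in the sense that $\|\w_{socp+}\|_2$ and $f_{obj+}$ concentrate, with overwhelming probability, about the expected values of $\|\w_{dep+}\|_2$ and $\xi_{prim+}^{(dep)}$. Hence it suffices to locate the concentrating points of $\nu_{dep+}$, $\|\w_{dep+}\|_2$ and $\xi_{prim+}^{(dep)}$, and these are exactly what the closed forms (\ref{eq:mainhatnunon}) and (\ref{eq:mainhatw1non}) deliver once the breakpoints $\widehat{c_1^+}$, $\widehat{c_2^+}$ are known.

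The second step is purely the deterministic bookkeeping already carried out in Section~\ref{sec:probdepframenon}, which I would simply assemble. Writing the optimal $\lambda^{(dep+)}$ in its two-block form with breakpoints $c_1^+$ and $c_2^+$, the stationarity conditions (\ref{eq:mainlasso6non}) eliminate the active multipliers and collapse the objective to (\ref{eq:mainlasso10non}); stationarity in $\nu$, together with the shorthands $s_{dep+},d_{dep+},r_{dep+},a_{dep+},b_{dep+}$ of (\ref{eq:defdepnon}), yields the quadratic (\ref{eq:mainlasso13non}) whose admissible root is (\ref{eq:mainlasso14non}), while the two boundary conditions (\ref{eq:c1c2c3non}), simplified through (\ref{eq:mainlasso16non}), characterize $\widehat{c_1^+},\widehat{c_2^+}$. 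Everything to this point is valid realization by realization.

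The third and central step is to pass every random aggregate to its concentrating value under the scaling $c_1^+=(1-\theta_1^+)n$, $c_2^+=\theta_2^+n$, together with the stated normalizations of $r_{socp+}$ and $x_{mag}$. Feeding the expectation $E\|\g\|_2^2=m$ from (\ref{eq:mainexpgnon}) and the sorted-Gaussian moments (\ref{eq:mainexpnormh1non})--(\ref{eq:mainexpprod1non}) into (\ref{eq:defdepnon}) identifies the limits of $s_{dep+},d_{dep+},r_{dep+},a_{dep+},b_{dep+}$ with $S,D,R,A,B$ of (\ref{eq:maincompthmcond1non}); the two boundary magnitudes $\bar{\h}^+_{c_2^+}$ and $\bar{\h}^+_{c_1^+}$ converge, by the one-sided tail estimates (\ref{eq:mainbrpointnon}), to $G(\theta_2^+)$ and $F(\theta_1^+)$. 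Substituting these limits into (\ref{eq:mainlasso14non}) shows that $\nu_{dep+}$ concentrates at $N(\theta_1^+,\theta_2^+)$, and the two lines of (\ref{eq:c1c2c3non}) pass, respectively, to the two fixed-point equations (\ref{eq:mainthmc1c2c3non}), whose solution defines $(\widehat{\theta_1^+},\widehat{\theta_2^+})$. Inserting $N(\widehat{\theta_1^+},\widehat{\theta_2^+})$ back into the simplified error expression (\ref{eq:mainhatw1non}) and into the objective (\ref{eq:mainlasso10non}) produces the announced formulas for $E\|\w_{dep+}\|_2$ and $\lim_n E\xi_{prim+}^{(dep)}/\sqrt{n}$ in (\ref{eq:mainthmnuwgenxiprimnon}).

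I expect the main obstacle to be not the algebra but the legitimacy of evaluating data-dependent partial sums at the \emph{random} breakpoints and then interchanging the limit with the solution of the fixed-point system. One must show that $\widehat{c_1^+}/n$ and $\widehat{c_2^+}/n$ themselves concentrate about the deterministic solution $(\widehat{\theta_1^+},\widehat{\theta_2^+})$ of (\ref{eq:mainthmc1c2c3non}); this is handled exactly as in \cite{StojnicCSetam09,StojnicGenSocp10}, where the strict monotonicity of the sorted vector $\bar{\h}^+$ makes the breakpoint conditions locally well-posed and the tail bounds (\ref{eq:mainbrpointnon}) confine each breakpoint to an $o(n)$ window around its nominal location with overwhelming probability. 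Combined with the concentration of $\|\g\|_2^2$ and of the partial sums and sums of squares of $\bar{\h}^+$, this lets every coefficient be replaced by its expectation at an exponentially small cost. The feasibility and finiteness hypotheses carried over from Theorem~\ref{thm:mainlassonon} guarantee that the radicands in (\ref{eq:mainhatw1non}) and (\ref{eq:mainthmnuwgenxiprimnon}) stay positive, so the limiting expressions are well defined.
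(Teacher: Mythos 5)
Your proposal is correct and follows essentially the same route as the paper: the paper's proof of Theorem \ref{thm:maincomperrornon} is precisely to invoke Theorem \ref{thm:mainlassonon}, the two-block breakpoint derivation of Section \ref{sec:probdepframenon} (equations (\ref{eq:defdepnon}), (\ref{eq:c1c2c3non}), (\ref{eq:mainhatnunon}), (\ref{eq:mainhatw1non})), and the sorted-Gaussian limits and tail bounds (\ref{eq:mainexpnormh1non})--(\ref{eq:mainbrpointnon}) to replace all random aggregates by their concentrating values, exactly as you outline. Your closing paragraph on the concentration of the breakpoints themselves is a point the paper delegates to \cite{StojnicCSetam09,StojnicGenSocp10} rather than re-proving, so it is consistent with, not a departure from, the paper's argument.
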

\begin{proof}
Follows from Theorem \ref{thm:mainlassonon} based on the discussion presented above and a combination of (\ref{eq:defdepnon}), (\ref{eq:c1c2c3non}), (\ref{eq:mainhatnunon}), and (\ref{eq:mainhatw1non}).
\end{proof}

The results from the above theorem can be used to compute parameters of interest in our derivation for particular values of $\beta_{w}^+$, $\alpha$, $\sigma$, $x_{mag}$, and $r_{socp+}$. In the following subsections we will present a collection of such results. However, before doing so in the next subsection we take a look back and discuss the feasibility of (\ref{eq:socpnon}) in a bit more detail.

%%%%%%%%%%%%%%%%%%%%%%%%%%%%%%%%%%%%%%%%%%%%%%%%%%%%%%%%%%%%%%%%%%%%%%%%%%%%%%%%%%%%%%%
\subsubsection{Feasibility of (\ref{eq:socpnon})} \label{sec:feasibilty}
%%%%%%%%%%%%%%%%%%%%%%%%%%%%%%%%%%%%%%%%%%%%%%%%%%%%%%%%%%%%%%%%%%%%%%%%%%%%%%%%%%%%%%%

As we have mentioned at the beginning of this section the optimization problem in (\ref{eq:socpnon}) is not necessarily feasible for all possible choices of $A,\v,\sigma,\xtilde$, and $r_{socp+}$. Analogously, (\ref{eq:mainlasso1non}) and (\ref{eq:mainlasso3vernon}) are not necessarily bounded for all choices of $\alpha,\beta_w^+,\sigma,\g,\h,\xtilde$, and $r_{socp+}$. Below we provide a brief discussion on potential unboundedness of (\ref{eq:mainlasso1non}) and (\ref{eq:mainlasso3vernon}). We will split the discussion into two parts. First we focus on a couple of scenarios where the objective of (\ref{eq:mainlasso1non}) and (\ref{eq:mainlasso3vernon}) is bounded. Afterwards we present a procedure that can be used to determine $x_{mag}^{(sc)}$ for which (\ref{eq:mainlasso3vernon}) becomes unbounded and (\ref{eq:socpnon}) infeasible.

\textbf{\underline{\emph{1) Universally feasible scenarios}}}

Clearly, the most critical case for (\ref{eq:mainlasso3vernon}) to be unbounded is that $x_{mag}^{(sc)}=0$. Even if $x_{mag}^{(sc)}=0$ one can distinguish two important scenarios for parameters $\alpha,\beta_w^+,\sigma$, and $r_{socp+}$ such that the objective in (\ref{eq:mainlasso3vernon}) is bounded with overwhelming probability.

\underline{\emph{a) $r_{socp+}>\sigma \sqrt{m}$}}

The first scenario assumes $r_{socp+}>\sigma \sqrt{m}$. Then for any combination of $(\alpha,\beta_w^+)$ that lies on or below the signed fundamental characterization (\ref{eq:fundl1non}) and any $\sigma$ one has that
\begin{equation}\sigma\sqrt{\|\g\|_2^2\nu^2-\|\nu\bar{\h}^+-\z+\lambda\|_2^2}-\nu r_{socp+}=\nu(\sigma\sqrt{\|\g\|_2^2-\|\bar{\h}^+-\frac{\z}{\nu}+\frac{\lambda}{\nu}\|_2^2}-r_{socp+}).\label{eq:feas1}
\end{equation}
Let $\epsilon^{(feas)}_1>0$ be an arbitrarily small constant and let $\epsilon^{(feas)}_2$ be a constant dependent on $\epsilon^{(feas)}_1$ but independent of $n$. Since
$$P(\|\g\|_2\leq (1+\epsilon^{(feas)}_1)\sqrt{m})\geq 1-e^{-\epsilon^{(feas)}_2 n}$$
one has that with overwhelming probability
$$\sigma\sqrt{\|\g\|_2^2\nu^2-\|\nu\bar{\h}^+-\z+\lambda\|_2^2}-\nu r_{socp+}\leq 0$$ which implies that the objective in (\ref{eq:mainlasso3vernon}) is indeed bounded with overwhelming probability.

\underline{\emph{b) $\alpha\leq 0.5$}}

The second scenario that we consider assumes that $\alpha\leq 0.5$. Then for any combination of $(\alpha,\beta_w^+)$ that lies on or below the signed fundamental characterization (\ref{eq:fundl1non}) and any $\sigma$ one again has that the objective in (\ref{eq:mainlasso3vernon}) is bounded with overwhelming probability. To show this we will look at the expression on the right hand side of (\ref{eq:feas1}) instead of looking at the objective of (\ref{eq:mainlasso3vernon}). Clearly, to have that expression unbounded one must have $\nu\rightarrow\infty$ and the term in the parenthesis must be positive. Also the term under the square root must be nonnegative. Since $\z_i=1$ for any $i$ one easily has that $\z/\nu$ would have to converge to $0$. Let us therefore look at the following function $\zeta(\lambda)$
\begin{equation}
\zeta(\lambda)=\|\g\|_2^2-\|\bar{\h}^++\frac{\lambda}{\nu}\|_2^2.\label{eq:feasdefzeta}
\end{equation}
Furthermore let
\begin{equation}
\widehat{\zeta}=\max_{\lambda}\zeta(\lambda)=\max_{\lambda}\|\g\|_2^2-\|\bar{\h}^++\frac{\lambda}{\nu}\|_2^2.\label{eq:feas2}
\end{equation}
Then it is not that hard to see that
\begin{equation}
\widehat{\zeta}=\|\g\|_2^2-\|\bar{\h}^+_{c_1^{f+}+1:c_2^{f+}}\|_2^2,\label{eq:feas2}
\end{equation}
for certain $c_1^{f+}+1\leq n-k$ and $n-k+1\geq c_2^{f+}\leq n$. For any arbitrarily small constants $\epsilon^{(feas)}_3>0$ and $\epsilon^{(feas)}_5>0$ and constants $\epsilon^{(feas)}_4$, $\epsilon^{(feas)}_6$ dependent on $\epsilon^{(feas)}_3$ and $\epsilon^{(feas)}_5$, respectively but independent of $n$ one easily has $$P(\bar{\h}^+_{((1+\epsilon^{(feas)}_3)\frac{n-k}{2})}\geq 0)\geq 1-e^{-\epsilon^{(feas)}_4 n}$$ and
$$P(\bar{\h}^+_{((1-\epsilon^{(feas)}_5)\frac{2n-k}{2})}\geq 0)\geq 1-e^{-\epsilon^{(feas)}_5 n}.$$ One then with overwhelming probability has that $c_1^{f+}<(1+\epsilon^{(feas)}_3)\frac{n-k}{2}$ and $c_2^{f+}>(1-\epsilon^{(feas)}_5)\frac{2n-k}{2}$. From (\ref{eq:feas2}) it then easily follows that with overwhelming probability
\begin{equation}
\widehat{\zeta}=\|\g\|_2^2-\|\bar{\h}^+_{c_1^{f+}+1:c_2^{f+}}\|_2^2<m-\frac{n}{2}.\label{eq:feas3}
\end{equation}
On the other hand if $\alpha< 0.5$ then $m<\frac{n}{2}$ and from (\ref{eq:feasdefzeta}), (\ref{eq:feas2}), and (\ref{eq:feas3}) one has that with overwhelming probability $\zeta(\lambda)$ can not be positive for $\nu\rightarrow\infty$. This in return implies that with overwhelming probability the objective in (\ref{eq:mainlasso3vernon}) is not unbounded.

We also mention that the feasibility of (\ref{eq:socpnon}) in the above mentioned scenarios could be deduced by looking directly at (\ref{eq:socpnon}). For example, if $r_{socp+}>\sigma m$ then $\x=\xtilde$ is with overwhelming probability feasible in (\ref{eq:socpnon}). On the other hand if $\alpha<0.5$ then based on results of \cite{StojnicISIT2010binary} (and earlier \cite{DTbern}) one has that the norm-2 in the constraint of (\ref{eq:socpnon}) can with overwhelming probability be made zero (this is in fact exactly the inverse problem of the one considered in \cite{StojnicISIT2010binary,DTbern}). We will present this small exercise in one of our forthcoming papers since it fits better the topic there.

Also, we should mention that one can define many other scenarios where (\ref{eq:mainlasso3vernon}) is bounded. However, we restricted only to the above two since they are relatively simple to describe and have a nice connection to already known results.

\textbf{\underline{\emph{2) Finding the feasibility breaking point}}}

In the rest of this subsection we will present a general mechanism that can be used to determine a critical $x_{mag}^{(sc)}$ for which the objective in (\ref{eq:mainlasso3vernon}) becomes unbounded and (\ref{eq:socpnon}) infeasible. We start by rewriting the objective of (\ref{eq:mainlasso3vernon}) in the following way
$$\nu(\sigma\sqrt{\|\g\|_2^2-\|\bar{\h}^+-\frac{\z}{\nu}+\frac{\lambda}{\nu}\|_2^2} -x_{mag}\sum_{i=n-k+1}^{n}\frac{\lambda_i}{\nu}- r_{socp+}).$$
To have the above expression unbounded one needs $\lambda_i=\nu\lambda_i^{(\nu)}$, $\nu\rightarrow\infty$, and the expressions under the square root and in the parenthesis to be positive. Let us then consider
\begin{eqnarray}
\max_{\lambda^{(\nu)}} & & \sigma\sqrt{\|\g\|_2^2-\|\bar{\h}^++\lambda^{\nu}\|_2^2} -x_{mag}\sum_{i=n-k+1}^{n}\lambda_i^{(\nu)}- r_{socp+}\nonumber \\
\mbox{subject to} & & \lambda_i^{(\nu)}\geq 0, 1\leq i\leq n.\label{eq:feas4}
\end{eqnarray}
Let $\lambda^{(feas)}$ be the solution of (\ref{eq:mainlasso3vernon}). Following the line of thought presented in Sections \ref{sec:unsigned} and \ref{sec:probdepframenon} there will be parameters $c_{1}^{(feas)}$ and $c_{2}^{(feas)}$ such that $$\lambda^{(feas)}=[\lambda_1^{(feas)},\lambda_2^{(feas)},\dots,\lambda_{c_{1}^{(feas)}}^{(feas)},0,0,\dots,0,\lambda_{c_{2}^{(feas)}+1}^{(feas)},
\lambda_{c_{2}^{(feas)}+2}^{(feas)},
\dots,\lambda_{n}^{(feas)}]$$ and obviously $c_{1}^{(feas)}\leq n-k$ and $n-k\leq c_{2}^{(feas)}\leq n$. At this point let us assume that these parameters are known and fixed. Then following what was presented in Sections \ref{sec:unsigned} and \ref{sec:probdepframenon} the optimization problem from (\ref{eq:feas4}) can be rewritten in the following way
\begin{eqnarray}
\hspace{-.45in}\max_{\lambda_{c_{1}^{(feas)}+1:n}} & & \sigma\sqrt{\|\g\|_2^2-\|\bar{\h}^+_{c_{1}^++1:n}+\lambda_{c_{1}^{(feas)}+1:n}\|_2^2} -x_{mag}\sum_{i=n-k+1}^{n}\lambda_i-r_{socp+}\nonumber \\
\mbox{subject to}
& & \lambda_i\geq 0, c_{2}^{(feas)}+1\leq i\leq n\nonumber \\
& & \lambda_i=0, c_{1}^{(feas)}+1\leq i\leq c_{2}^{(feas)}.\label{eq:feas5}
\end{eqnarray}
Based on the arguments just above (\ref{eq:feas4}) one has that if the optimal value of the objective in (\ref{eq:feas4}) is positive then the optimization problem in (\ref{eq:mainlasso3vernon}) is unbounded. We will then call the largest $x_{mag}^{(sc)}$ for which the optimal value of the objective in (\ref{eq:feas4}) is positive the feasibility breaking point. To determine such an $x_{mag}^{(sc)}$ we proceed by solving the above optimization problem.
To make writing of what will follow easier we set
\begin{equation}
\xi^{(feas)}=\sigma\sqrt{\|\g\|_2^2-\|\bar{\h}^+_{c_{1}^++1:n}+\lambda_{c_{1}^{(feas)}+1:n}\|_2^2} -x_{mag}\sum_{i=n-k+1}^{n}\lambda_i-r_{socp+},
\label{eq:feasdefxiobjnon}
\end{equation}
and
\begin{eqnarray}
\hspace{-.0in}\widehat{\xi^{(feas)}}=\max_{\lambda_{c_{1}^{(feas)}+1:n}} & & \xi^{(feas)}\nonumber \\
\mbox{subject to}
& & \lambda_i\geq 0, c_{2}^{(feas)}+1\leq i\leq n\nonumber \\
& & \lambda_i=0, c_{1}^{(feas)}+1\leq i\leq c_{2}^{(feas)}.\label{eq:feasdefhatxiobjnon}
\end{eqnarray}
Similarly to what was done in Section \ref{sec:probdepframenon} we then proceed by solving the optimization in (\ref{eq:feas5}) (or the one in (\ref{eq:feasdefhatxiobjnon})) over $\lambda_{c_{1}^{(feas)}+1:n}$. To do so we look at the derivatives with respect to $\lambda_i,c_{2}^{(feas)}+1\leq i\leq n$, of the objective in (\ref{eq:feas5}). Computing the derivatives and equalling them to zero gives
\begin{eqnarray}
\hspace{-.3in}& & \frac{d \xi^{(feas)}}{d \lambda_i}  =  0, c_{2}^{(feas)}+1\leq i\leq n\nonumber \\
& \iff &  \sigma \frac{-(\bar{\h}^+_i+\lambda_i)}
{\sqrt{\|\g\|_2^2-\|\bar{\h}^+_{c_{1}^{(feas)}+1:n}+\lambda_{c_{1}^{(feas)}+1:n}\|_2^2}}-x_{mag} =  0, c_{2}^{(feas)}+1\leq i\leq n\nonumber \\
& \iff &  \lambda_i+
\bar{\h}^+_{i}=-\frac{x_{mag}}{\sigma}\sqrt{\|\g\|_2^2-\|\bar{\h}^+_{c_{1}^{(feas)}+1:n}+\lambda_{c_{1}^{(feas)}+1:n}\|_2^2},
c_{2}^{(feas)}+1\leq i\leq n.\nonumber \\\label{eq:feas6}
\end{eqnarray}
%Let
%\begin{equation}
%\gamma_f=\lambda_i+
%\bar{\h}^+_{i},
%c_{2}^{(feas)}+1\leq i\leq n.\label{eq:feas7}
%\end{equation}
%Then the optimization problem in (\ref{eq:feas5}) can be rewritten in the following way
%\begin{eqnarray}
%\hspace{-.45in}\max_{c_{1}^{(feas)},c_{2}^{(feas)}} & & \sigma\sqrt{\|\g\|_2^2-\|\bar{\h}^+_{c_{1}^
%{(feas)}+1:n-k}\|_2^2-(n-c_{2}^{(feas)})\gamma_f} -x_{mag}\gamma_f(n-c_{2}^{(feas)})+x_{mag}\sum_{i=c_{2}^{(feas)}+1}^{n}\bar{\h}^+_{i}-r_{socp+}\nonumber \\
%\mbox{subject to}
%& & n-k\leq c_{2}^{(feas)}\leq n\nonumber \\
%& & \bar{\h}^+_{c_{2}^{(feas)}}\leq \gamma_f\leq 0\nonumber\\
%& & \bar{\h}^+_{c_{1}^{(feas)}}\leq 0.\label{eq:feas8}
%\end{eqnarray}
From the second to line in the above equation one then has
\begin{equation*}
(\lambda_i+\bar{\h}^+_{i})^2=\frac{x_{mag}^2}{\sigma^2}(\|\g\|_2^2-\|\bar{\h}^+_{c_{1}^{(feas)}+1:c_{2}^{(feas)}}
+\lambda_{c_{1}^{(feas)}+1:c_{2}^{(feas)}}\|_2^2-\|\bar{\h}_{c_{2}^{(feas)}+1:n}+\lambda_{c_{2}^{(feas)}+1:n}
\|_2^2)
\end{equation*}
and after an easy algebraic transformation
\begin{equation}
\hspace{-.0in}(\lambda_i+\bar{\h}^+_{i})^2=\frac{x_{mag}^2}{\sigma^2+(n-c_{2}^{(feas)})x_{mag}^2}(\|\g\|_2^2-\|\bar{\h}^+_{c_{1}^{(feas)}+1:c_{2}^{(feas)}}
\|_2^2).\label{eq:feas7}
\end{equation}
Using (\ref{eq:feas7}) we further have
\begin{equation}
\hspace{-.0in}\sqrt{\|\g\|_2^2-\|\bar{\h}^+_{c_{1}^{(feas)}+1:n}
+\lambda_{c_{1}^{(feas)}+1:n}\|_2^2}
=\frac{\sigma}{\sqrt{\sigma^2+(n-c_{2}^{(feas)})x_{mag}^2}}\sqrt{\|\g\|_2^2-\|\bar{\h}^+_{c_{1}^{(feas)}+1:c_{2}^{(feas)}}\|_2^2}.
\label{eq:feas8}
\end{equation}
Plugging the value for $\lambda_i$ from (\ref{eq:feas7}) in (\ref{eq:feasdefxiobjnon}) gives
\begin{eqnarray}
\xi^{(feas)} & = & \sigma\sqrt{\|\g\|_2^2-\|\bar{\h}^+_{c_{1}^{(feas)}+1:n}+\lambda_{c_{1}^{(feas)}+1:n}\|_2^2} -x_{mag}\sum_{i=n-k+1}^{n}\lambda_i- r_{socp+}\nonumber \\
& = & \frac{\sigma^2+(n-c_{2}^{(feas)})x_{mag}^2}{\sigma}\sqrt{\|\g\|_2^2-\|\bar{\h}^+_{c_{1}^{(feas)}+1:n}+\lambda_{c_{1}^{(feas)}+1:n}\|_2^2}\nonumber \\
& + & x_{mag}\sum_{i=c_2^{(feas)}+1}^{n}\bar{\h}^+_i-r_{socp+}.
\label{eq:feas9}
\end{eqnarray}
Combining (\ref{eq:feas8}) and (\ref{eq:feas9}) we finally obtain
\begin{equation}
\hspace{-.3in}\xi^{(feas)}
 =  \sqrt{\sigma^2+(n-c_{2}^{(feas)})x_{mag}^2}
\sqrt{\|\g\|_2^2-\|\bar{\h}^+_{c_{1}^{(feas)}+1:c_{2}^{(feas)}}\|_2^2}
- (r_{socp+}- x_{mag}\sum_{i=c_2^{(feas)}+1}^{n}\bar{\h}^+_i).
\label{eq:feas10}
\end{equation}
Following what was done in Sections \ref{sec:unsigned} and \ref{sec:probdepframenon} we have that a combination of (\ref{eq:feas7}) and (\ref{eq:feas9}) (together with the fact that the elements of $\lambda$ are nonnegative) gives the following two equations that can be used to determine $c_1^{(feas)}$ and $c_2^{(feas)}$
\begin{eqnarray}
\bar{\h}^+_{c_{2}^{(feas)}}+\frac{x_{mag}}{\sqrt{\sigma^2+(n-c_{2}^{(feas)})x_{mag}^2}}\sqrt{\|\g\|_2^2-\|\bar{\h}^+_{c_{1}^{(feas)}+1:c_{2}^{(feas)}}\|_2^2}& =& 0\nonumber \\
\bar{\h}^+_{c_1^{(feas)}} & \leq & 0.\nonumber \\
\label{eq:feas11}
\end{eqnarray}
Let $\widehat{c_1^{(feas)}}$ and $\widehat{c_2^{(feas)}}$ be the solution of (\ref{eq:feas11}). Then from (\ref{eq:feas10}) we have
\begin{equation}
\hspace{-.0in}\widehat{\xi^{(feas)}}
 =  \sqrt{\sigma^2+(n-\widehat{c_{2}^{(feas)}})x_{mag}^2}
\sqrt{\|\g\|_2^2-\|\bar{\h}^+_{\widehat{c_{1}^{(feas)}}+1:\widehat{c_{2}^{(feas)}}}\|_2^2}
- (r_{socp+}- x_{mag}\sum_{i=\widehat{c_2^{(feas)}}+1}^{n}\bar{\h}^+_i).
\label{eq:feas12}
\end{equation}
Combination of (\ref{eq:feas11}) and (\ref{eq:feas12}) is conceptually enough to determine the feasibility breaking point for $x_{mag}^{(sc)}$. The part that remains is a computation of all unknown quantities that appear in (\ref{eq:feas11}) and (\ref{eq:feas12}). To do it we will as usual substantially rely on what was shown in previous section and basically in \cite{StojnicCSetam09,StojnicGenSocp10}.

The first thing to resolve is (\ref{eq:feas11}). Since all random quantities concentrate we will again deal with the expected values. To compute the solution of (\ref{eq:feas11}), $\widehat{c_1^{(feas)}}$ and $\widehat{c_2^{(feas)}}$,  we will need the following expected values
\begin{equation}
E\|\g\|_2^2, E\|\bar{\h}^+_{c_{1}^{(feas)}+1:n-k}\|_2^2, E\|\bar{\h}^+_{n-k+1:c_2^+}\|_2^2,
E\sum_{i=c_{2}^{(feas)}+1}^{n}\bar{\h}^+_i.\label{eq:feasexp1non}
\end{equation}
As in previous sections we easily have
\begin{equation}
E\|\g\|_2^2=m.\label{eq:feasexpgnon}
\end{equation}
Let $c_{1}^{(feas)}=(1-\theta_1^{(feas)})n$ and $c_{2}^{(feas)}=\theta_2^{(feas)} n$ where $\theta_{1}^{(feas)}$ and $\theta_{2}^{(feas)}$ are constants independent of $n$. Then as shown in \cite{StojnicCSetam09,StojnicGenSocp10}
\begin{equation}
\lim_{n\rightarrow\infty}\frac{E\|\bar{\h}^+_{c_{1}^{(feas)}+1:n-k}\|_2^2}{n}  =  \frac{1-\beta_{w}^+}{\sqrt{2\pi}}\left (\frac{\sqrt{2}(\erfinv(2\frac{1-\theta_{1}^{(feas)}}{1-\beta_{w}^+}-1))}{e^{(\erfinv(2\frac{1-\theta_{1}^{(feas)}}{1-\beta_{w}^+}-1))^2}}\right )+\theta_{1}^{(feas)}-\beta_{w}^+.\label{eq:feasexpnorm1non}
\end{equation}
where we of course recall that $\beta_{w}^+=\frac{k}{n}$. Also, as in Section \ref{sec:unsigned} and earlier in \cite{StojnicCSetam09,StojnicGenSocp10}
we have
\begin{equation}
\lim_{n\rightarrow\infty}\frac{E\|\bar{\h}^+_{n-k+1:c_{2}^{(feas)}}\|_2^2}{n}  =  \frac{\beta_{w}^+}{\sqrt{2\pi}}\left (\frac{\sqrt{2}\erfinv(2\frac{1-\theta_{2}^{(feas)}}{\beta_{w}^+}-1)}{e^{(\erfinv(2\frac{1-\theta_{2}^{(feas)}}{\beta_{w}^+}-1))^2}}\right )+\theta_{2}^{(feas)}-1+\beta_{w}^+.\label{eq:feasexpnormh2non}
\end{equation}
Following further what was established in \cite{StojnicCSetam09,StojnicGenSocp10} we also have
\begin{equation}
\lim_{n\rightarrow\infty}\frac{E(\sum_{i=c_2^{(feas)}+1:n}^{n}\bar{\h}^+_{i})}{n} =
-\left (\beta_{w}^+\sqrt{\frac{1}{2\pi}}e^{-(\erfinv(2\frac{1-\theta_{2}^{(feas)}}{\beta_{w}^+}-1))^2}\right ).\label{eq:feasexpprod1non}
\end{equation}
The only other thing that we will need in order to be able to compute $\widehat{c_{1}^{(feas)}}$ and $\widehat{c_{2}^{(feas)}}$ (besides the expectations from (\ref{eq:feasexp1non})) are the following inequalities related to the behavior of $\bar{\h}^+_{c_{1}^{(feas)}}$ and $\bar{\h}^+_{c_{2}^{(feas)}}$. Again, as shown in \cite{StojnicCSetam09,StojnicGenSocp10}
\begin{eqnarray}
P(\sqrt{2}\erfinv ((1+\epsilon_1^{\bar{\h}^+_{c_{1}^{(feas)}}})(2\frac{1-\theta_{1}^{(feas)}}{1-\beta_{w}^+}-1))\leq \bar{\h}^+_{c_{1}^{(feas)}}) & \leq & e^{-\epsilon_2^{\bar{\h}^+_{c_{1}^{(feas)}}} n}\nonumber \\
P(\sqrt{2}\erfinv ((1+\epsilon_1^{\bar{\h}^+_{c_{2}^{(feas)}}})(2\frac{1-\theta_{2}^{(feas)}}{\beta_{w}^+}-1))\leq \bar{\h}^+_{c_{2}^{(feas)}}) & \leq & e^{-\epsilon_2^{\bar{\h}^+_{c_{2}^{(feas)}}} n}.\label{eq:mainbrpointnon}
\end{eqnarray}
where $\epsilon_1^{\bar{\h}^+_{c_{1}^{(feas)}}}>0$ and $\epsilon_1^{\bar{\h}^+_{c_{2}^{(feas)}}}>0$ are arbitrarily small constants and $\epsilon_2^{\bar{\h}^+_{c_{1}^{(feas)}}}$ and $\epsilon_2^{\bar{\h}^+_{c_{2}^{(feas)}}}$ are constants dependent on $\epsilon_1^{\bar{\h}^+_{c_{1}^{(feas)}}}$ and $\epsilon_1^{\bar{\h}^+_{c_{2}^{(feas)}}}$, respectively, but independent of $n$. Also, we find it useful for what follows to introduce the following definitions
\begin{eqnarray}
s_{feas} & = & \sum_{i=c_2^{(feas)}+1}^{n}\bar{\h}^+_i\nonumber \\
d_{feas} & = & \sum_{i=c_1^{(feas)}+1}^{c_2^{(feas)}}(\bar{\h}^+_i)^2 \nonumber \\
r_{feas} & = & r_{socp+}- x_{mag}\sum_{i=c_2^{(feas)}+1}^{n}\bar{\h}^+_i.
\label{eq:feasdefdepnon}
\end{eqnarray}
At this point we have all the necessary ingredients to determine $\widehat{c_{1}^{(feas)}}$ and $\widehat{c_{2}^{(feas)}}$ and consequently $\widehat{\xi^{(feas)}}$, or to be more precise their concentrating points. The following theorem then provides a systematic way of doing so.
\begin{theorem}
Assume the setup of Theorem \ref{thm:mainlassonon}. Let the nonzero components of $\xtilde$ have magnitude $x_{mag}$ and let $\bar{\h}^+$ be as defined in (\ref{eq:defhbarnon}). Further, let $r_{socp+}^{(sc)}=\lim_{n\rightarrow\infty}\frac{r_{socp+}}{\sqrt{n}}$ and $x_{mag}^{(sc)}=\lim_{n\rightarrow\infty}\frac{x_{mag}}{\sqrt{n}}$. Also, let $\widehat{\xi^{(feas)}}$ be as defined in (\ref{eq:feasdefhatxiobjnon}). Let $\alpha=\frac{m}{n}$ and $\beta_{w}^+=\frac{k}{n}$ be fixed. Consider the following
\begin{equation*}
S(\theta_{1}^{(feas)},\theta_{2}^{(feas)})  =  \lim_{n\rightarrow\infty}\frac{E s_{dep+}}{n}  = -\left (\beta_{w}^+\sqrt{\frac{1}{2\pi}}e^{-(\erfinv(2\frac{1-\theta_{2}^{(feas)}}{\beta_{w}^+}-1))^2}\right )
\end{equation*}
\begin{multline*}
D(\theta_{1}^{(feas)},\theta_{2}^{(feas)})  =  \lim_{n\rightarrow\infty}\frac{E d_{dep+}}{n}  =  \frac{1-\beta_{w}^+}{\sqrt{2\pi}}\left (\frac{\sqrt{2}(\erfinv(2\frac{1-\theta_{1}^{(feas)}}{1-\beta_{w}^+}-1))}{e^{(\erfinv(2\frac{1-\theta_{1}^{(feas)}}{1-\beta_{w}^+}-1))^2}}\right )+\theta_{1}^{(feas)}-\beta_{w}^+\\+\frac{\beta_{w}^+}{\sqrt{2\pi}}\left (\frac{\sqrt{2}\erfinv(2\frac{1-\theta_{2}^{(feas)}}{\beta_{w}^+}-1)}{e^{(\erfinv(2\frac{1-\theta_{2}^{(feas)}}{\beta_{w}^+}-1))^2}}\right )+\theta_{2}^{(feas)}-1+\beta_{w}^+
\end{multline*}
\begin{multline*}
R(\theta_{2}^{(feas)})  =  \lim_{n\rightarrow\infty}\frac{E r_{dep+}}{\sqrt{n}}  = r_{socp+}^{(sc)}-x_{mag}^{(sc)}S(\theta_{1}^{(feas)},\theta_{2}^{(feas)})+x_{mag}^{(sc)}(1-\theta_{2}^{(feas)})\\
=r_{socp+}^{(sc)}+x_{mag}^{(sc)}\left (\beta_{w}^+\sqrt{\frac{1}{2\pi}}e^{-(\erfinv(2\frac{1-\theta_{2}^{(feas)}}{\beta_{w}^+}-1))^2}\right )
\end{multline*}
\begin{eqnarray}
F(\theta_{1}^{(feas)}) & = & \sqrt{2}\erfinv (2\frac{1-\theta_{1}^{(feas)}}{1-\beta_{w}^+}-1)\nonumber \\
G(\theta_{2}^{(feas)}) & = & \sqrt{2}\erfinv (2\frac{1-\theta_{2}^{(feas)}}{\beta_{w}^+}-1).\label{eq:feascompthmcond1non}
\end{eqnarray}
Let the pair ($\widehat{\theta_{1}^{(feas)}}$, $\widehat{\theta_{2}^{(feas)}}$) be the solution of the following two equations
\begin{eqnarray}
\hspace{-.6in}G(\theta_{2}^{(feas)})+\frac{x_{mag}^{(sc)}\sqrt{(\alpha
-D(\theta_{1}^{(feas)},\theta_{2}^{(feas)}))}}
{\sqrt{\sigma^2+(1-\theta_{2}^{(feas)})(x_{mag}^{(sc)})^2}} & = & 0\nonumber \\
F(\theta_{1}^{(feas)}) & = & 0.\nonumber \\\label{eq:feasmainthmc1c2c3non}
\end{eqnarray}
Then the feasibility breaking point for $x_{mag}^{(sc)}$ can be determined as the solution of
\begin{multline}
\hspace{-.2in}\lim_{n\rightarrow\infty}\frac{E\widehat{\xi^{(feas)}}}{\sqrt{n}}  =
\sqrt{\sigma^2+(1-\widehat{\theta_{2}^{(feas)}})(x_{mag}^{(sc)})^2}\sqrt{(\alpha-
D(\widehat{\theta_{1}^{(feas)}},\widehat{\theta_{2}^{(feas)}}))}
-R(\widehat{\theta_{2}^{(feas)}})\\=-(\sigma^2+(1-\widehat{\theta_{2}^{(feas)}})(x_{mag}^{(sc)})^2)
\frac{G(\widehat{\theta_2^{(feas)}})}{x_{mag}^{(sc)}}-R(\widehat{\theta_{2}^{(feas)}})=0.
\label{eq:feasmainthmnuwgenxiprimnon}
\end{multline}
\label{thm:feasmaincomperrornon}
\end{theorem}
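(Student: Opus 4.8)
The plan is to read this theorem as the assembly step that turns the already-completed derivation in equations (\ref{eq:feas4})--(\ref{eq:feasdefdepnon}) into a deterministic, solvable system, so the proof would proceed in four moves: reduce infeasibility to a sign condition, solve the resulting auxiliary program, replace the random quantities by their concentrating limits, and impose the zero-crossing condition.

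First I would recall the reduction established just before (\ref{eq:feas4}): the objective of (\ref{eq:mainlasso3vernon}) is unbounded precisely when it can be driven to $+\infty$ along the ray $\lambda_i=\nu\lambda_i^{(\nu)}$ with $\nu\to\infty$, and factoring out $\nu$ shows this happens exactly when the optimal value of (\ref{eq:feas4}) is positive. Hence the \emph{feasibility breaking point} is the largest $x_{mag}^{(sc)}$ for which this optimal value is nonnegative, i.e. the value at which $\lim_{n\to\infty}E\widehat{\xi^{(feas)}}/\sqrt{n}$ crosses zero. Establishing this equivalence rigorously --- in particular that the finite-$n$ program and its limiting deterministic surrogate share the sign of their optimum with overwhelming probability --- is the conceptual crux and would lean on the concentration machinery of \cite{StojnicCSetam09,StojnicGenSocp10}.

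Next I would solve the auxiliary program (\ref{eq:feas4}). Because the penalty $-x_{mag}\sum_{i=n-k+1}^{n}\lambda_i^{(\nu)}$ touches only the last $k$ coordinates while $\bar{\h}^+$ is monotonically sorted, the optimal $\lambda^{(feas)}$ must take the three-block form asserted in the text: a leading block (indices up to $c_1^{(feas)}$, where $\bar{\h}^+$ is negative) pushed positive to shrink the square-root argument at no cost, a middle block set to zero, and a trailing block (indices beyond $c_2^{(feas)}$) where the stationarity condition (\ref{eq:feas6}) balances the penalty against the gain. Differentiating and eliminating $\lambda$ via (\ref{eq:feas7}) collapses the objective to the closed form (\ref{eq:feas10}), while the boundary and sign conditions at the two block edges give the two scalar equations (\ref{eq:feas11}) for $c_1^{(feas)}$ and $c_2^{(feas)}$. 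I expect verifying this block structure --- and the direction of the inequalities that pin $c_1^{(feas)}$ and $c_2^{(feas)}$ --- to be the main technical obstacle, since everything downstream is bookkeeping.

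Finally I would pass to the large-$n$ limit. Writing $c_1^{(feas)}=(1-\theta_1^{(feas)})n$ and $c_2^{(feas)}=\theta_2^{(feas)}n$, all the random ingredients --- $\|\g\|_2^2$ together with the partial sums of squares and the partial sums of the sorted $\bar{\h}^+$ --- concentrate around the expectations (\ref{eq:feasexpgnon})--(\ref{eq:feasexpprod1non}), computed as in \cite{StojnicCSetam09,StojnicGenSocp10}. Substituting these into (\ref{eq:feas11}) produces the deterministic system (\ref{eq:feasmainthmc1c2c3non}), whose solution is $(\widehat{\theta_1^{(feas)}},\widehat{\theta_2^{(feas)}})$, and substituting the same limits plus this solution into (\ref{eq:feas12}) yields $\lim_{n\to\infty}E\widehat{\xi^{(feas)}}/\sqrt{n}$. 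Setting that limit to zero is the defining equation (\ref{eq:feasmainthmnuwgenxiprimnon}) for the breaking point; the last equality there follows by using the first equation of (\ref{eq:feasmainthmc1c2c3non}) to trade the square-root factor $\sqrt{\alpha-D}$ for $-(\sigma^2+(1-\widehat{\theta_2^{(feas)}})(x_{mag}^{(sc)})^2)G(\widehat{\theta_2^{(feas)}})/x_{mag}^{(sc)}$, which closes the argument.
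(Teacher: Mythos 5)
Your proposal is correct and takes essentially the same approach as the paper, whose stated proof is a one-line reference to exactly the derivation you reconstruct: the ray reduction of unboundedness to positivity of the optimum of (\ref{eq:feas4}), the block-structured optimal $\lambda^{(feas)}$ with the stationarity/boundary conditions (\ref{eq:feas6})--(\ref{eq:feas11}), concentration of the random quantities via (\ref{eq:feasexpgnon})--(\ref{eq:feasdefdepnon}), and the zero-crossing of $\lim_{n\to\infty}E\widehat{\xi^{(feas)}}/\sqrt{n}$, including the correct use of the first equation of (\ref{eq:feasmainthmc1c2c3non}) to obtain the last equality in (\ref{eq:feasmainthmnuwgenxiprimnon}). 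One trivial wording slip only: pushing the leading $\lambda_i$ positive to cancel the negative entries of $\bar{\h}^+$ shrinks the norm term $\|\bar{\h}^++\lambda^{(\nu)}\|_2^2$ and hence \emph{enlarges} the square-root argument, rather than shrinking it.
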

\begin{proof}
Follows from Theorem \ref{thm:mainlassonon} based on the discussion presented above and a combination of (\ref{eq:feasdefhatxiobjnon}), (\ref{eq:feas11}), and (\ref{eq:feasdefdepnon}).
\end{proof}

\noindent \textbf{Remark 1:} It is relatively easy to see that $\widehat{\theta_1^{(feas)}}=\frac{1+\beta_w^+}{2}$ which somewhat simplifies the expression for $D(\theta_{1}^{(feas)},\theta_{2}^{(feas)})$. For the completeness though we chose to present the results in the above theorem in a general form.

\noindent \textbf{Remark 2:} Another way to deal with the unboundedness (infeasibility) is to look at (\ref{eq:mainlasso13non}) and recognize that
$(a_{dep+}^2-\|\g\|_2^2+d_{dep+})$ needs to be negative so that (\ref{eq:mainlasso3vernon}) is bounded. We presented the results in Theorem \ref{thm:mainlassonon} by assuming that all relevant parameters are such that (\ref{eq:mainlasso3vernon}) is bounded. Instead one could actually characterize them through one of the above approaches. However, we thought that it would complicate the presentation and opted for the current exposition.

The results from the above theorem can be used to compute the breaking feasibility point for $x_{mag}^{(sc)}$ for particular values of $\beta_{w}^+$, $\alpha$, $\sigma$, and $r_{socp+}$. In a part of the following subsection we will present a subset of such results.

%%%%%%%%%%%%%%%%%%%%%%%%%%%%%%%%%%%%%%%%%%%%%%%%%%%%%%%%%%%%%%%%%%%%%%%%%%%%%%%%%%%%%%%
\subsubsection{Theoretical predictions} \label{sec:unsignedtheoryprednon}
%%%%%%%%%%%%%%%%%%%%%%%%%%%%%%%%%%%%%%%%%%%%%%%%%%%%%%%%%%%%%%%%%%%%%%%%%%%%%%%%%%%%%%%

In this subsection we present the theoretical predictions one can get based on the result of the previous sections. Similarly to what was done in Section \ref{sec:unsignedtheorypred} we will split the presentation of the results into several parts.

\textbf{\underline{\emph{1) $\frac{E\|\w_{dep+}\|_2}{\sigma}=\frac{E\|\w_{socp+}\|_2}{\sigma}$ as a function of $x_{mag}^{(sc)}$}}}

To present this portion (as well as several others that will follow) of theoretical results we will as in Section \ref{sec:unsignedtheorypred} look at three regimes: 1) low $\alpha$-, medium $\alpha$-, and high $\alpha$-regime. For each of the regimes we will show the theoretical results for $\frac{E\|\w_{dep+}\|_2}{\sigma}=\frac{E\|\w_{socp+}\|_2}{\sigma}$ as a function of $x_{mag}^{(sc)}$. We will again take $\alpha=0.3$ as a representative of the low $\alpha$-regime, $\alpha=0.5$ as a representative of the medium $\alpha$-regime, and $\alpha=0.7$ as a representative of the high $\alpha$-regime. We will consider $r_{socp+}=r_{socp+}^{(opt)}=\sigma\sqrt{\frac{\alpha n}{1+\rho^2}}$. For each of the $\alpha$-regimes we will look at two different sub-regimes: low $\beta_w^+$- and high $\beta_w^+$-regime (which based on results from \cite{StojnicGenSocp10} is equivalent to low $\rho$- and high $\rho$-regimes). For each of these two sub-regimes $\beta_w^+$ will be selected based on the curves obtained in \cite{StojnicGenSocp10} (or those obtained in \cite{StojnicGenLasso10}) in the following way. In the low $\beta_w^+$ sub-regime we will set $\rho=2$ and $r_{socp+}=r_{socp+}^{(opt)}=\sigma\sqrt{\frac{\alpha n}{5}}$ whereas in the high $\beta_w^+$ sub-regime we will set $\rho=3$ and $r_{socp+}=r_{socp+}^{(opt)}=\sigma\sqrt{\frac{\alpha n}{10}}$. At the same time from \cite{StojnicGenSocp10} we will have $r_{socp+}=r_{socp+}^{(opt)}=\sigma\sqrt{(\alpha-\alpha_w^+)n}$ where $\alpha_w^+$ and $\beta_w^+$ are such that (\ref{eq:fundl1non}) holds (we also recall on \cite{StojnicGenSocp10} where it was reasoned that the low $\beta$ regime is selected so that the pair $(\alpha,\beta_w^+)$ is well below the fundamental characterization (\ref{eq:fundl1non}) whereas the high $\beta$ regime is selected so that the pair $(\alpha,\beta_w^+)$ is closer to the fundamental characterization (\ref{eq:fundl1non})). The values for $\frac{E\|\w_{dep+}\|_2}{\sigma}=\frac{E\|\w_{socp+}\|_2}{\sigma}$ one can then get through the results of Theorem \ref{thm:maincomperrornon} for such $(\alpha,\beta_w^+)$ pairs and $r_{socp+}^{(opt)}$ are shown in Figure \ref{fig:errorvarxnon} as functions of $x_{mag}^{(sc)}$.
\begin{figure}[htb]
\begin{minipage}[b]{.33\linewidth}
\centering
\centerline{\epsfig{figure=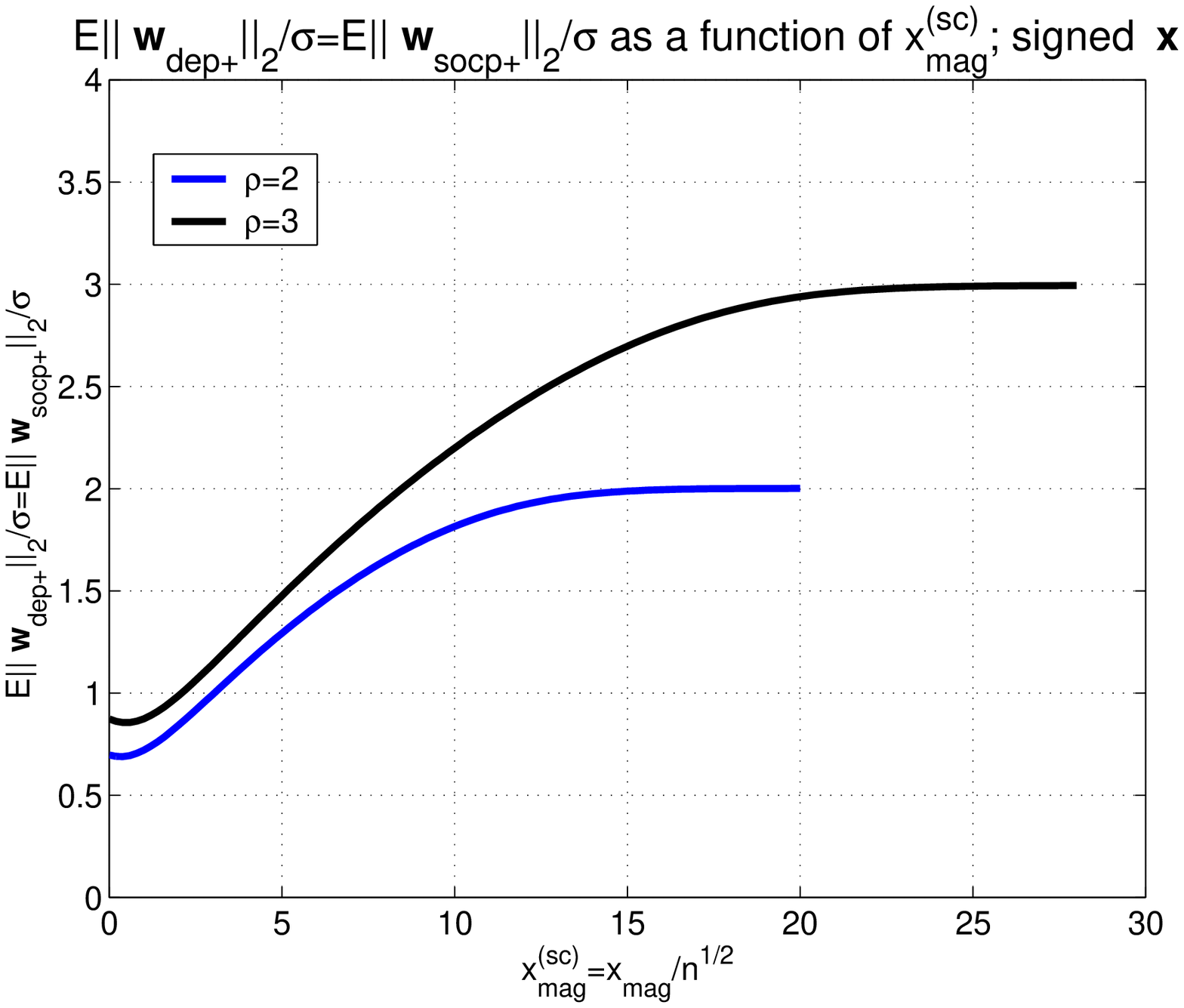,width=5.3cm,height=5.3cm}}
\end{minipage}
\begin{minipage}[b]{.33\linewidth}
\centering
\centerline{\epsfig{figure=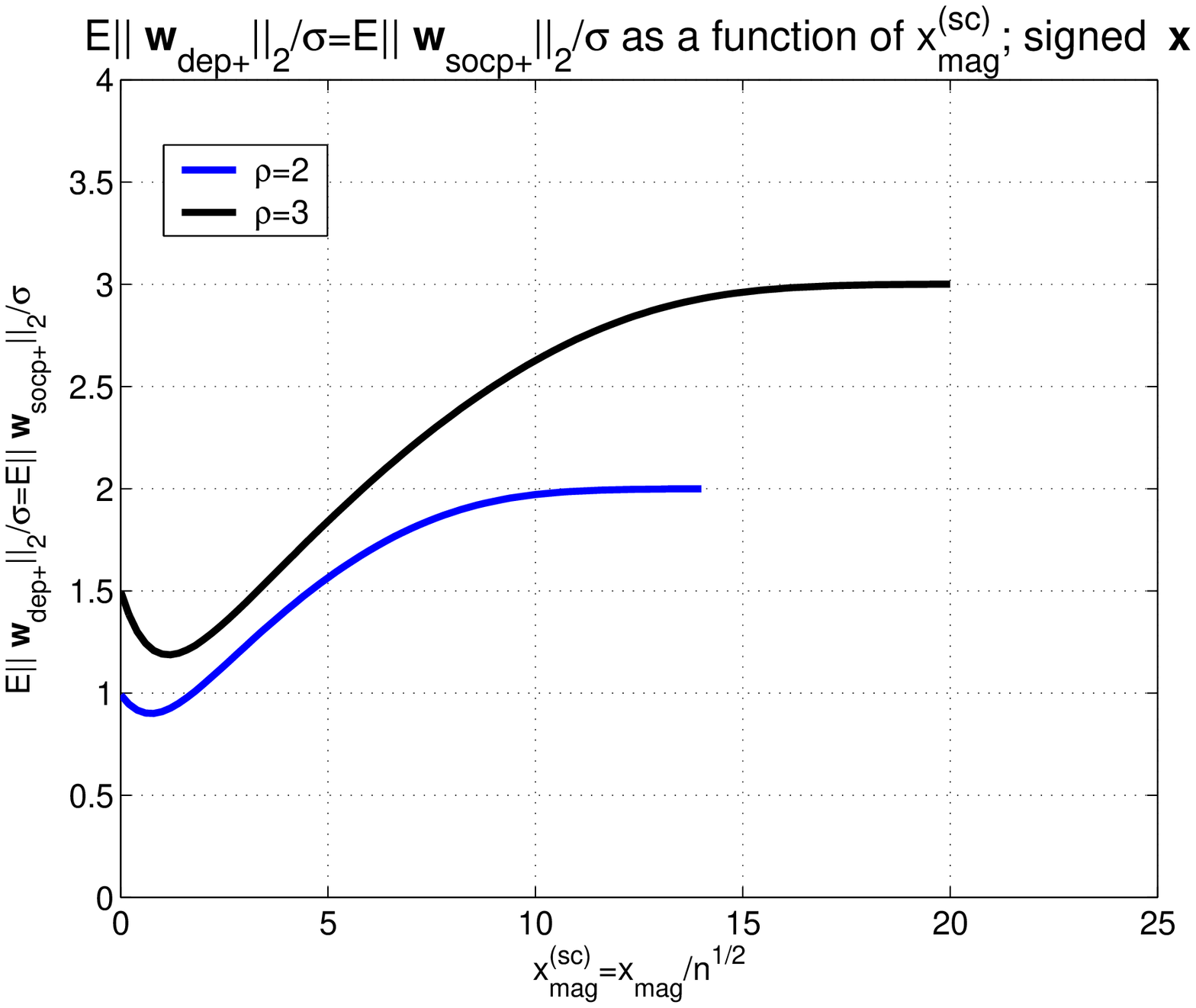,width=5.3cm,height=5.3cm}}
\end{minipage}
\begin{minipage}[b]{.33\linewidth}
\centering
\centerline{\epsfig{figure=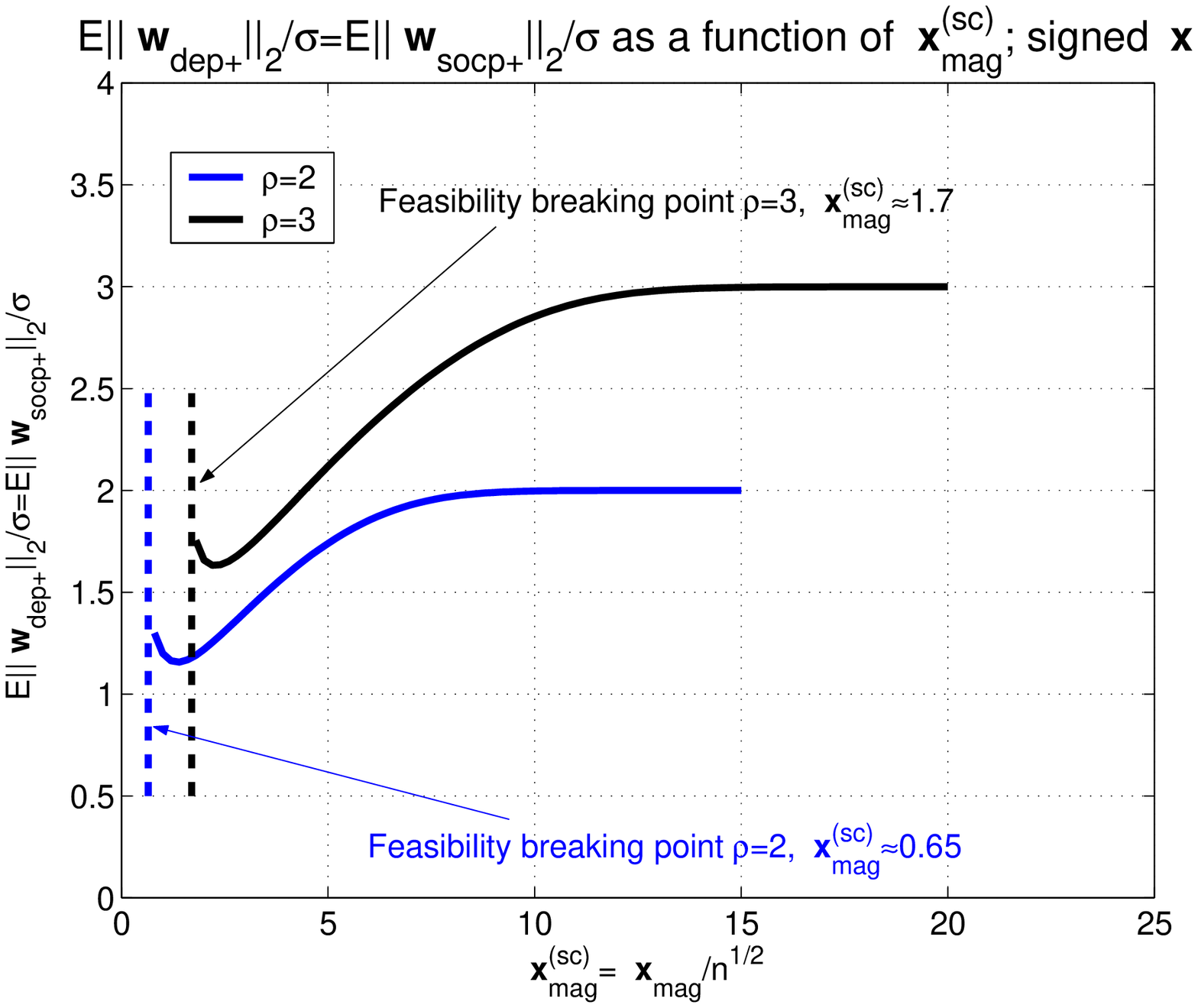,width=5.3cm,height=5.3cm}}
\end{minipage}
\caption{$\frac{E\|\w_{dep+}\|_2}{\sigma}=\frac{E\|\w_{socp+}\|_2}{\sigma}$ as a function of $x_{mag}^{(sc)}$; $r_{socp+}=\sqrt{\frac{\alpha n}{1+\rho^2}}$; left --- $\alpha=0.3$, center --- $\alpha=0.5$, right --- $\alpha=0.7$}
\label{fig:errorvarxnon}
\end{figure}
As can be seen from Figure \ref{fig:errorvarxnon}, the values of $\frac{E\|\w_{dep+}\|_2}{\sigma}=\frac{E\|\w_{socp+}\|_2}{\sigma}$ converge to $\rho$ as $x_{mag}^{(sc)}$ increases. This is of course in agreement with \cite{StojnicGenSocp10} where it was demonstrated that for $r_{socp+}^{(opt)}$ one has $\rho=\frac{\|\w_{socp+}\|_2}{\sigma}$ with overwhelming probability. Also, this is in a agreement with a similar conclusion made in Section \ref{sec:unsignedtheorypred}. Furthermore, as it was the case in Section \ref{sec:unsignedtheorypred}, the convergence is ``faster" (or happens for smaller $x_{mag}^{(sc)}$) for larger $\alpha$.

On the other hand one should observe from Figure \ref{fig:errorvarxnon} that for $\alpha=0.7$ the optimization problem in (\ref{eq:mainlasso3vernon}) is infeasible with overwhelming probability for $x_{mag}^{(sc)}$ below $\approx 1.7$ in high $\beta_w^+$ regime and for $x_{mag}^{(sc)}$ below $\approx 0.65$ in low $\beta_w^+$ regime.

\textbf{\underline{\emph{2) $\frac{Ef_{obj+}}{\sqrt{n}}=\frac{E\xi_{prim+}^{(dep)}(\sigma,\g,\h,x_{mag},r_{socp+})}{\sqrt{n}}$ as a function of $x_{mag}^{(sc)}$}}}

Similarly to what was discussed above (and is related to $\|\w_{socp+}\|_2$ and $\|\w_{dep+}\|_2$) one can determine the concentrating points of $f_{obj+}$ and $\xi_{prim+}^{(dep)}$ also as functions of $x_{mag}^{(sc)}$. As in Section \ref{sec:unsignedtheorypred}, to present these results we restrict ourselves to the medium $\alpha$-regime, or in other words to $\alpha=0.5$. As in part 1) above we again choose $r_{socp+}=r_{socp+}^{(opt)}=\sigma\sqrt{\frac{\alpha n}{1+\rho^2}}$ and consider low $\rho=2$- and high $\rho=3$- regime.
The obtained results are shown in Figure \ref{fig:objvarxnon}.
\begin{figure}[htb]
\begin{minipage}[b]{1\linewidth}
\centering
\centerline{\epsfig{figure=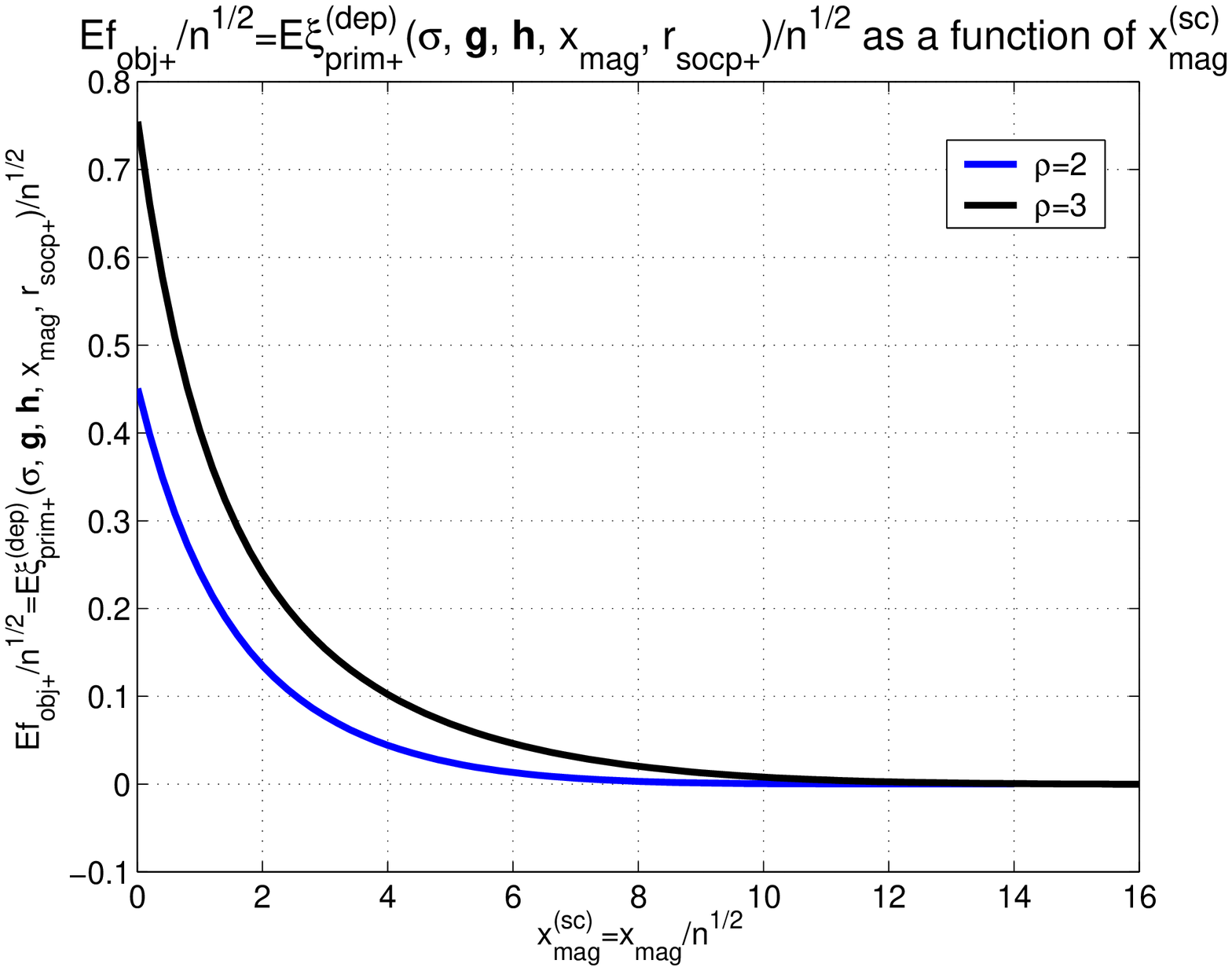,width=11cm,height=9cm}}
\end{minipage}
%\begin{minipage}[b]{.33\linewidth}
%\centering
%\centerline{\epsfig{figure=PrDepSocpVarxm05.eps,width=5.3cm,height=5.3cm}}
%\end{minipage}
%\begin{minipage}[b]{.33\linewidth}
%\centering
%\centerline{\epsfig{figure=PrDepSocpVarxm07.eps,width=5.3cm,height=5.3cm}}
%\end{minipage}
\caption{$\frac{Ef_{obj+}}{\sqrt{n}}=\frac{E\xi_{prim+}^{(dep)}(\sigma,\g,\h,x_{mag},r_{socp+})}{\sqrt{n}}$ as a function of $x_{mag}^{(sc)}$; $r_{socp+}=\sqrt{\frac{\alpha n}{1+\rho^2}}$; $\alpha=0.5$}
\label{fig:objvarxnon}
\end{figure}
As can be seen from Figure \ref{fig:objvarxnon} $\frac{Ef_{obj+}}{\sqrt{n}}$ is larger for larger $\rho$.

\textbf{\underline{\emph{3) $\frac{E\|\w_{dep+}\|_2}{\sigma}=\frac{E\|\w_{socp+}\|_2}{\sigma}$ as a function of $x_{mag}^{(sc)}$; varying $r_{socp+}$}}}

Another interesting set of results relates to possible variations in the $r_{socp+}$ that can be used in (\ref{eq:socp}). The results that we presented above assume an optimal choice for $r_{socp+}$ (in a sense defined in \cite{StojnicGenSocp10}). Namely, they assume that for a fixed pair $(\alpha,\beta_w^+)$ one chooses $r_{socp+}=r_{socp+}^{(opt)}=\sigma\sqrt{(\alpha-\alpha_w^+)n}$ where $\alpha_w^+$ and $\beta_w^+$ are such that (\ref{eq:fundl1non}) holds. In the worst-case scenario (or in the generic scenario as we referred to it in \cite{StojnicGenSocp10}) one has that choice $r_{socp+}^{(opt)}$ offers the minimal norm-2 of the error vector. However, such a scenario assumes particular $\xtilde$'s which leaves a possibility that for a wide range of other $\xtilde$'s the performance of the SOCP from (\ref{eq:socp}) in the $\ell_2$ norm of the error vector sense can be more favorable. Of course as shown in Figure (\ref{fig:errorvarxnon}) this indeed happens to be the case. On the other hand that also leaves an option that one can possibly choose a different $r_{socp+}$ and get say a smaller norm-2 of the error vector for various different $\xtilde$. Below we present a few results in this direction.

We will consider again only the medium or $\alpha=0.5$ regime. For two sets of different values of $\rho$, $r_{socp+}=r_{socp+}^{(opt)}$, and $\beta_w^+$ we presented the results for $\frac{E\|\w_{dep+}\|_2}{\sigma}=\frac{E\|\w_{socp+}\|_2}{\sigma}$ in Figure \ref{fig:errorvarxnon}. In addition to that we now in Figure \ref{fig:errorvarxvarrhonon} show similar results one can get through Theorem \ref{thm:maincomperrornon} for two different choices of $r_{socp+}$. To be more precise, for $\rho=2$ we choose the same $\alpha$ and $\beta_w^+$ as in Figure and only vary $r_{socp+}$ over $\{\sigma\sqrt{0.05\alpha n},\sigma\sqrt{0.2\alpha n}\sigma\sqrt{0.6\alpha n}\}$. Clearly, choice $\sigma\sqrt{0.05\alpha n}$ is smaller than $r_{socp+}^{(opt)}=\sigma\sqrt{0.2\alpha n}$ whereas choice
$\sigma\sqrt{0.6\alpha n}$ is larger than $r_{socp+}^{(opt)}=\sigma\sqrt{0.2\alpha n}$. On the other hand for $\rho=3$ we choose the same $\alpha$ and $\beta_w^+$ as we have chosen for $\rho=3$ in Figure \ref{fig:errorvarxnon} and vary $r_{socp+}$ but this time over $\{\sigma\sqrt{0.05\alpha n},\sigma\sqrt{0.1\alpha n}\sigma\sqrt{0.5\alpha n}\}$. Again, clearly, choice $\sigma\sqrt{0.05\alpha n}$ is smaller than $r_{socp+}^{(opt)}=\sigma\sqrt{0.1\alpha n}$ whereas choice
$\sigma\sqrt{0.6\alpha n}$ is larger than $r_{socp+}^{(opt)}=\sigma\sqrt{0.2\alpha n}$. It is rather obvious but we mention for the completeness that the middle $r_{socp+}$ choices for both, $\rho=2$ and $\rho=3$, cases correspond to the center plot in Figure \ref{fig:errorvarxnon}.
\begin{figure}[htb]
\begin{minipage}[b]{.5\linewidth}
\centering
\centerline{\epsfig{figure=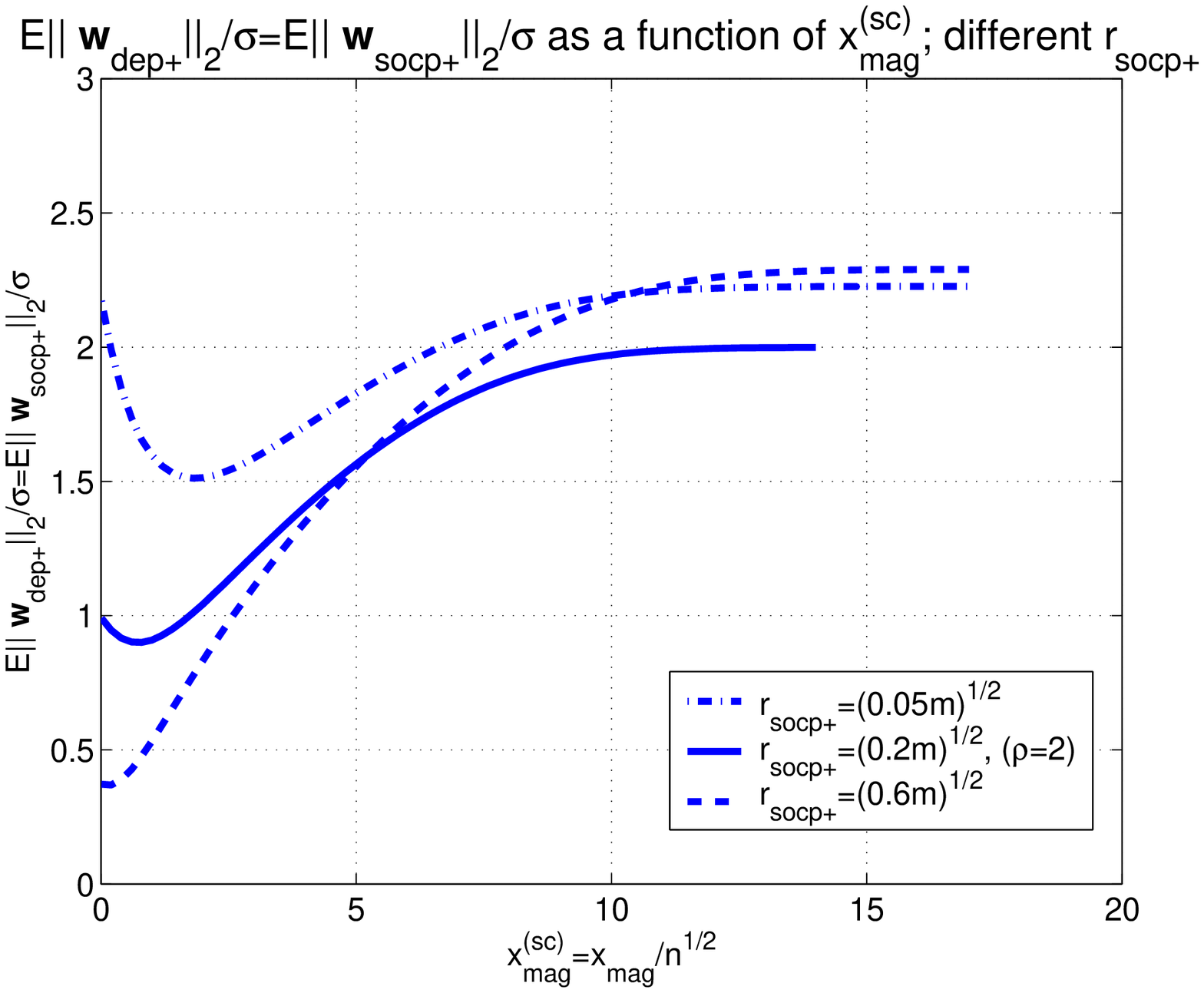,width=8cm,height=7cm}}
\end{minipage}
\begin{minipage}[b]{.5\linewidth}
\centering
\centerline{\epsfig{figure=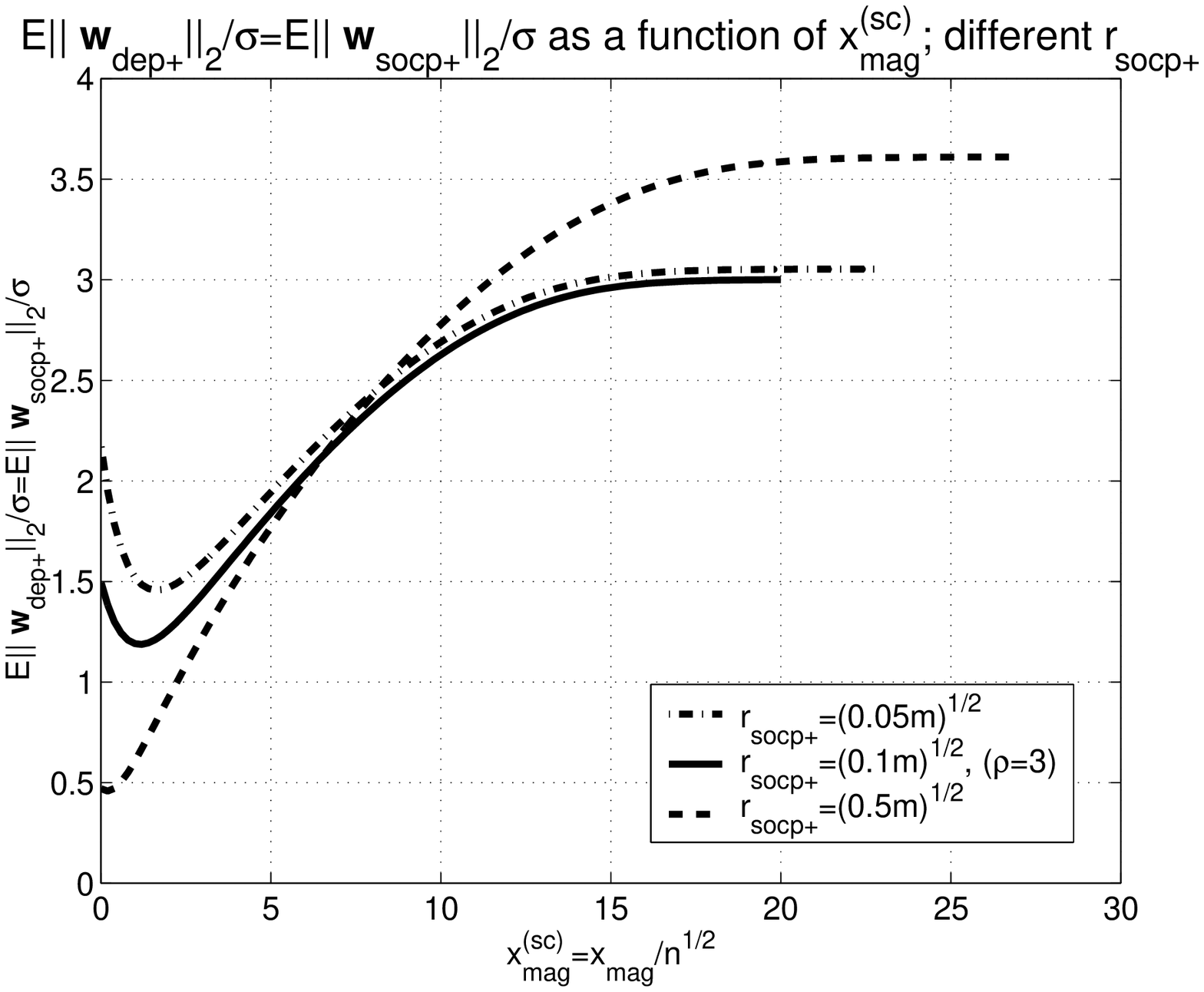,width=8cm,height=7cm}}
\end{minipage}
%\begin{minipage}[b]{.33\linewidth}
%\centering
%\centerline{\epsfig{figure=PrDepSocpVarxm07.eps,width=5.3cm,height=5.3cm}}
%\end{minipage}
\caption{$\frac{E\|\w_{dep+}\|_2}{\sigma}=\frac{E\|\w_{socp+}\|_2}{\sigma}$ as a function of $x_{mag}^{(sc)}$ for different $r_{socp+}$; left --- $\rho=2$, $r_{socp+}\in\{\sigma\sqrt{0.05\alpha n},\sigma\sqrt{0.2\alpha n}\sigma\sqrt{0.6\alpha n}\}$; right --- $\rho=3$, $r_{socp+}\in\{\sigma\sqrt{0.05\alpha n},\sigma\sqrt{0.1\alpha n}\sigma\sqrt{0.5\alpha n}\}$}
\label{fig:errorvarxvarrhonon}
\end{figure}
Based on the results presented in Figure \ref{fig:errorvarxvarrhonon} one can then make the same two observations that we made earlier related to the results presented in Figure \ref{fig:errorvarxvarrho}. The first one is that Figure \ref{fig:errorvarxvarrhonon} suggests that if $r_{socp+}$ is smaller than $r_{socp+}^{(opt)}$ then $\frac{E\|\w_{socp+}\|_2}{\sigma}$ could be larger than the one that can be obtained for $r_{socp+}^{(opt)}$. This actually happens to be the case. As in Section \ref{sec:unsignedtheorypred} we skip the details of this simple exercise. The second observation is that if $r_{socp+}$ is larger than $r_{socp+}^{(opt)}$ then for certain $\xtilde$ $\frac{E\|\w_{socp+}\|_2}{\sigma}$ could be smaller than the one that can be obtained for $r_{socp+}^{(opt)}$. This of course suggests that a choice of $r_{socp+}$ larger than $r_{socp+}^{(opt)}$ could be more favorable in certain applications and for a particular measure of performance. However, if one has no a priori available knowledge about $\xtilde$ then adapting $r_{socp+}$ beyond $r_{socp+}^{(opt)}$ would be hard.

\textbf{\underline{\emph{4) $\frac{Ef_{obj+}}{\sqrt{n}}=\frac{E\xi_{prim+}^{(dep)}(\sigma,\g,\h,x_{mag},r_{socp+})}{\sqrt{n}}$ as a function of $x_{mag}^{(sc)}$; varying $r_{socp+}$}}}

Similarly to what was done above in part 2) one can also determine the theoretical predictions for $\frac{Ef_{obj+}}{\sqrt{n}}=\frac{E\xi_{prim+}^{(dep)}}{\sqrt{n}}$ for a varying $r_{socp+}$. As in parts 2) and 3) above, we restrict our attention only to the medium $\alpha=0.5$ regime. We also assume exactly the same scenarios as in part 3). The obtained results are shown in Figure \ref{fig:objvarxvarrho}.
\begin{figure}[htb]
\begin{minipage}[b]{.5\linewidth}
\centering
\centerline{\epsfig{figure=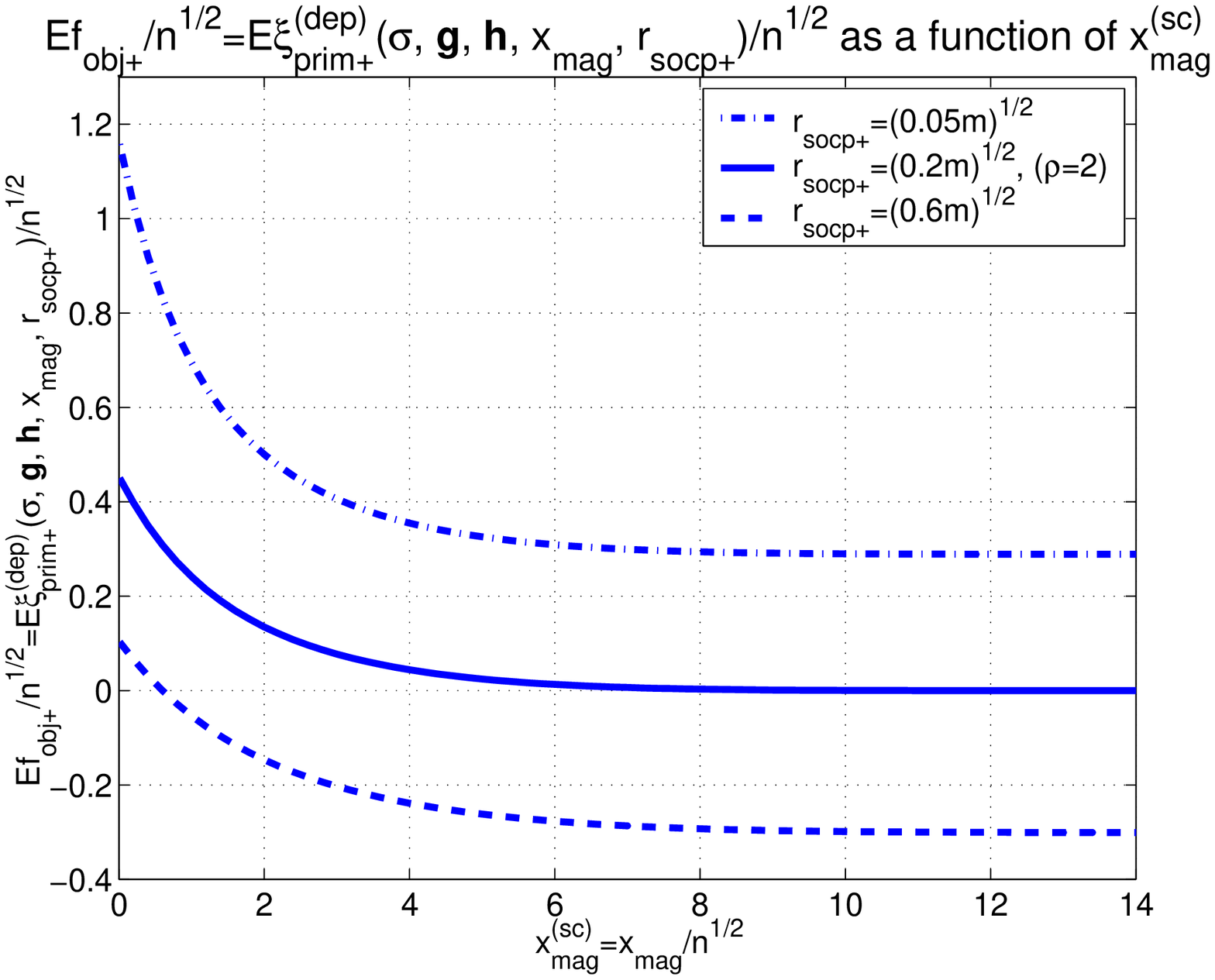,width=8cm,height=7cm}}
\end{minipage}
\begin{minipage}[b]{.5\linewidth}
\centering
\centerline{\epsfig{figure=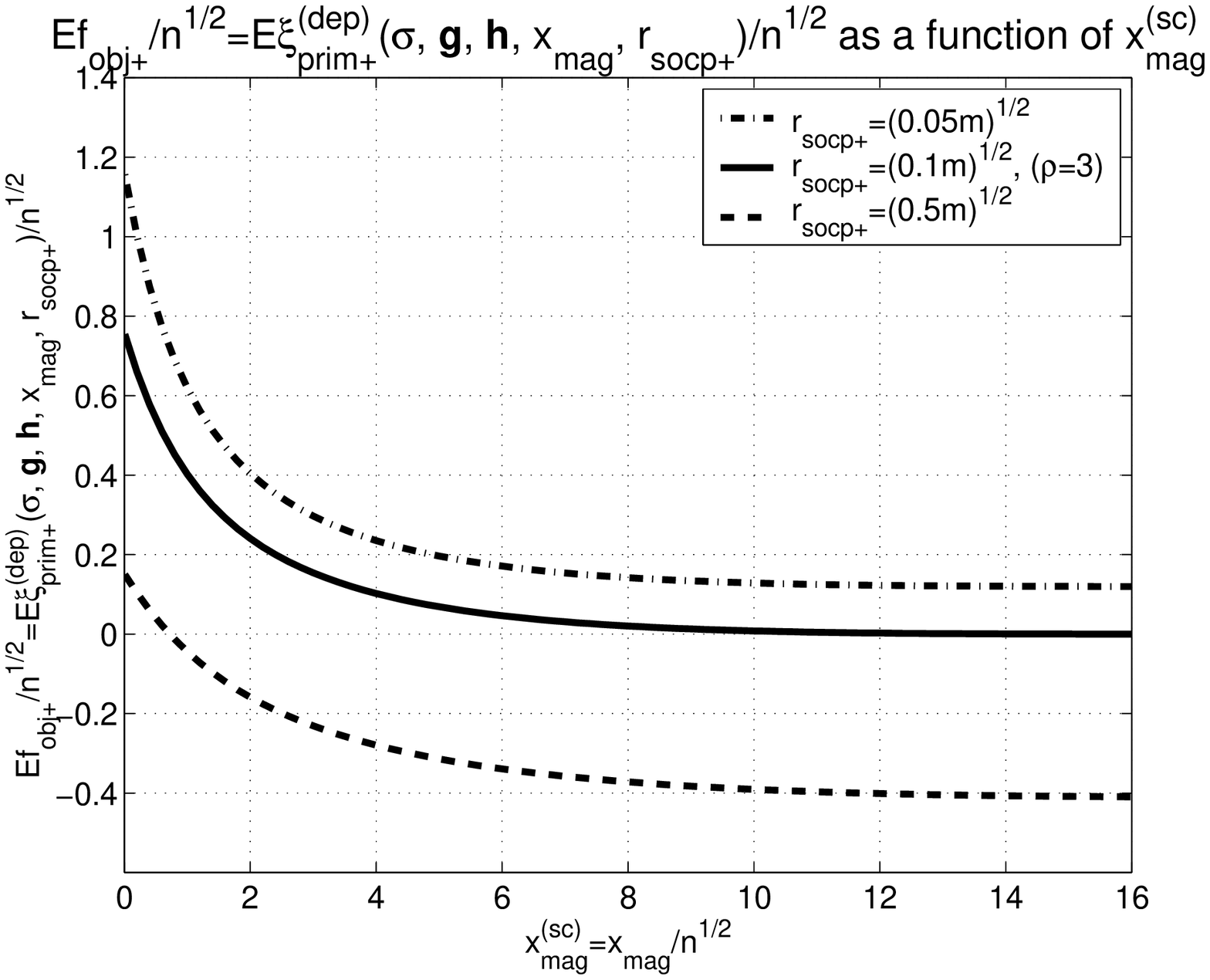,width=8cm,height=7cm}}
\end{minipage}
%\begin{minipage}[b]{.33\linewidth}
%\centering
%\centerline{\epsfig{figure=PrDepSocpVarxm07.eps,width=5.3cm,height=5.3cm}}
%\end{minipage}
\caption{$\frac{Ef_{obj+}}{\sqrt{n}}=\frac{E\xi_{prim+}^{(dep)}(\sigma,\g,\h,x_{mag},r_{socp+})}{\sqrt{n}}$ as a function of $x_{mag}^{(sc)}$ for different $r_{socp+}$; left --- $\rho=2$, $r_{socp+}\in\{\sigma\sqrt{0.05\alpha n},\sigma\sqrt{0.2\alpha n}\sigma\sqrt{0.6\alpha n}\}$; right --- $\rho=3$, $r_{socp+}\in\{\sigma\sqrt{0.05\alpha n},\sigma\sqrt{0.1\alpha n}\sigma\sqrt{0.5\alpha n}\}$}
\label{fig:objvarxvarrhonon}
\end{figure}
As in part 2) Figure \ref{fig:objvarxvarrhonon} shows that $\frac{Ef_{obj+}}{\sqrt{n}}$ is larger for larger $\rho$. On the other hand it also shows that
$\frac{Ef_{obj+}}{\sqrt{n}}$ decreases as $r_{socp+}$ increases. This also follows rather trivially from the structure of (\ref{eq:socpnon}) or in a way discussed in the corresponding part of Section \ref{sec:unsignedtheorypred}.

As in Section \ref{sec:unsignedtheorypred} we conducted massive numerical experiments and again found that the results one can get through them are in a firm agreement with what the presented theory predicts. In the next subsection we present a sample of the results obtained through the conducted numerical experiments.

%%%%%%%%%%%%%%%%%%%%%%%%%%%%%%%%%%%%%%%%%%%%%%%%%%%%%%%%%%%%%%%%%%%%%%%%%%%%%%%%%%%%%%%
\subsubsection{Numerical experiments} \label{sec:unsignednumexpnon}
%%%%%%%%%%%%%%%%%%%%%%%%%%%%%%%%%%%%%%%%%%%%%%%%%%%%%%%%%%%%%%%%%%%%%%%%%%%%%%%%%%%%%%%

 As in earlier subsection we will split the presentation of the numerical results in several parts. The numerical results that we will present below are obtained by running the SOCP from (\ref{eq:socpnon}). To demonstrate the precision of our technique we will in parallel show the results obtained by running (\ref{eq:mainlasso3vernon}). To make scaling simpler in all our numerical experiments we again set $\sigma=1$.

\textbf{\underline{\emph{1) $\frac{E\|\w_{dep+}\|_2}{\sigma}$ and $\frac{E\|\w_{socp+}\|_2}{\sigma}$ as functions of $x_{mag}^{(sc)}$}}}

In this part we will show the numerical results that correspond to the theoretical ones given in part 1) in the previous subsection. We will restrict our attention again only on the medium or $\alpha=0.5$ regime (in a later section we will show the results one can get for $\alpha=0.7$ regime). We then set all other parameters as in the center plot of Figure \ref{fig:errorvarxnon} (these parameters are of course different depending if we are considering $\rho=2$ or $\rho=3$; below we will consider both of them).

\underline{\emph{a) Low $(\alpha,\beta_w^+)$ regime, $\rho=2$}}

We first consider the $\rho=2$ scenario. As mentioned above in our experiments we set $\alpha=0.5$, $r_{socp+}=\sqrt{\frac{\alpha n}{1+\rho^2}}=\sqrt{0.2\alpha n}$, and (as shown in \cite{StojnicGenSocp10}) $\beta_w^+$ such that $(\alpha_w^+,\beta_w^+)$ satisfy (\ref{eq:fundl1non}) and $\alpha_w^+=\frac{\rho^2}{1+\rho^2}\alpha$. We then ran (\ref{eq:socpnon}) $300$ times with $n=600$ for various $x_{mag}^{(sc)}$. In parallel we ran (\ref{eq:mainlasso3ver}) for the exact same parameters with only one difference; namely we ran (\ref{eq:mainlasso3vernon}) with $n=2000$. The obtained results for $\frac{E\|\w_{socp+}\|_2}{\sigma}$ and $\frac{E\|\w_{dep+}\|_2}{\sigma}$ are shown on the left-hand and right-hand side of Figure \ref{fig:errorvarxsimnon}, respectively (given our assumption that $\sigma=1$ $\frac{E\|\w_{dep+}\|_2}{\sigma}$ and $\frac{E\|\w_{socp+}\|_2}{\sigma}$ are of course just $E\|\w_{dep+}\|_2$ and $E\|\w_{socp+}\|_2$, respectively). We also show in Figure \ref{fig:errorvarxsimnon} the corresponding theoretical predictions obtained in the previous subsection.
\begin{figure}[htb]
\begin{minipage}[b]{.5\linewidth}
\centering
\centerline{\epsfig{figure=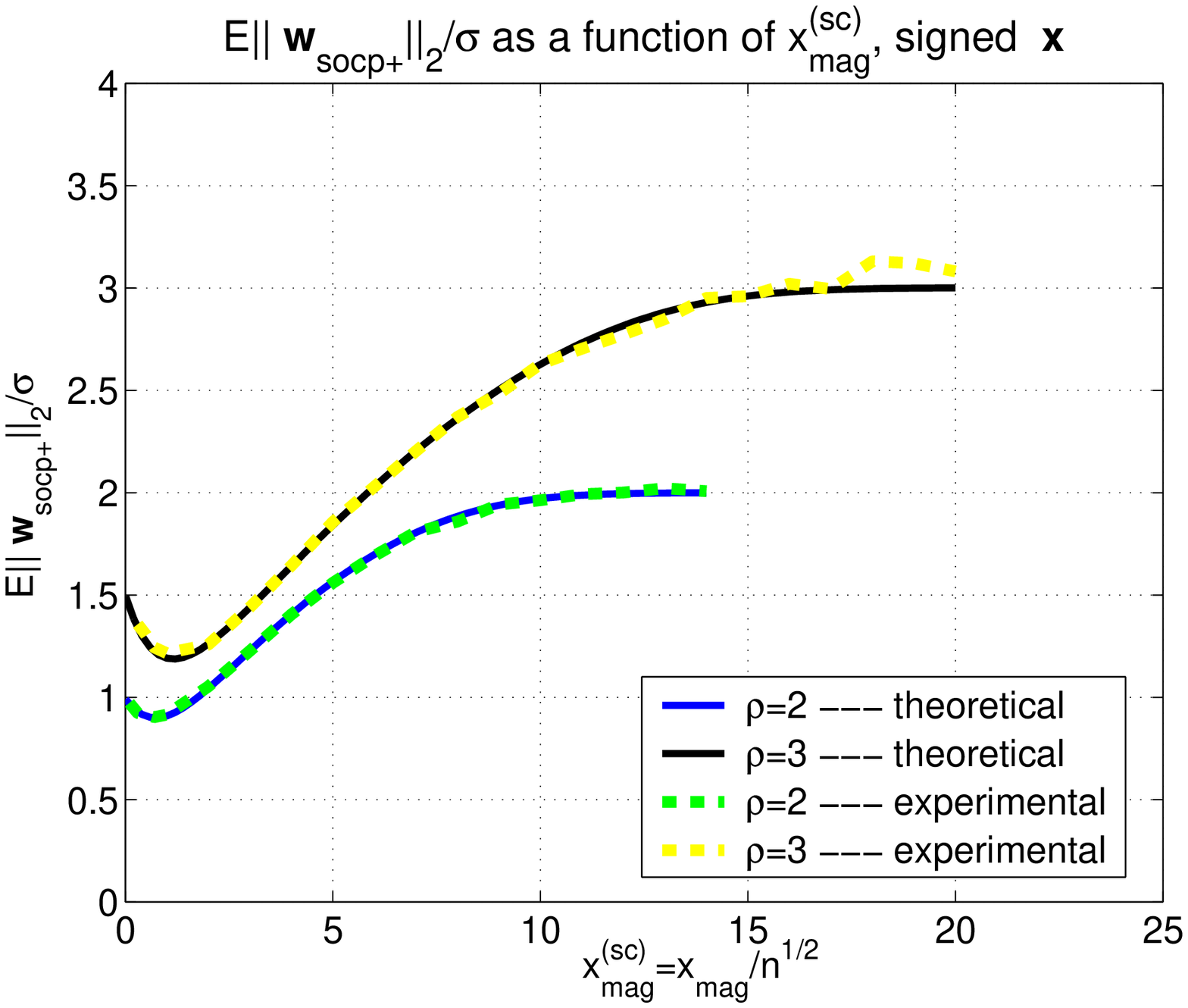,width=8cm,height=7cm}}
\end{minipage}
\begin{minipage}[b]{.5\linewidth}
\centering
\centerline{\epsfig{figure=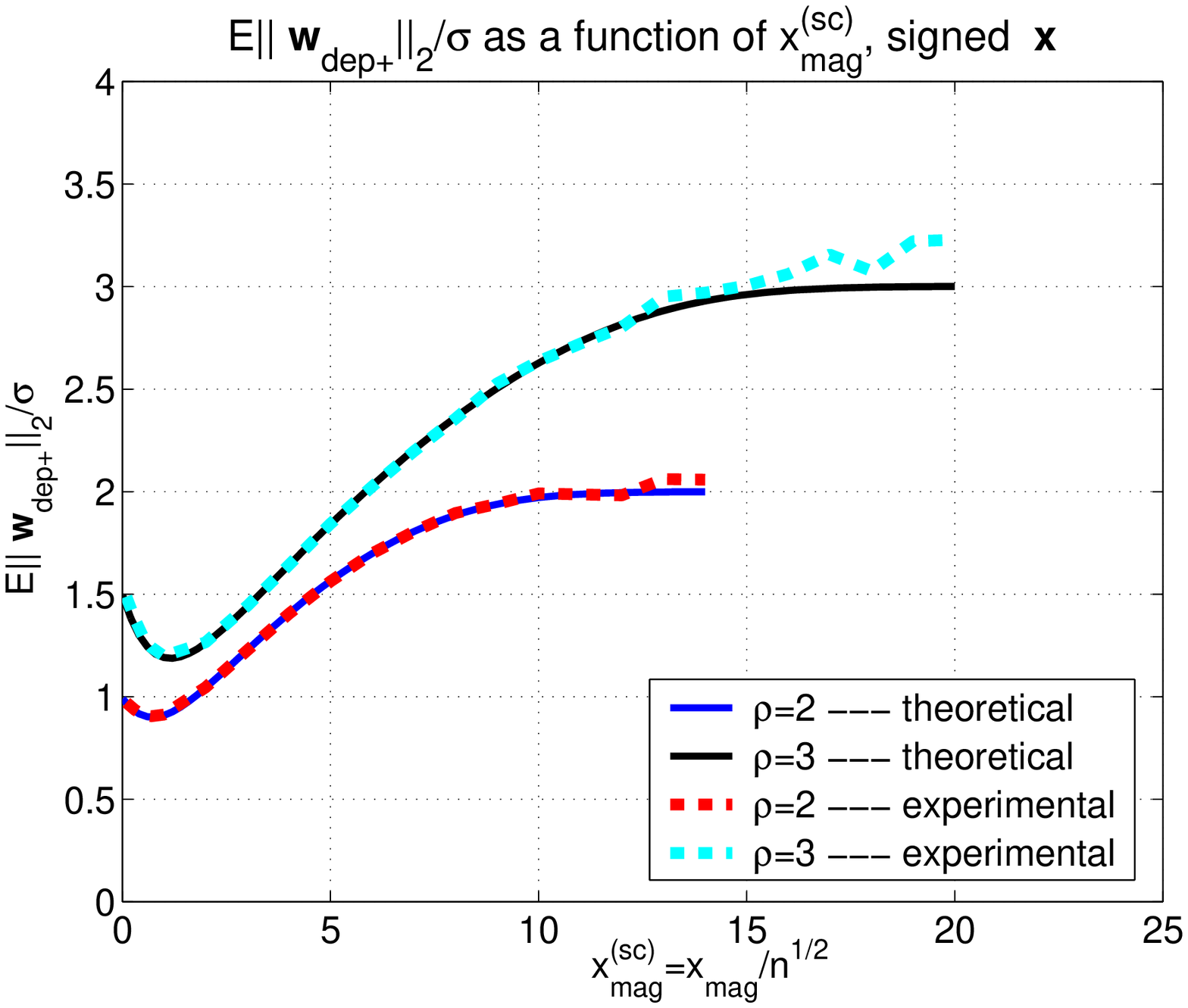,width=8cm,height=7cm}}
\end{minipage}
%\begin{minipage}[b]{.33\linewidth}
%\centering
%\centerline{\epsfig{figure=PrDepSocpVarxm07.eps,width=5.3cm,height=5.3cm}}
%\end{minipage}
\caption{Experimental results for $\frac{E\|\w_{socp+}\|_2}{\sigma}$ and $\frac{E\|\w_{dep+}\|_2}{\sigma}$ as a function of $x_{mag}^{(sc)}$; $\rho=2$, $r_{socp+}=\sqrt{0.2 \alpha n}$; $\rho=3$, $r_{socp+}=\sqrt{0.1 \alpha n}$; left --- SOCP from (\ref{eq:socpnon}), right --- (\ref{eq:mainlasso3vernon})}
\label{fig:errorvarxsimnon}
\end{figure}

\underline{\emph{b) High $(\alpha,\beta_w^+)$ regime, $\rho=3$}}

We also conducted a set of experiments in the so-called ``high" $(\alpha,\beta_w^+)$ regime. We used exactly the same parameters as in low $(\alpha,\beta_w^+)$ except that we changed $\rho$ from $2$ to $3$. Consequently we chose $r_{socp+}=\sqrt{0.1 \alpha n}$ and $\beta_w^+$ such that $(\alpha_w^+,\beta_w^+)$ satisfy (\ref{eq:fundl1non}) and $\alpha_w^+=\frac{\rho^2}{1+\rho^2}\alpha$. As above we ran $300$ times each (\ref{eq:socpnon}) and (\ref{eq:mainlasso3vernon}). We ran (\ref{eq:socpnon}) with $n=600$ and (\ref{eq:mainlasso3vernon}) with $n=2000$. The numerical results obtained for $\rho=3$ together with the theoretical predictions are again shown in Figure \ref{fig:errorvarxsimnon}. From Figure \ref{fig:errorvarxsimnon} we observe a solid agreement between the theoretical predictions and the results obtained through numerical experiments.

\textbf{\underline{\emph{2) $\frac{Ef_{obj+}}{\sqrt{n}}$ and $\frac{E\xi_{prim+}^{(dep)}(\sigma,\g,\h,x_{mag},r_{socp+})}{\sqrt{n}}$ as functions of $x_{mag}^{(sc)}$}}}

In this part we will show the numerical results that correspond to the theoretical ones given in part 2) in the previous subsection. We then set all parameters as in Figure \ref{fig:errorvarxnon} (these parameters are exactly the same as in experiments whose results we just presented above). Of course we again distinguish two cases: $\rho=2$ and $\rho=3$. For both $\rho=2$ and $\rho=3$ we ran $300$ times each, (\ref{eq:socpnon}) and (\ref{eq:mainlasso3vernon}) and again we ran (\ref{eq:socpnon}) with $n=600$ and (\ref{eq:mainlasso3vernon}) with $n=2000$. The numerical results that we obtained for $\frac{Ef_{obj+}}{\sqrt{n}}$ and $\frac{E\xi_{prim+}^{(dep)}(\sigma,\g,\h,x_{mag},r_{socp+})}{\sqrt{n}}$ are shown in Figure \ref{fig:objvarxsimnon}.
\begin{figure}[htb]
\begin{minipage}[b]{.5\linewidth}
\centering
\centerline{\epsfig{figure=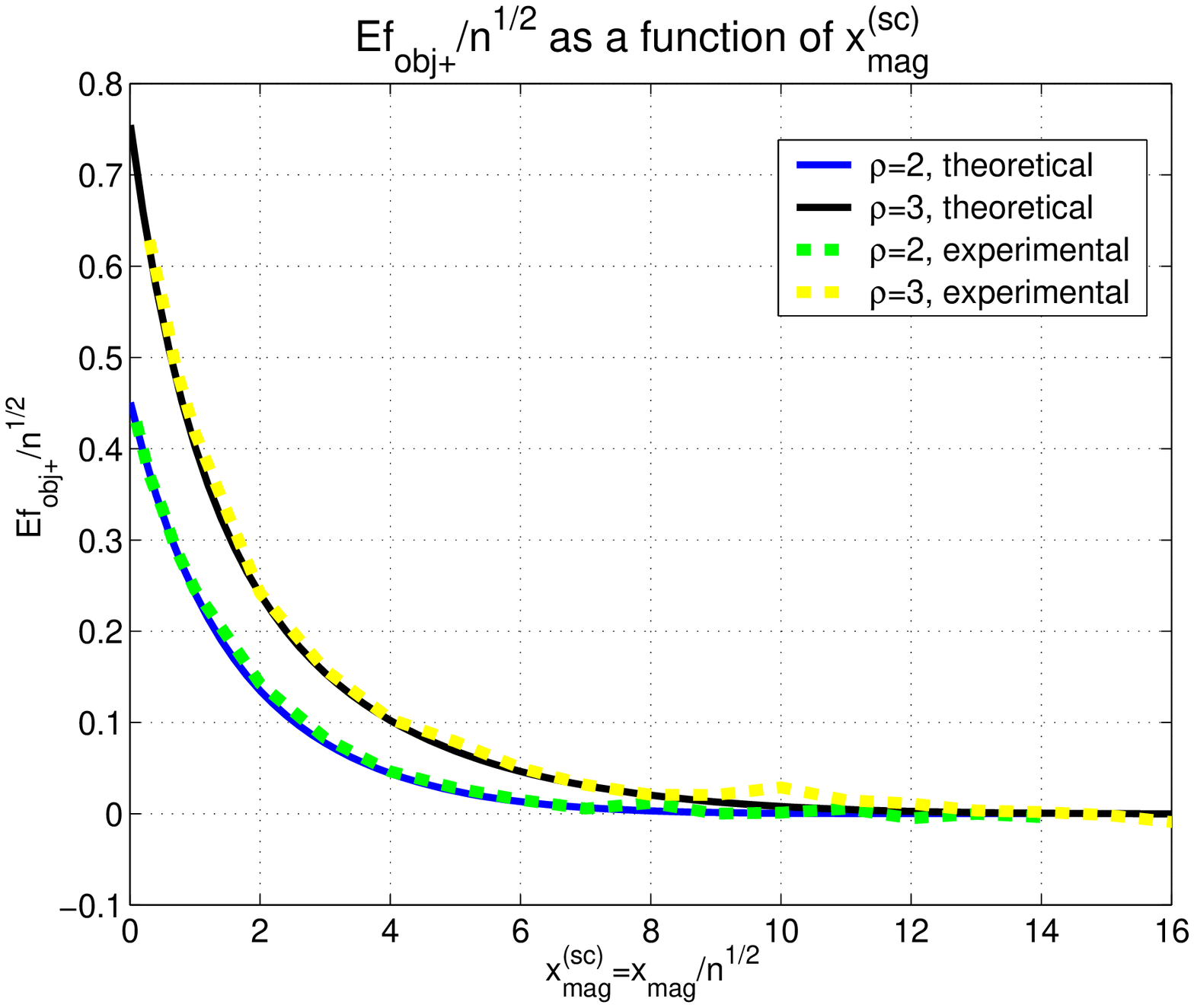,width=8cm,height=7cm}}
\end{minipage}
\begin{minipage}[b]{.5\linewidth}
\centering
\centerline{\epsfig{figure=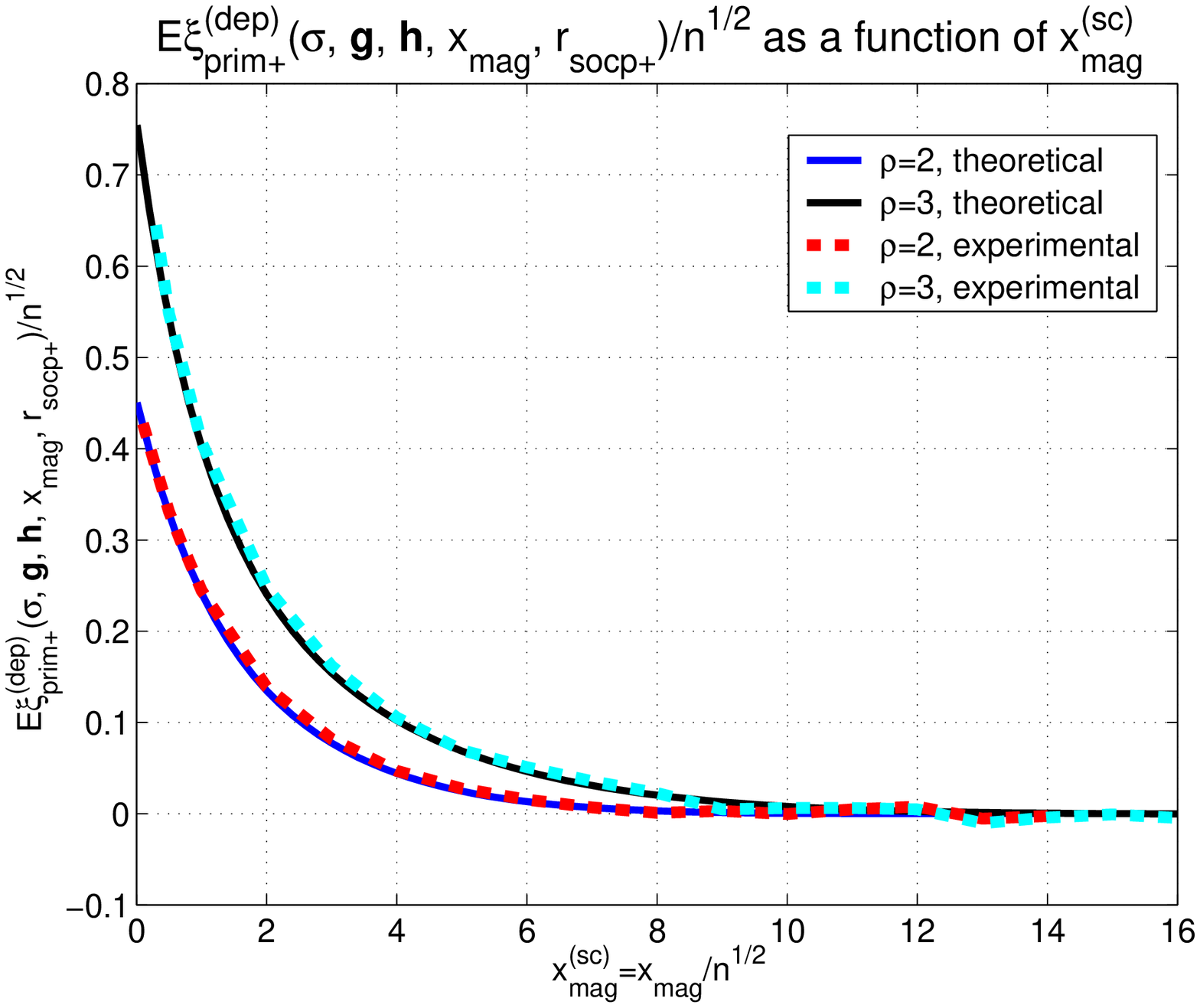,width=8cm,height=7cm}}
\end{minipage}
%\begin{minipage}[b]{.33\linewidth}
%\centering
%\centerline{\epsfig{figure=PrDepSocpVarxm07.eps,width=5.3cm,height=5.3cm}}
%\end{minipage}
\caption{Experimental results for $\frac{Ef_{obj+}}{\sqrt{n}}$ and $\frac{E\xi_{prim+}^{(dep)}(\sigma,\g,\h,x_{mag},r_{socp+})}{\sqrt{n}}$ as a function of $x_{mag}^{(sc)}$; $\rho=2$, $r_{socp+}=\sqrt{0.2 \alpha n}$; $\rho=3$, $r_{socp+}=\sqrt{0.1 \alpha n}$; left --- SOCP from (\ref{eq:socpnon}); right --- (\ref{eq:mainlasso3vernon})}
\label{fig:objvarxsimnon}
\end{figure}
We again observe a solid agreement between the theoretical predictions and the results obtained through numerical experiments.

\textbf{\underline{\emph{3) $\frac{E\|\w_{dep+}\|_2}{\sigma}$ and $\frac{E\|\w_{socp+}\|_2}{\sigma}$ as functions of $x_{mag}^{(sc)}$; varying $r_{socp+}$}}}

In this part we will show the numerical results that correspond to the theoretical ones given in part 3) in the previous subsection. These results relate to possible variations in the $r_{socp+}$ that can be used in (\ref{eq:socpnon}). We then set all other parameters as in Figure \ref{fig:errorvarxvarrhonon} (these parameters are of course again different depending if we are considering $\rho=2$ or $\rho=3$).

\underline{\emph{a) Low $(\alpha,\beta_w^+)$ regime, $\rho=2$}}

We first consider the $\rho=2$ scenario. As in part 1) of this subsection we set $\alpha=0.5$ and choose $\beta_w^+$ as in part 1). However, differently from part 1) we now consider two different possibilities for $r_{socp+}$, namely $r_{socp+}=\sqrt{0.05\alpha n}$ and $r_{socp+}=\sqrt{0.6 \alpha n}$. We then ran (\ref{eq:socpnon}) $300$ times with $n=600$ for various $x_{mag}^{(sc)}$. In parallel we ran (\ref{eq:mainlasso3vernon}) with $n=2000$. The obtained results for $\frac{E\|\w_{socp+}\|_2}{\sigma}$ and $\frac{E\|\w_{dep+}\|_2}{\sigma}$ are shown on the left-hand and right-hand side of Figure \ref{fig:errorvarxvarrhosimnon}, respectively. We also show in Figure \ref{fig:errorvarxvarrhosimnon} the corresponding theoretical predictions obtained in the previous subsection.
\begin{figure}[htb]
\begin{minipage}[b]{.5\linewidth}
\centering
\centerline{\epsfig{figure=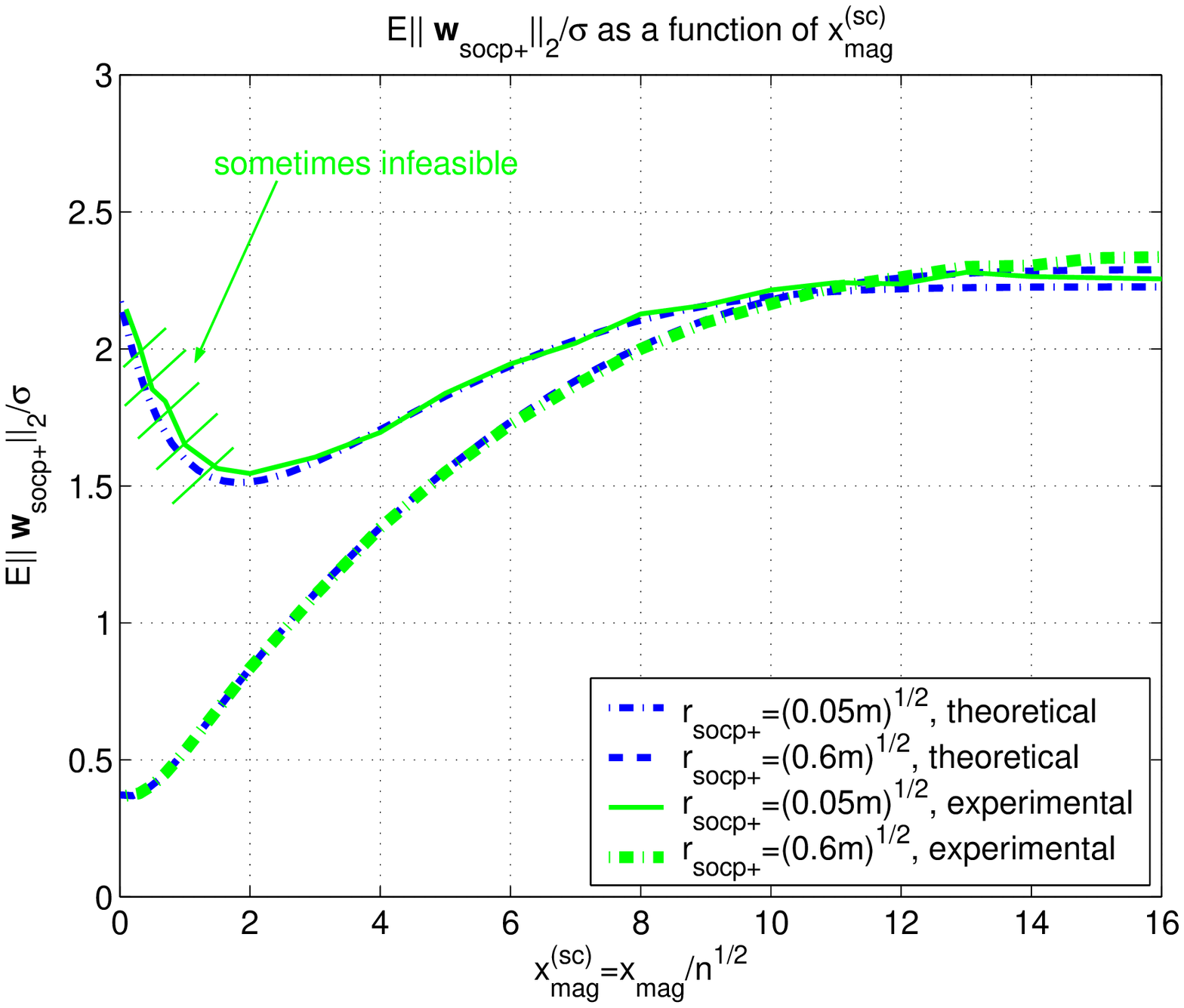,width=8cm,height=7cm}}
\end{minipage}
\begin{minipage}[b]{.5\linewidth}
\centering
\centerline{\epsfig{figure=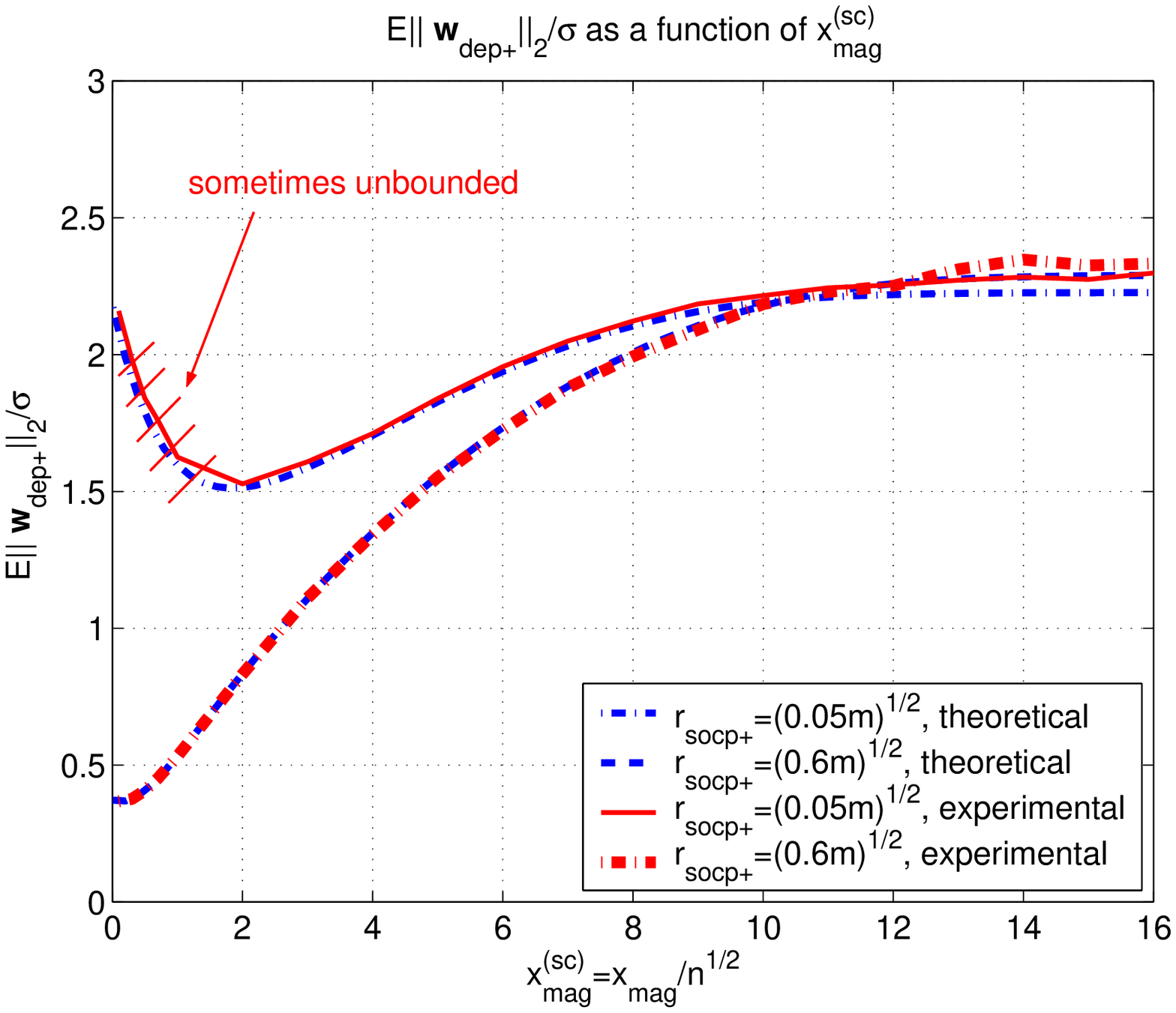,width=8cm,height=7cm}}
\end{minipage}
%\begin{minipage}[b]{.33\linewidth}
%\centering
%\centerline{\epsfig{figure=PrDepSocpVarxm07.eps,width=5.3cm,height=5.3cm}}
%\end{minipage}
\caption{Experimental results for $\frac{E\|\w_{socp+}\|_2}{\sigma}$ and $\frac{E\|\w_{dep+}\|_2}{\sigma}$ as a function of $x_{mag}^{(sc)}$; $\rho=2$; $r_{socp+}\in\{\sqrt{0.05 \alpha n},\sqrt{0.6 \alpha n}\}$; left --- SOCP from (\ref{eq:socpnon}), right --- (\ref{eq:mainlasso3vernon})}
\label{fig:errorvarxvarrhosimnon}
\end{figure}

\underline{\emph{b) High $(\alpha,\beta_w^+)$ regime, $\rho=3$}}

We also consider the $\rho=3$ scenario. As above, we set $\alpha=0.5$ and choose $\beta_w^+$ as in part 1) of this subsection. Everything else remain the same as in $\rho=2$ case except the way we vary $r_{socp+}$. This time we consider (as in part 3) of the previous section when $\rho=3$ case was considered) $r_{socp+}=\sqrt{0.05\alpha n}$ and $r_{socp+}=\sqrt{0.5 \alpha n}$. As usual (\ref{eq:socpnon}) was run $300$ times with $n=600$ for various $x_{mag}^{(sc)}$. In parallel we ran (\ref{eq:mainlasso3vernon}) $300$ times with $n=2000$. The obtained numerical results for $\frac{Ef_{obj+}}{\sqrt{n}}$ and $\frac{E\xi_{prim+}^{(dep)}(\sigma,\g,\h,x_{mag},r_{socp+})}{\sqrt{n}}$ as well as the corresponding theoretical predictions obtained in the previous subsection are shown on the left-hand and right-hand side of Figure \ref{fig:errorvarxvarrhosim1non}, respectively.
\begin{figure}[htb]
\begin{minipage}[b]{.5\linewidth}
\centering
\centerline{\epsfig{figure=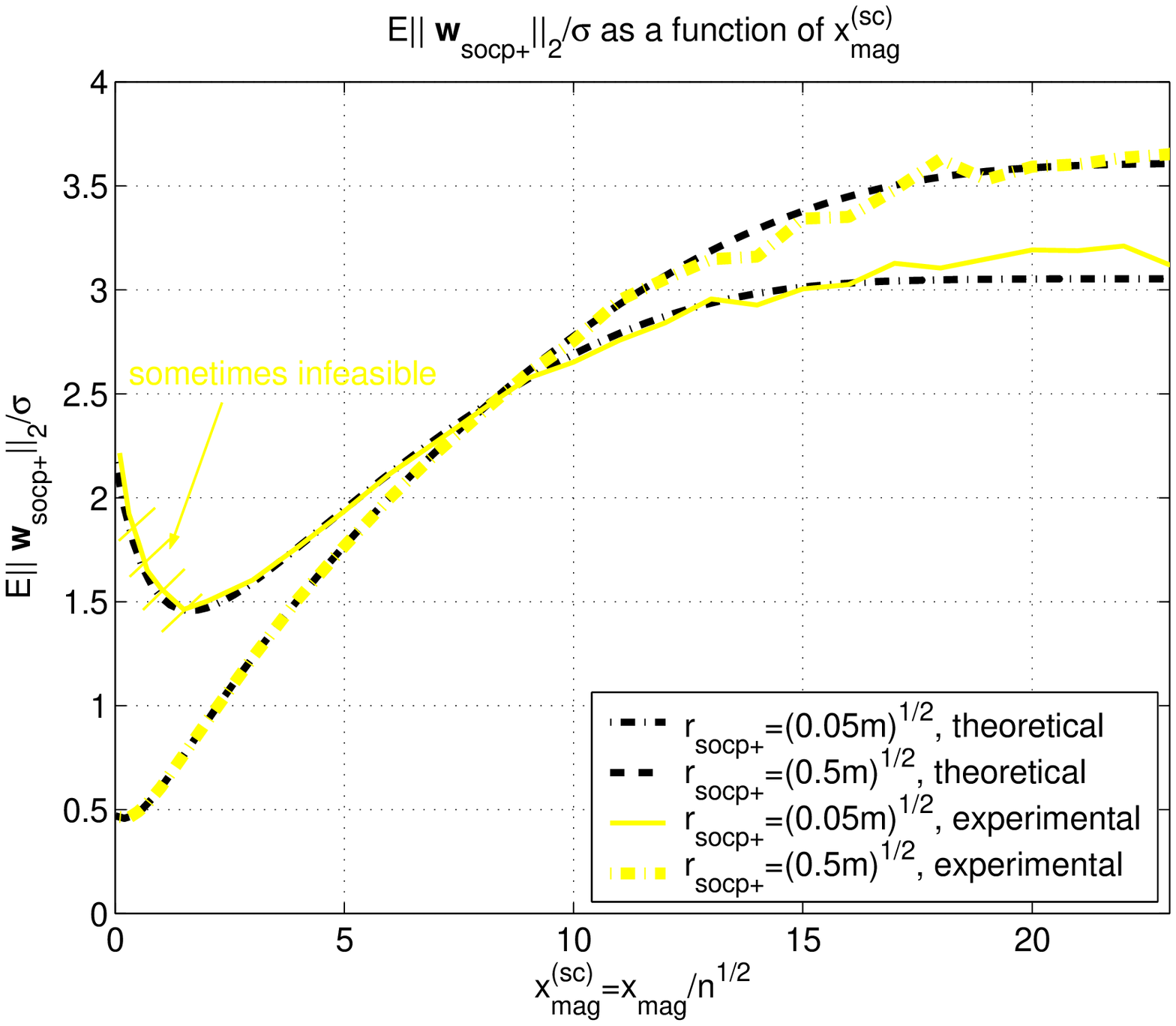,width=8cm,height=7cm}}
\end{minipage}
\begin{minipage}[b]{.5\linewidth}
\centering
\centerline{\epsfig{figure=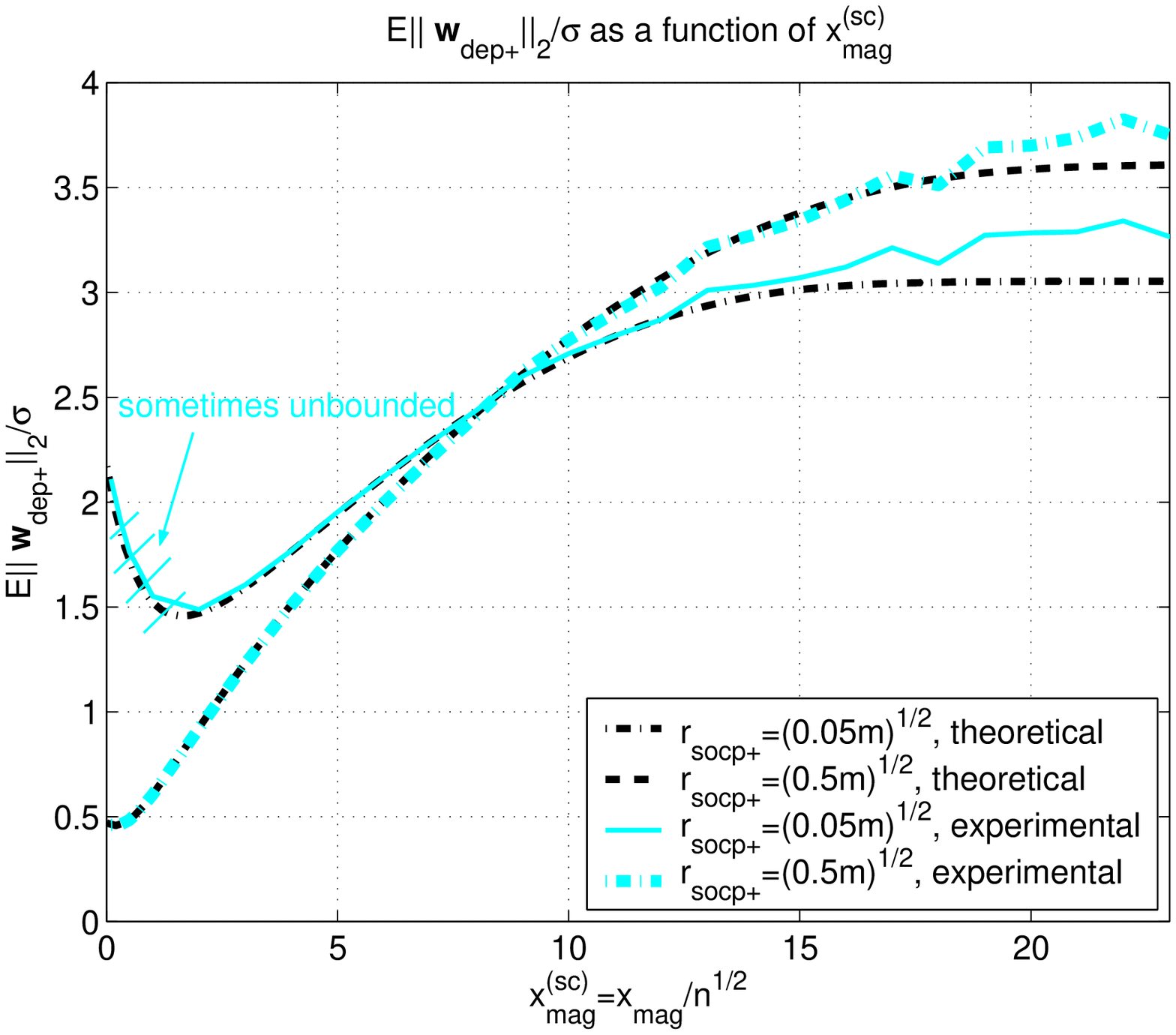,width=8cm,height=7cm}}
\end{minipage}
%\begin{minipage}[b]{.33\linewidth}
%\centering
%\centerline{\epsfig{figure=PrDepSocpVarxm07.eps,width=5.3cm,height=5.3cm}}
%\end{minipage}
\caption{Experimental results for $\frac{E\|\w_{socp+}\|_2}{\sigma}$ and $\frac{E\|\w_{dep+}\|_2}{\sigma}$ as a function of $x_{mag}^{(sc)}$; $\rho=3$; $r_{socp+}\in\{\sqrt{0.05 \alpha n},\sqrt{0.5 \alpha n}\}$; left --- SOCP from (\ref{eq:socpnon}), right --- (\ref{eq:mainlasso3vernon})}
\label{fig:errorvarxvarrhosim1non}
\end{figure}
We again observe a solid agreement between the theoretical predictions and the results obtained through numerical experiments. As in Section \ref{sec:unsignednumexp} small glitches that happen in large $x_{mag}^{(sc)}$ regime could have been fixed by choosing a larger $n$. We again purposely chose a smaller $n$ to show that results are fairly good even when $n$ is not very large. In fact, even a smaller $n$ than the one that we have chosen would work quite fine.

Another observation related to Figures \ref{fig:errorvarxvarrhosimnon} and \ref{fig:errorvarxvarrhosim1non} (and several figures that will follow) is in place. For $r_{socp+}=\sqrt{0.05 \alpha n}$ and roughly speaking $x_{mag}^{(sc)}\leq 2$ it may happen that (\ref{eq:socpnon}) is on occasion infeasible and that (\ref{eq:mainlasso3vernon}) is unbounded. From a theoretical point of view this should not happen for any $x_{mag}^{(sc)}$. However, as discussed in Section \ref{sec:feasibilty}, $\alpha=0.5$ is in a sense a border line choice for universal feasibility. On the other hand, since all these claims are of ``with overwhelming probability" type it may sometimes happen that even when $\alpha=0.5$ (\ref{eq:socpnon}) is infeasible and (\ref{eq:mainlasso3vernon}) is unbounded. In Table \ref{tab:feasr005rho2} we show the number of our experiments for which everything worked fined, i.e. for which (\ref{eq:socpnon}) turned out to be feasible and (\ref{eq:mainlasso3vernon}) turned out to be bounded (we restrict only to what we call interesting region, which for this example we found to be roughly $x_{mag}^{(sc)}\leq 2$). We refer to such a number as the number of successes. The results in $r_{socp+}=\sqrt{0.05\alpha n}$ regime shown in Figures \ref{fig:errorvarxvarrhosimnon} and \ref{fig:errorvarxvarrhosim1non} are averaged over the feasible instances of (\ref{eq:socpnon}) and the bounded instances of (\ref{eq:mainlasso3vernon}).

\begin{table}%[t]
\caption{Experimental results for the noisy recovery through SOCP; $r_{socp+}=\sqrt{0.05m}$, $\sigma=1$; (\ref{eq:socpnon}) was run $300$ times with $n=600$; (\ref{eq:mainlasso3vernon}) was run $300$ times with $n=2000$}\vspace{.1in}
\hspace{-0in}\centering
\begin{tabular}{||c|c|c|c|c|c|c|c|c||}\hline\hline
& &  $x_{mag}^{(sc)}$ & $0.1$ & $0.3$ & $0.5$ & $0.7$ & $1$ & $2$ \\ \hline\hline
$\rho=2$ & $r_{socp+}=\sqrt{0.05 n}$ &  $\# $ of successes (\ref{eq:socpnon})  &  $269$ & $274$ & $286$ & $294$ & $296$ & $300$  \\ \hline
$\rho=2$ & $r_{socp+}=\sqrt{0.05 n}$ &  $\# $ of successes (\ref{eq:mainlasso3vernon}) & $268$ & $278$ & $281$ & $287$ &  $299$ & $299$  \\ \hline
$\rho=3$ & $r_{socp+}=\sqrt{0.05 n}$ &  $\# $ of successes (\ref{eq:socpnon})  &  $277$ & $287$ & $292$ & $296$ & $299$ & $300$  \\ \hline
$\rho=3$ & $r_{socp+}=\sqrt{0.05 n}$ &  $\# $ of successes (\ref{eq:mainlasso3vernon}) & $270$ & $274$ & $288$ & $294$ &  $299$ & $299$  \\ \hline\hline
\end{tabular}
\label{tab:feasr005rho2}
\end{table}

\textbf{\underline{\emph{4) $\frac{Ef_{obj+}}{\sqrt{n}}$ and $\frac{E\xi_{prim+}^{(dep)}(\sigma,\g,\h,x_{mag},r_{socp+})}{\sqrt{n}}$ as functions of $x_{mag}^{(sc)}$; varying $r_{socp+}$}}}

In this part we will show the numerical results that correspond to the theoretical ones given in part 4) in the previous subsection. These results relate to behavior of $\frac{Ef_{obj+}}{\sqrt{n}}$ and $\frac{E\xi_{prim+}^{(dep)}(\sigma,\g,\h,x_{mag},r_{socp+})}{\sqrt{n}}$ when one varies $r_{socp+}$ in (\ref{eq:socpnon}). We again consider $\rho=2$ or $\rho=3$. The observations made above that relate to the occasional feasibilities do apply to the results presented in this part as well.

\underline{\emph{a) Low $(\alpha,\beta_w^+)$ regime, $\rho=2$}}

The setup that we consider is exactly the same as the one considered in part 3a) of this subsection. We set $\alpha=0.5$, chose $\beta_w^+$ as in part 1), and considered two different possibilities for $r_{socp+}$, namely $r_{socp+}=\sqrt{0.05\alpha n}$ and $r_{socp+}=\sqrt{0.6 \alpha n}$.  The obtained results for $\frac{Ef_{obj+}}{\sqrt{n}}$ and $\frac{E\xi_{prim+}^{(dep)}(\sigma,\g,\h,x_{mag},r_{socp+})}{\sqrt{n}}$ are shown on the left-hand and right-hand side of Figure \ref{fig:objvarxvarrhosimnon}, respectively. The corresponding theoretical predictions obtained in the previous subsection are also shown in Figure \ref{fig:objvarxvarrhosimnon}.
\begin{figure}[htb]
\begin{minipage}[b]{.5\linewidth}
\centering
\centerline{\epsfig{figure=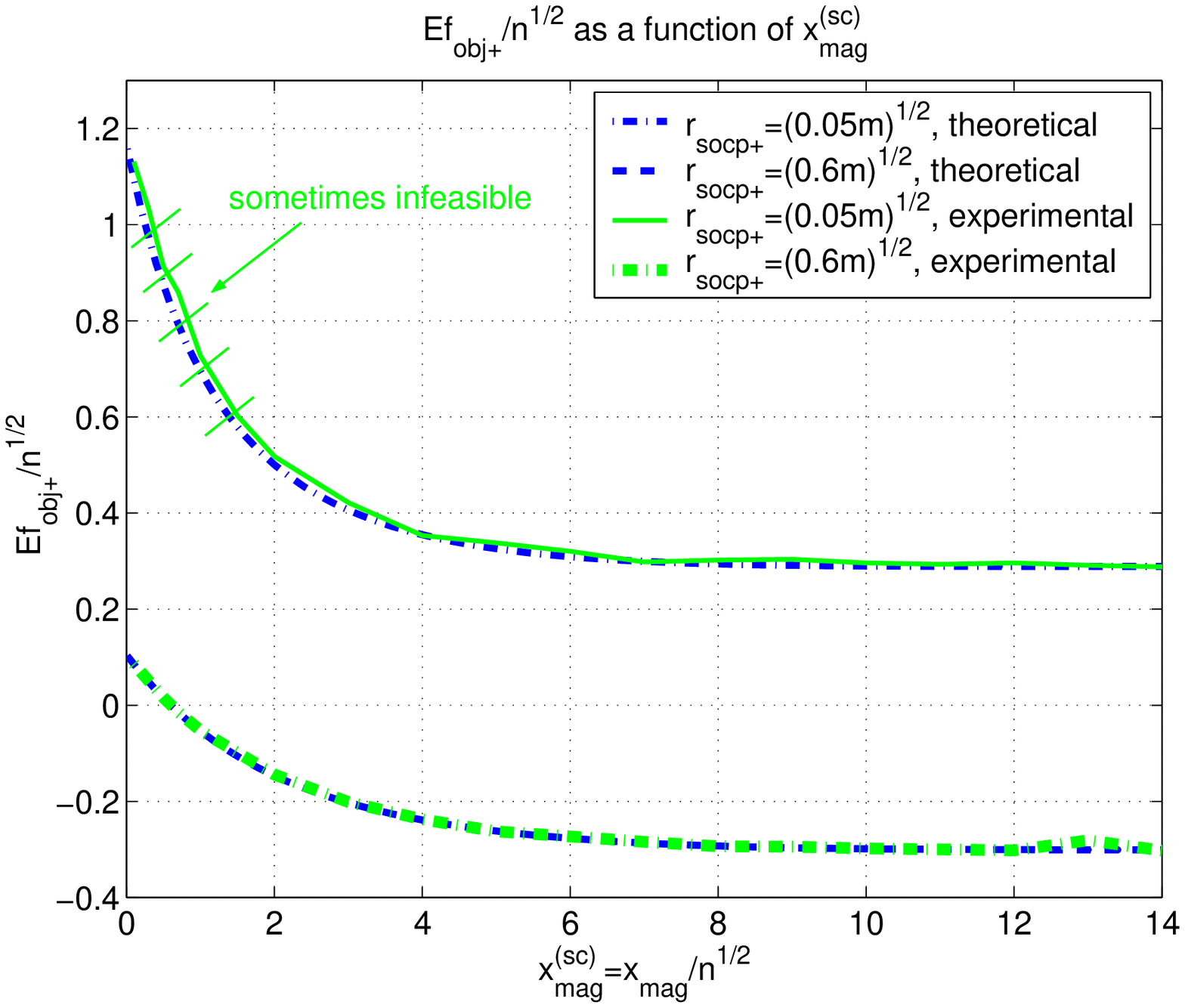,width=8cm,height=7cm}}
\end{minipage}
\begin{minipage}[b]{.5\linewidth}
\centering
\centerline{\epsfig{figure=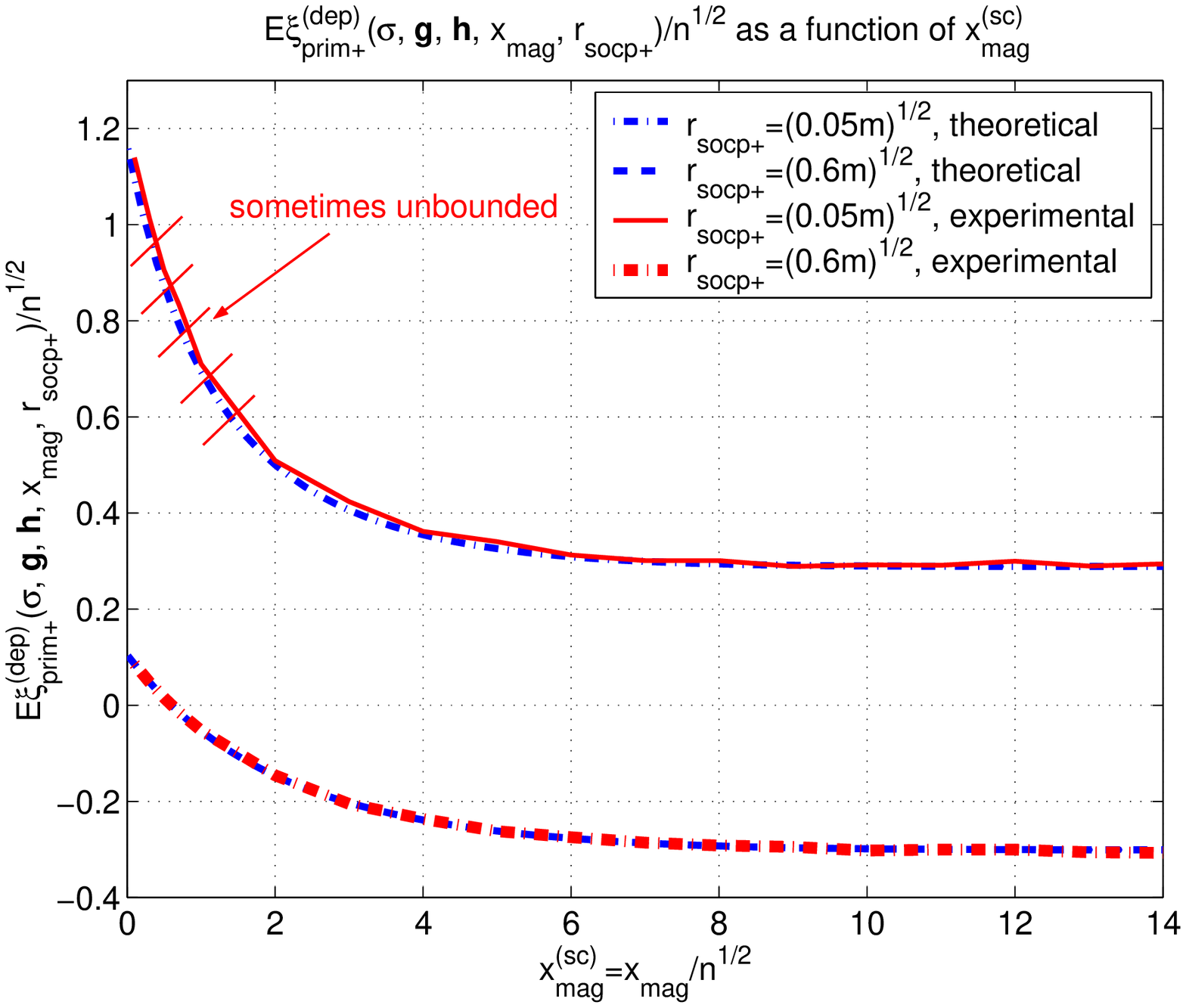,width=8cm,height=7cm}}
\end{minipage}
%\begin{minipage}[b]{.33\linewidth}
%\centering
%\centerline{\epsfig{figure=PrDepSocpVarxm07.eps,width=5.3cm,height=5.3cm}}
%\end{minipage}
\caption{Experimental results for $\frac{Ef_{obj+}}{\sqrt{n}}$ and $\frac{E\xi_{prim+}^{(dep)}(\sigma,\g,\h,x_{mag},r_{socp+})}{\sqrt{n}}$ as a function of $x_{mag}^{(sc)}$; $\rho=2$; $r_{socp+}\in\{\sqrt{0.05 \alpha n},\sqrt{0.6 \alpha n}\}$; left --- SOCP from (\ref{eq:socpnon}), right --- (\ref{eq:mainlasso3vernon})}
\label{fig:objvarxvarrhosimnon}
\end{figure}

\underline{\emph{b) High $(\alpha,\beta_w^+)$ regime, $\rho=3$}}

The setup that we consider is exactly the same as the one considered in part 3b) of this subsection. We set $\alpha=0.5$, chose $\beta_w^+$ as in part 1), and considered two different possibilities for $r_{socp+}$, namely $r_{socp+}=\sqrt{0.05\alpha n}$ and $r_{socp+}=\sqrt{0.5 \alpha n}$.  The obtained results for $\frac{E\|\w_{socp+}\|_2}{\sigma}$ and $\frac{E\|\w_{dep+}\|_2}{\sigma}$ are shown on the left-hand and right-hand side of Figure \ref{fig:objvarxvarrhosim1non}, respectively. The corresponding theoretical predictions obtained in the previous subsection are also shown in Figure \ref{fig:objvarxvarrhosim1non}.
\begin{figure}[htb]
\begin{minipage}[b]{.5\linewidth}
\centering
\centerline{\epsfig{figure=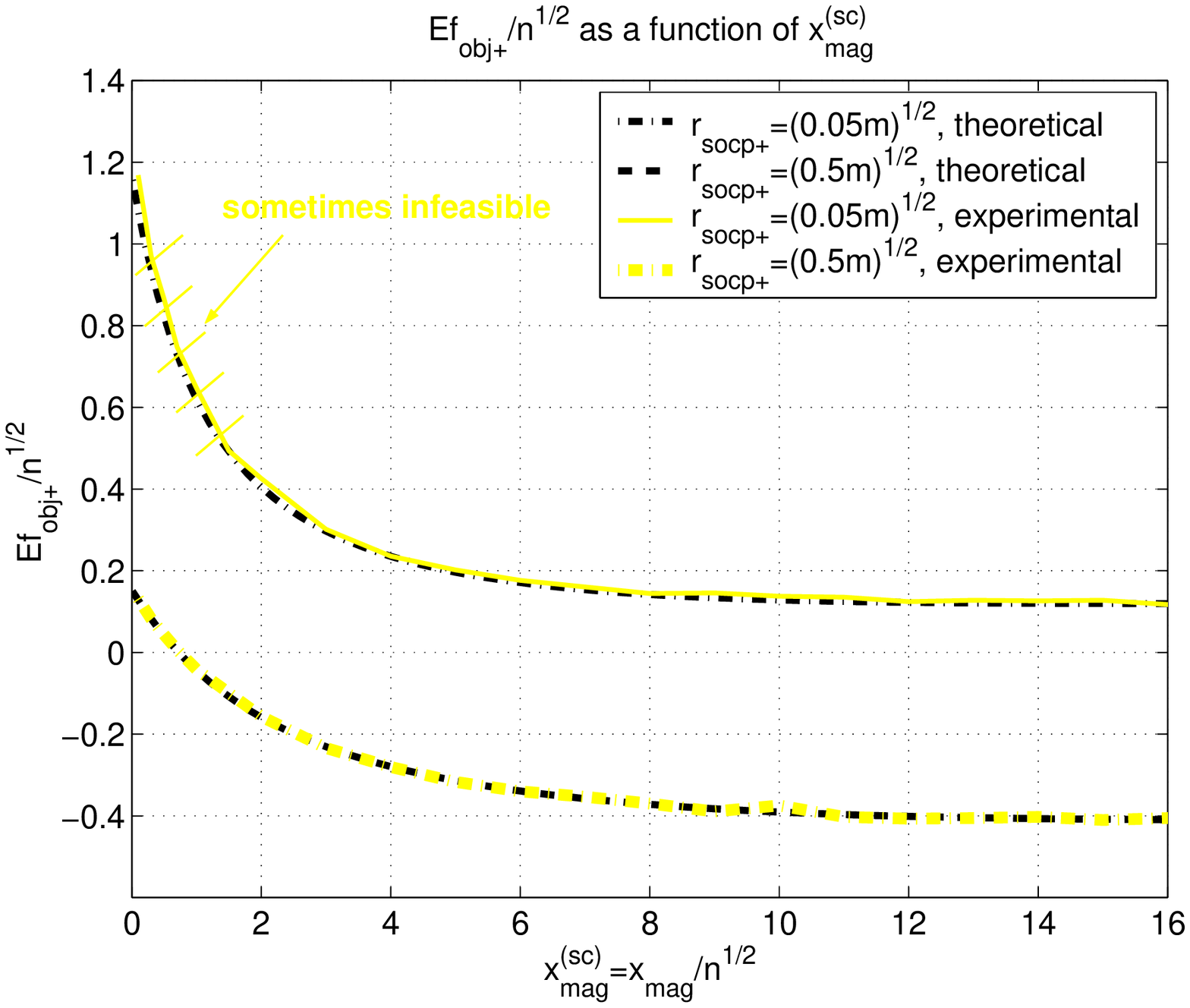,width=8cm,height=7cm}}
\end{minipage}
\begin{minipage}[b]{.5\linewidth}
\centering
\centerline{\epsfig{figure=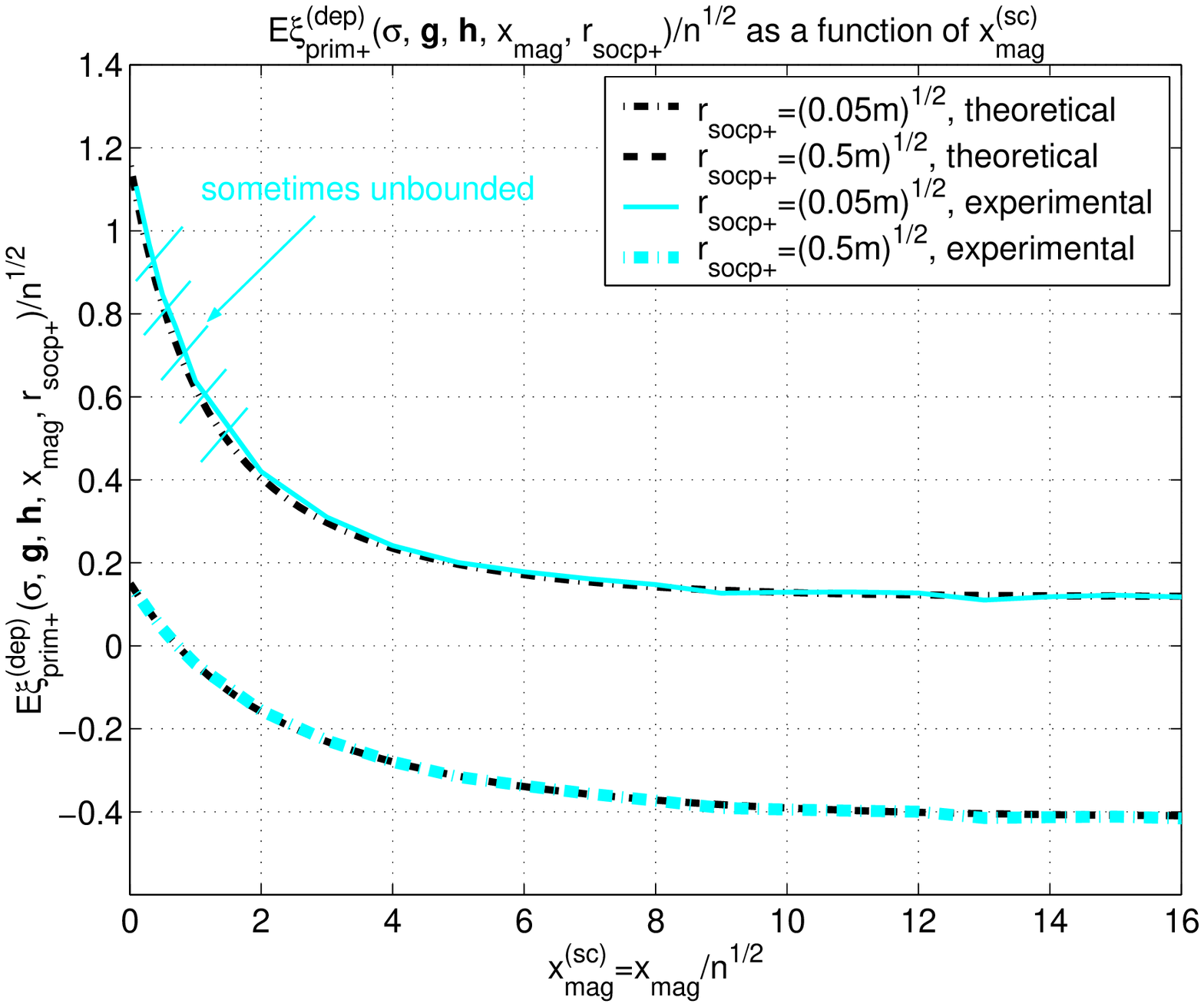,width=8cm,height=7cm}}
\end{minipage}
%\begin{minipage}[b]{.33\linewidth}
%\centering
%\centerline{\epsfig{figure=PrDepSocpVarxm07.eps,width=5.3cm,height=5.3cm}}
%\end{minipage}
\caption{Experimental results for $\frac{Ef_{obj+}}{\sqrt{n}}$ and $\frac{E\xi_{prim+}^{(dep)}(\sigma,\g,\h,x_{mag},r_{socp+})}{\sqrt{n}}$ as a function of $x_{mag}^{(sc)}$; $\rho=3$; $r_{socp+}\in\{\sqrt{0.05 \alpha n},\sqrt{0.5 \alpha n}\}$; left --- SOCP from (\ref{eq:socpnon}), right --- (\ref{eq:mainlasso3vernon})}
\label{fig:objvarxvarrhosim1non}
\end{figure}
We again observe a solid agreement between the theoretical predictions and the results obtained through numerical experiments.

%%%%%%%%%%%%%%%%%%%%%%%%%%%%%%%%%%%%%%%%%%%%%%%%%%%%%%%%%%%%%%%%%%%%%%%%%%%%%%%%%%
\subsubsection{Numerical experiments --- feasibility}
\label{sec:numexpfeas}
%%%%%%%%%%%%%%%%%%%%%%%%%%%%%%%%%%%%%%%%%%%%%%%%%%%%%%%%%%%%%%%%%%%%%%%%%%%%%%%%%%

In this section we will present a couple of numerical results that relate to the feasibility of (\ref{eq:socpnon}) or unboundedness of (\ref{eq:mainlasso3vernon}).

\textbf{\underline{\emph{1) $\frac{E\|\w_{dep+}\|_2}{\sigma}$ and $\frac{E\|\w_{socp+}\|_2}{\sigma}$ as functions of $x_{mag}^{(sc)}$}}}

In this part we will show the numerical results that correspond to the theoretical ones given in part 1) in the previous subsection. We will restrict our attention to $\alpha=0.7$ regime (we recall that earlier in Section \ref{sec:unsignednumexpnon} we showed the corresponding results one can get for $\alpha=0.5$ regime). We then set all other parameters as in the plot on the right hand side of Figure \ref{fig:errorvarxnon} (these parameters are of course different depending if we are considering $\rho=2$ or $\rho=3$; below we will consider both of them).

\underline{\emph{a) Low $(\alpha,\beta_w^+)$ regime, $\rho=2$}}

We first consider the $\rho=2$ scenario. We set $\alpha=0.7$, $r_{socp+}=\sqrt{\frac{\alpha n}{1+\rho^2}}=\sqrt{0.2\alpha n}$, and (as shown in \cite{StojnicGenSocp10}) $\beta_w^+$ such that $(\alpha_w^+,\beta_w^+)$ satisfy (\ref{eq:fundl1non}) and $\alpha_w^+=\frac{\rho^2}{1+\rho^2}\alpha$. We then ran (\ref{eq:socpnon}) $100$ times with $n=800$ for various $x_{mag}^{(sc)}$. In parallel we ran (\ref{eq:mainlasso3vernon}) for the exact same parameters with only two differences; namely we ran (\ref{eq:mainlasso3vernon}) $200$ times with $n=4000$. The obtained results for $\frac{E\|\w_{socp+}\|_2}{\sigma}$ and $\frac{E\|\w_{dep+}\|_2}{\sigma}$ are shown on the left-hand and right-hand side of Figure \ref{fig:errorvarxsimnon}. We also show in Figure \ref{fig:errorvarxsimnon} the corresponding theoretical predictions obtained earlier.
\begin{figure}[htb]
\begin{minipage}[b]{.5\linewidth}
\centering
\centerline{\epsfig{figure=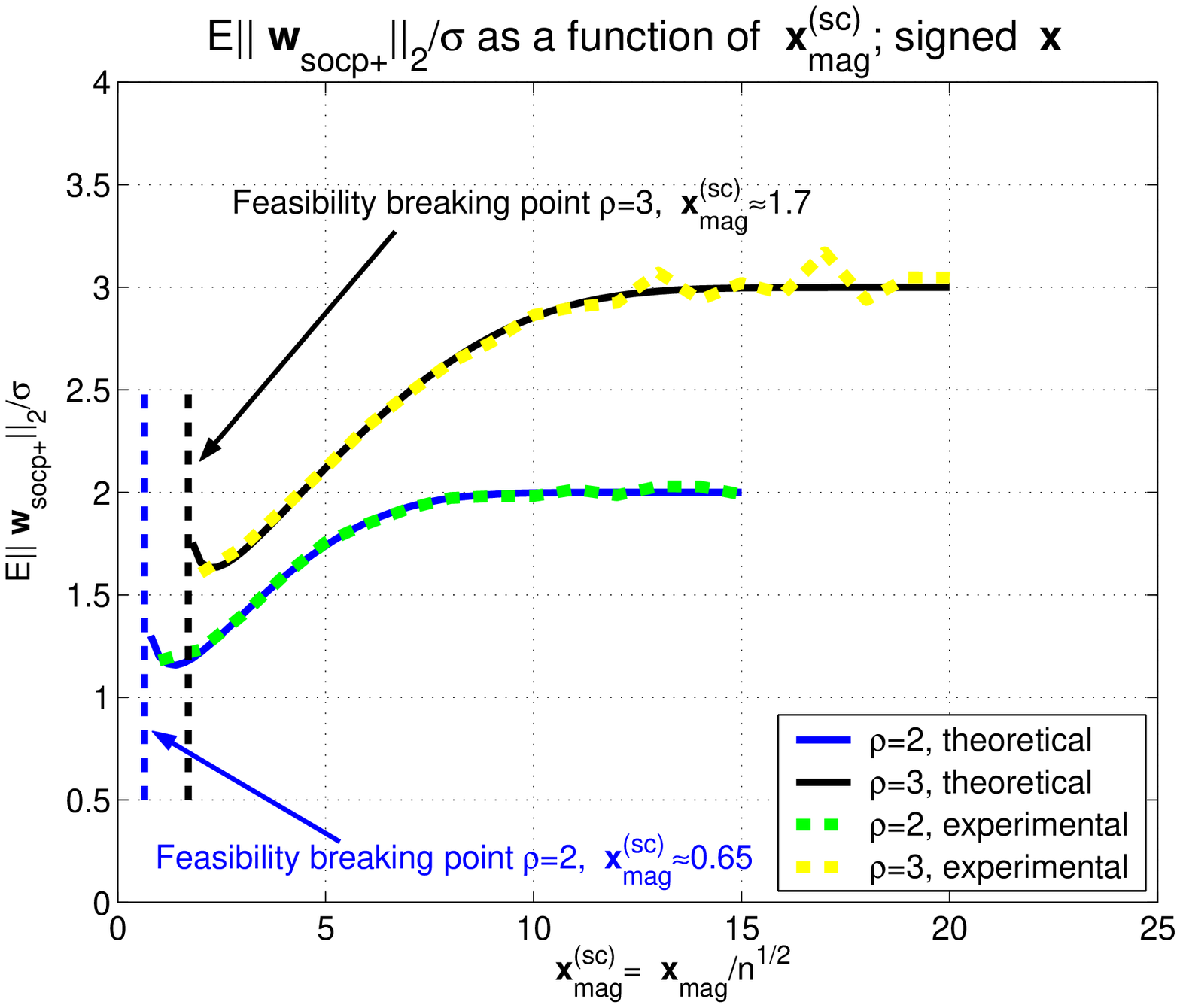,width=8cm,height=7cm}}
\end{minipage}
\begin{minipage}[b]{.5\linewidth}
\centering
\centerline{\epsfig{figure=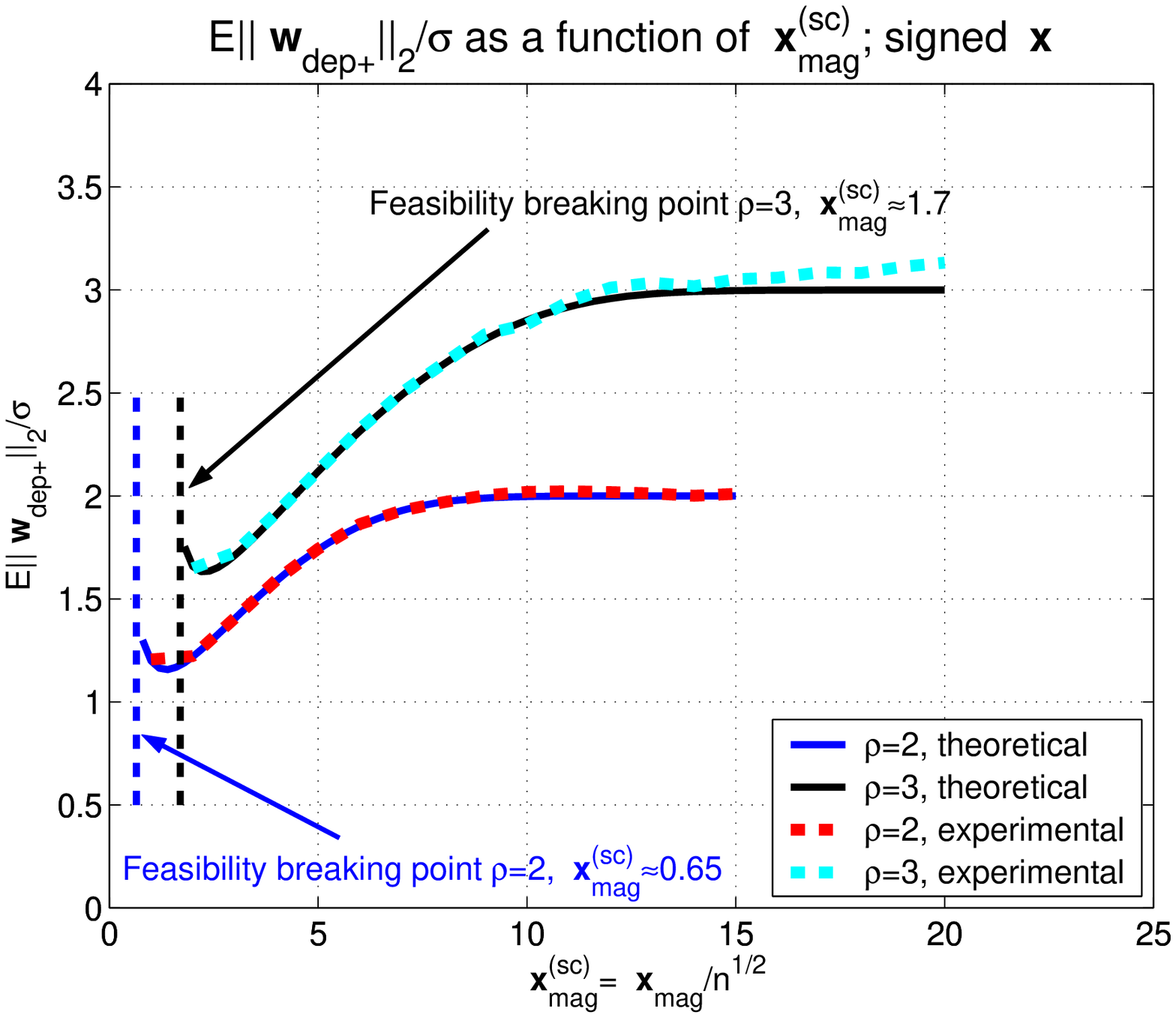,width=8cm,height=7cm}}
\end{minipage}
%\begin{minipage}[b]{.33\linewidth}
%\centering
%\centerline{\epsfig{figure=PrDepSocpVarxm07.eps,width=5.3cm,height=5.3cm}}
%\end{minipage}
\caption{Experimental results for $\frac{E\|\w_{socp+}\|_2}{\sigma}$ and $\frac{E\|\w_{dep+}\|_2}{\sigma}$ as a function of $x_{mag}^{(sc)}$; $\alpha=0.7$; $\rho=2$, $r_{socp+}=\sqrt{0.2 \alpha n}$; $\rho=3$, $r_{socp+}=\sqrt{0.1 \alpha n}$; left --- SOCP from (\ref{eq:socpnon}), right --- (\ref{eq:mainlasso3vernon})}
\label{fig:errorvarxsimnon}
\end{figure}

\underline{\emph{b) High $(\alpha,\beta_w^+)$ regime, $\rho=3$}}

We also conducted a set of experiments in the so-called ``high" $(\alpha,\beta_w^+)$ regime. We used exactly the same parameters as in low $(\alpha,\beta_w^+)$ except that we changed $\rho$ from $2$ to $3$. Consequently we chose $r_{socp+}=\sqrt{0.1 \alpha n}$ and $\beta_w^+$ such that $(\alpha_w^+,\beta_w^+)$ satisfy (\ref{eq:fundl1non}) and $\alpha_w^+=\frac{\rho^2}{1+\rho^2}\alpha$. As above we ran $100$ times (\ref{eq:socpnon}) and $200$ times (\ref{eq:mainlasso3vernon}). Also as above, we ran (\ref{eq:socpnon}) with $n=800$ and (\ref{eq:mainlasso3vernon}) with $n=4000$. The numerical results obtained for $\rho=3$ together with the theoretical predictions are again shown in Figure \ref{fig:errorvarxsimnon}. From Figure \ref{fig:errorvarxsimnon} we observe a solid agreement between the theoretical predictions and the results obtained through numerical experiments.

\noindent \textbf{Remark:} We do mention that in a range of $x_{mag}^{(sc)}$ close to the theoretical ``breaking feasibility point" not all instances of (\ref{eq:socpnon}) were feasible and not all instances of (\ref{eq:mainlasso3vernon}) were bounded (instead an overwhelming majority of them was). The results that are shown in Figure \ref{fig:errorvarxsimnon} are obtained by averaging over the feasible instances of (\ref{eq:socpnon}) and the bounded instances of (\ref{eq:mainlasso3vernon}).

\textbf{\underline{\emph{2) Feasibility/boundedness probability}}}

In this part we show how the number of feasible instances of (\ref{eq:socpnon}) and the number of bounded instances of (\ref{eq:mainlasso3vernon}) change as $x_{mag}^{(sc)}$ changes. We set $\alpha=0.7$ and considered the $\rho=3$ regime or in other words ``high" $(\alpha,\beta_w^+)$ regime. As above we chose $r_{socp+}=\sqrt{0.1 \alpha n}$ and $\beta_w^+$ such that $(\alpha_w^+,\beta_w^+)$ satisfy (\ref{eq:fundl1non}) and $\alpha_w^+=\frac{\rho^2}{1+\rho^2}\alpha$. We then ran each of (\ref{eq:socpnon}) and (\ref{eq:mainlasso3vernon}) $200$ times. We ran (\ref{eq:socpnon}) with $n=1000$ and (\ref{eq:mainlasso3vernon}) with $n=10000$. For a range of $x_{mag}^{(sc)}$ we recorded the fraction of instances where (\ref{eq:socpnon}) was feasible. We refer to such a fraction as $p_{socp+}$. Simultaneously, for the same range of $x_{mag}^{(sc)}$ we recorded the fraction of instances where (\ref{eq:mainlasso3vernon}) was bounded. We refer to such a fraction as $p_{prim+}$. The numerical results along with the theoretical prediction for the feasibility breaking point are shown in Figure \ref{fig:probfeasbound}. From Figure \ref{fig:probfeasbound} we observe a solid agreement between the theoretical predictions and the results obtained through numerical experiments.
\begin{figure}[htb]
\begin{minipage}[b]{1\linewidth}
\centering
\centerline{\epsfig{figure=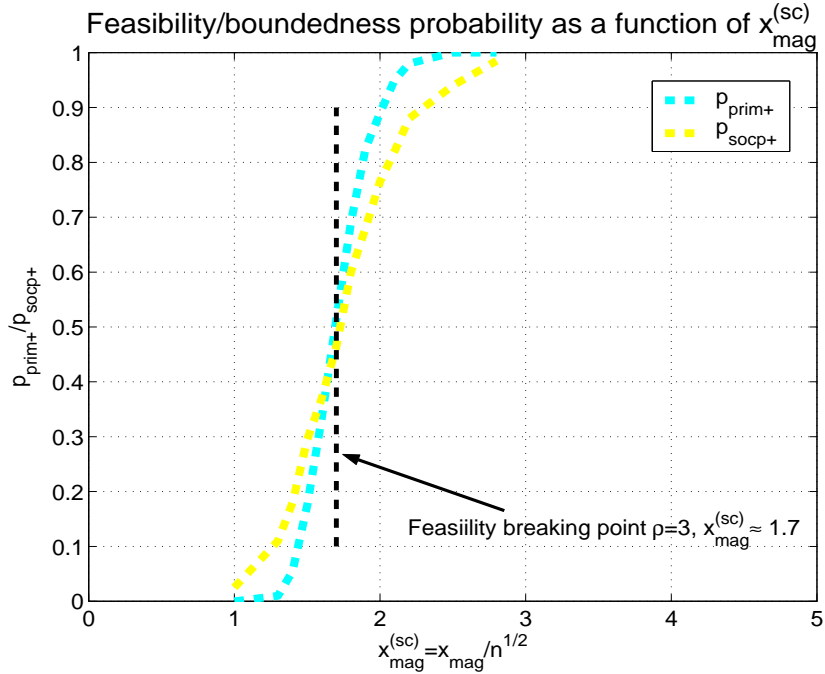,width=11cm,height=9cm}}
\end{minipage}
%\begin{minipage}[b]{.33\linewidth}
%\centering
%\centerline{\epsfig{figure=PrDepSocpVarxm05.eps,width=5.3cm,height=5.3cm}}
%\end{minipage}
%\begin{minipage}[b]{.33\linewidth}
%\centering
%\centerline{\epsfig{figure=PrDepSocpVarxm07.eps,width=5.3cm,height=5.3cm}}
%\end{minipage}
\caption{$p_{socp+}$ and $p_{prim+}$ as functions of $x_{mag}^{(sc)}$; $\rho=3$; $\alpha=0.7$; $r_{socp+}=\sqrt{\frac{\alpha n}{1+\rho^2}}$}
\label{fig:probfeasbound}
\end{figure}

%%%%%%%%%%%%%%%%%%%%%%%%%%%%%%%%%%%%%%%%%%%%%%%%%%%%%%%%%%%%%%%%%%%%%%%%%%%%%%%%
\section{Discussion}
\label{sec:discuss}
%%%%%%%%%%%%%%%%%%%%%%%%%%%%%%%%%%%%%%%%%%%%%%%%%%%%%%%%%%%%%%%%%%%%%%%%%%%%%%%%

In this paper we considered ``noisy" under-determined systems of linear equations with sparse solutions.
We looked from a theoretical point of view at polynomial-time second-order cone programming (SOCP) algorithms.
More precisely, we looked at a general framework developed for characterization of such algorithms in \cite{StojnicGenSocp10}. Within the framework we then considered what we referred to as the SOCP's \emph{problem dependent} performance. We established the precise values of the norm-2 of the error vector for a wide class of unknown sparse vectors. We also provided a characterization of several important parameters that appear in solving of an SOCP.

Many further developments are possible (one can make essentially the same conclusion for the general framework developed in \cite{StojnicGenSocp10} as well as for the analysis of the LASSO algorithms presented in \cite{StojnicGenLasso10}). Any problem dependent scenario that can be solved in the so-called noiseless case through the mechanisms developed in \cite{StojnicCSetam09} and \cite{StojnicUpper10} can now be handled in the noisy case as well. For example, quantifying performance of SOCP or LASSO optimization problems in solving ``noisy" systems with special structure of the solution vector (block-sparse, box- and binary-sparse, partially known locations of nonzero components, low-rank matrix, just to name a few), ``noisy" systems with noisy (or approximately sparse)) solution vectors can then easily be handled to an ultimate precision. In several forthcoming papers we will present some of these applications.

%\newpage1
%\setcounter{page}{1}
\begin{singlespace}
\bibliographystyle{plain}
\bibliography{PrDepSocp}
\end{singlespace}

\end{document}